\documentclass[aps,twocolumn,prx,superscriptaddress,nofootinbib]{revtex4-2}
\usepackage{bm}
\usepackage[T1]{fontenc}
\usepackage{amsmath,amssymb,mathtools,dsfont}
\usepackage{braket}
\usepackage{graphicx}
\usepackage{xcolor}
\usepackage[hidelinks]{hyperref}
\usepackage{quantikz}
\usepackage{float}
\usepackage[capitalize]{cleveref}

\usepackage{amsthm}
\newtheorem{proposition}{Proposition}
\newtheorem{lemma}{Lemma}
\newtheorem{theorem}{Theorem}
\theoremstyle{definition}

\begin{document}

\title{Adiabatic Error Cancellation in Berry Phase Estimation}

\author{Chusei Kiumi}
\email{c.kiumi.qiqb@osaka-u.ac.jp}
\affiliation{Center for Quantum Information and Quantum Biology, The University of Osaka,
  1-2 Machikaneyama, Toyonaka, Osaka,  560-0043, Japan}
\affiliation{Mathematical Institute, University of Oxford, Woodstock Road, Oxford OX2 6GG, United Kingdom}
\date{\today}

\begin{abstract}
  The Berry phase encodes the geometry of a closed Hamiltonian path, complementing the dynamical phase determined by the accumulated energy. We uncover an adiabatic error-cancellation principle that arises from this geometric character and has no counterpart for energy-based quantities. To extract the Berry phase while canceling the dynamical phase, we combine finite-runtime evolutions generated by $\pm H$ along the loop. This construction exactly cancels the leading $O(T^{-1})$ and all higher odd-order nonoscillatory phase errors. Richardson extrapolation further reduces the leading residual error to an oscillatory contribution whose amplitude is set by the endpoints of the loop alone. Beyond this deterministic cancellation, runtime randomization suppresses the remaining oscillatory contribution, reducing the bias after $r$ Richardson levels to $O(T^{-(2r+2)})$ for fixed $r$. The same average also suppresses the bias from an imperfectly prepared guiding state, so that the admissible preparation infidelity no longer depends on the target accuracy \(\varepsilon_B\). Combining these principles, we obtain a randomized Hadamard-test algorithm for Berry phase estimation that reduces the coherent evolution time to \(O(\varepsilon_B^{-1/(2r+2)})\) while retaining the standard \(O(\varepsilon_B^{-2})\) sample complexity. This turns the geometric origin of the cancellation into an algorithmic advantage and makes Berry phase estimation a promising task for early fault-tolerant quantum computers.
  \end{abstract}
\maketitle

\section{Introduction}

The Berry phase is one of the most fundamental concepts in modern quantum physics~\cite{Simon1983,Berry1984}. While quantum systems were traditionally understood through their energy spectra, the Berry phase revealed physical information in eigenstate geometry inaccessible from energy alone. This insight has profoundly reshaped our understanding of quantum matter, providing the geometric foundation for topological phases characterized by invariants such as the Chern number~\cite{TKNN1982,HasanKane2010,QiZhang2011}, the modern theory of electric polarization in crystalline solids~\cite{KingSmith1993,Resta2000,Vanderbilt2018}, and the anomalous Hall effect~\cite{Nagaosa2010,XiaoChangNiu2010}. Building on these foundations, quantum geometry has become a major frontier in contemporary quantum science, where the Berry curvature and quantum metric shape nonlinear transport, flat-band superconductivity, and interaction-driven phases such as fractional Chern insulators~\cite{Peotta2015,Torma2022,Ahn2022,Roy2014,Liu2025}. Beyond condensed matter, geometric phases serve as resources for quantum sensing~\cite{MartinMartinez2013,Ledbetter2012,Arai2018,Johnsson2020} and, through the Pancharatnam--Berry phase, for wavefront and polarization control in photonics~\cite{Cohen2019,Cisowski2022}.

Geometric quantities are often expected to offer enhanced robustness against noise and control errors compared with energy-based quantities.
This expectation has motivated quantum-information applications of geometric phases, most notably geometric and holonomic quantum computation
~\cite{Zanardi1999,Jones2000,Sjoqvist2012,Leibfried2003}.
Whether geometricity itself guarantees robustness, however, remains debated
~\cite{Colmenar2022,Liu2021}.
This raises a fundamental question: can geometricity lead to intrinsic error-suppression mechanisms absent from energy-based quantities?

This question is particularly relevant to the pursuit of practical quantum advantage.
Quantum computers offer exponential speedups for certain tasks,
with many-body simulation among their most compelling applications
~\cite{Feynman1982,Lloyd1996,Georgescu2014}.
Following the experimental demonstration of quantum advantage
~\cite{Arute2019},
a central challenge is to extend this advantage beyond artificially constructed benchmark problems
to physically meaningful tasks.
Such tasks should be robust against errors arising from quantum processes,
thereby reducing the coherent evolution time and circuit depth required before full fault tolerance~\cite{Preskill2018,Kim2023}.
The growing importance of quantum geometry makes Berry phase estimation an important practical computational task, and demonstrating robustness arising from its geometric nature would establish it as a promising candidate for practical quantum advantage under limited quantum resources.

Classical methods reconstruct Berry phases from overlaps of eigenstates along a parameter path
~\cite{FukuiHatsugaiSuzuki2005,Yu2011,Soluyanov2011},
but generally become intractable for strongly correlated many-body systems.
Recently, Berry phase estimation with a known guiding state has been shown to be BQP-complete~\cite{HayakawaSakamotoKiumi2025}, establishing a provable complexity-theoretic quantum advantage.
In contrast to the classical spectral approach,
several quantum algorithms have been proposed
~\cite{ZhangWei2010,MurtaCatarinaFernandezRossier2020,
TamiyaKohNakagawa2021,HayakawaSakamotoKiumi2025,MootzYao2026},
and they typically access the Berry phase through adiabatic evolution.
As in adiabatic geometric and holonomic gates,
suppressing nonadiabatic errors requires long evolution times,
making finite-runtime adiabatic error a fundamental bottleneck.
Reducing this error and the associated runtime is therefore crucial for practical quantum computation
and faster implementations of holonomic gates.
Yet the structure of the adiabatic phase error remains unresolved,
and no intrinsic error-cancellation mechanism has been identified.

\begin{table*}[t]
  \centering
  \renewcommand{\arraystretch}{1.4}
  \begin{tabular}{lcccc}
    \hline\hline
    \textbf{Method} & \textbf{Leading Phase Error} & \textbf{Runtime Scaling} & \textbf{Admissible \(p_{\mathrm{prep}}\)} & \textbf{Reference} \\
    \hline
    Runtime scaling method~\cite{HayakawaSakamotoKiumi2025}
    &
    $O(T^{-1})$, non-osc.
    &
    $\displaystyle O\!\left(\frac{\dot{H}_{\max}^2}{\Delta_{\min}^3\,\varepsilon_B}\right)$
    &
    $\lesssim\varepsilon_B$
    &
    \cref{lem:phase_error}
    \\[12pt]
    \shortstack[l]{+ Forward--reverse}
    &
    $O(T^{-2})$, osc.\ + non-osc.
    &
    \cref{app:bound}
    &
    $\lesssim\varepsilon_B$
    &
    \cref{thm:cancellation}
    \\[12pt]
    \shortstack[l]{+ Richardson extrapolation}
    &
    $O(T^{-2})$, osc.
    &
    $\displaystyle O\!\left(\frac{\|\dot{H}(0)\|}{\Delta(0)^2\,\varepsilon_B^{1/2}}\right)$
    &
    $\lesssim\varepsilon_B$
    &
    \cref{thm:richardson_m}
    \\[12pt]
    \shortstack[l]{+ Runtime randomization\\ (\(M=2r-1\))}
    &
    $O(T^{-(2r+1)})$, bias
    &
$\displaystyle O\!\left(
  \left[\frac{\|\dot H(0)\|^{2}}
  {\Delta(0)^{4}\,\Delta_{\min}^{2r-1}\,\varepsilon_B}\right]^{1/(2r+1)}
  \right)$
&
    $\lesssim\varepsilon_B^{2/(2r+1)}$
&
    \shortstack{\cref{thm:randomized_richardson}\\ \cref{thm:prep_supp}}
    \\[12pt]
    \shortstack[l]{+ Runtime randomization\\ (\(M=2r+2\))}
    &
    $O(T^{-(2r+2)})$, non-osc.
    &
    $O\!\left(\varepsilon_B^{-1/(2r+2)}\right)$
    &
    \(\varepsilon_B\)-independent
    &
    \shortstack{\cref{thm:randomized_richardson}\\ \cref{thm:prep_supp}}
    \\[12pt]
    \hline\hline
  \end{tabular}
  \caption{Comparison of Berry phase estimation algorithms for target additive error \(\varepsilon_B\).
  \(r\ge1\) counts Richardson levels, and \(M\) denotes the decay order of the weighted characteristic functions.
  Runtime-randomized rows use the Hadamard-test implementation of \cref{sec:hadamard}, with sample
  complexity \(O(\varepsilon_B^{-2})\); runtime counts only coherent evolution time. We assume
  \(\dot H(0)=\dot H(1)\) and omit the branch-resolution step of
  Ref.~\cite{HayakawaSakamotoKiumi2025}, whose cost is independent of
  \(\varepsilon_B\). Each row adds one ingredient: the forward--reverse protocol cancels
  the \(O(T^{-1})\) term (\cref{thm:cancellation}), Richardson extrapolation removes
  the non-oscillatory \(O(T^{-2})\) residual (\cref{thm:richardson_m}), and runtime
  randomization suppresses the remaining oscillatory term
  (\cref{thm:randomized_richardson}), either with \(M=2r-1\), giving the explicit
  gap-dependent bound, or with \(M=2r+2\), suppressing it below the
  non-oscillatory remainder \(O(T^{-(2r+2)})\), whose constants are not explicit. The fourth column lists the admissible guiding-state infidelity (\cref{sec:imperfect}): without randomization, \(p_{\rm prep}\lesssim\varepsilon_B\) is required, and at \(M=2r+2\) the admissible infidelity becomes independent of \(\varepsilon_B\).}
  \label{tab:comparison}
\end{table*}

In this work, we uncover an adiabatic error-cancellation principle rooted in the geometric nature of the Berry phase.
Using adiabatic perturbation theory~\cite{RigolinOrtizPonce2008}, we derive a systematic large-$T$ expansion of the finite-runtime phase error.
Such an order-by-order expansion, rather than the norm-based adiabatic bounds~\cite{AvronSeilerYaffe1987,Nenciu1993,JansenRuskaiSeiler2007}, is what makes the cancellation structure visible; accordingly, we work within this asymptotic framework and do not pursue nonasymptotic finite-$T$ bounds.
We also analyze leakage and note that the geometric nature of the Berry phase allows boundary cancellation~\cite{LidarRezakhaniHamma2009,CamposVenutiLidar2018} to suppress leakage and endpoint-induced phase errors without altering the target phase.  

Combining evolutions generated by $\pm H$ along the same loop cancels the dynamical phase, the leading $O(T^{-1})$ phase error, and all higher odd-order nonoscillatory phase errors exactly.
Richardson extrapolation then removes successive even-order nonoscillatory contributions, leaving a leading oscillatory contribution whose amplitude is controlled by endpoint data.
By further randomizing the runtime with sufficiently smooth distributions, we suppress these oscillatory terms so that, after $r$ levels of extrapolation, the bias scales as $O(T^{-(2r+2)})$ for any fixed $r$. This randomization also suppresses the zeroth-order oscillatory bias from an imperfectly prepared guiding state, so that higher estimation accuracy no longer requires more accurate initial-state preparation.

Building on this cancellation cascade, we construct a randomized Hadamard-test algorithm for Berry phase estimation that tolerates a fixed, sufficiently small initial-state preparation infidelity independent of the target accuracy.
Although the forward--reverse construction initially determines the phase only modulo $\pi$, an additional branch-lifting step recovers the full range $[0,2\pi)$ without changing the asymptotic scaling.
The resulting algorithm reduces the required coherent evolution time without increasing the standard asymptotic sample complexity.

Our results therefore establish a form of geometry-induced robustness against finite-speed adiabatic errors: because the Berry phase depends only on the loop, not on how it is traversed, the error can be cancelled order by order.
With the simple Hadamard-test circuit, which requires no quantum phase estimation, the reduced coherent evolution time, unchanged sample complexity, and tolerance to imperfect initial states position Berry phase estimation as a promising task for practical quantum computation before full fault tolerance.
More broadly, the cancellation relies only on the geometric character of the target, providing evidence that quantum-geometric quantities can be estimated with intrinsic robustness to errors in the underlying quantum process.

The paper is organized as follows.
\Cref{sec:adiabatic_error} derives the large-$T$ expansion of the phase error
and the leakage estimate (\cref{lem:phase_error,prop:leakage_error}).
\Cref{sec:forward_reverse} establishes the forward--reverse cancellation
(\cref{thm:cancellation}), \cref{sec:richardson} the further cancellation by
Richardson extrapolation (\cref{thm:richardson_m}), and \cref{sec:randomization}
the suppression of the oscillatory term by runtime randomization
(\cref{thm:randomized_richardson}).
\Cref{sec:imperfect} shows the suppression of the imperfect-guiding-state bias.
\Cref{sec:numerics} presents numerical tests, and \cref{sec:algorithm} the
Hadamard-test-based estimation algorithm.
\Cref{sec:discussion} discusses extensions and implications; the appendices
collect the technical derivations.
\Cref{tab:comparison} summarizes our main results.

\section{Adiabatic Phase Error}\label{sec:adiabatic_error}
\subsection{Adiabatic Evolution Along the Loop}

Let \(H(s)\) be a smooth family of Hamiltonians on a finite-dimensional Hilbert space for \(s\in[0,1]\) satisfying \(H(0)=H(1)\). For each \(s\in[0,1]\), let \(E_n(s)\) and \(\ket{n(s)}\) denote the \(n\)-th instantaneous eigenvalue and corresponding eigenstate of \(H(s)\), with the ground state denoted by \(\ket{\psi(s)}=\ket{0(s)}\). Assume moreover that the ground state is nondegenerate for every \(s\in[0,1]\), and choose the ground-state eigenvector in a smooth gauge satisfying \(\ket{\psi(1)}=\ket{\psi(0)}\). The minimum spectral gap along the loop is then
\[
\Delta_{\min}:=\min_{s\in[0,1]}\bigl(E_1(s)-E_0(s)\bigr).
\]
For runtime \(T\), the Schr\"odinger equation
\[
i\frac{d}{dt}\ket{\Psi(t)}
=
H\!\left(\frac{t}{T}\right)\ket{\Psi(t)},
\qquad t\in[0,T],
\]
in the rescaled variable \(s:=t/T\in[0,1]\) becomes
\[
  i\frac{d}{ds}\ket{\Psi(s)} = T H(s)\ket{\Psi(s)}.
\]
The time-evolution operator up to rescaled time \(s\) is
\[
U_T(s)=\mathcal{T}\exp\!\left(-iT\int_{0}^{s} H(\sigma)\,d\sigma\right),
\]
where \(\mathcal{T}\) denotes time ordering, so that
\[
  \ket{\Psi(s)} = U_T(s)\ket{\Psi(0)}.
\]
We define the accumulated dynamical phase by
\[
\theta_D(s) := T\int_{0}^{s} E_0(\sigma)\,d\sigma
\]
and the accumulated Berry phase by
\[
\theta_B(s) := \int_{0}^{s} i\braket{\psi(\sigma)|\dot{\psi}(\sigma)}\,d\sigma.
\]
At the end of the loop, $s=1$, \[\theta_D:=\theta_D(1),\quad \theta_B:=\theta_B(1)\] are the dynamical and Berry phases, respectively.

The adiabatic theorem states that if the initial state is the ground state \(\ket{\psi(0)}\), then the evolved state \(U_T(s)\ket{\psi(0)}\) remains close to the instantaneous ground state \(\ket{\psi(s)}\) for all \(s\in[0,1]\), and acquires the dynamical and Berry phases up to an error of order \(O(T^{-1})\):
\[
U_T(s)\ket{\psi(0)}
=
e^{-i\theta_D(s)}e^{i\theta_B(s)}\ket{\psi(s)}+O(T^{-1}).
\]
In particular, at the end of the loop,
\[
U_T(1)\ket{\psi(0)}
=
e^{-i\theta_D}e^{i\theta_B}\ket{\psi(0)}+O(T^{-1}).
\]

At finite \(T\), we quantify the deviation from ideal adiabatic transport by the scalar amplitude
\[
  z(s):=e^{i\theta_D(s)}e^{-i\theta_B(s)}
  \braket{\psi(s)|U_T(s)|\psi(0)},
\]
obtained by removing the accumulated dynamical and Berry phases from the instantaneous ground-state overlap. Its modulus gives the ground-state survival amplitude, while its argument gives the residual phase error. We therefore define
\begin{equation}\label{eq:error}
  p_{\mathrm{leak}}(s):=1-|z(s)|^2,
  \qquad
  \varphi(s):=\arg z(s),
\end{equation}
where \(p_{\mathrm{leak}}(s)\) is the probability of leaving the instantaneous ground-state subspace. We assume \(z(s)\neq 0\) for all \(s\in[0,1]\), so that \(\varphi(s)\) is well defined, and write \(\varphi:=\varphi(1)\) for the final phase error.

Writing \(z(s)=1+\delta(s)\), the adiabatic theorem implies that the true evolution converges to ideal transport as \(T\to\infty\), so \(\delta(s)=O(T^{-1})\) and, for sufficiently large \(T\), \(|\delta(s)|<1\). The phase error is therefore given by the imaginary part of the principal logarithm, which admits the convergent expansion
\begin{equation*}
  \varphi(s)
  =\Im\log\bigl(1+\delta(s)\bigr)
  =\sum_{k=1}^{\infty}\frac{(-1)^{k-1}}{k}\,
  \Im\bigl(\delta(s)^k\bigr).
\end{equation*}
Since \(\delta(s)=O(T^{-1})\), the \(k\)-th term in this expansion is \(O(T^{-k})\). The asymptotic form of the phase error is derived in the next subsection.

\subsection{Phase error expansion}
We derive the phase-error expansion for a single adiabatic evolution, making its order-by-order structure explicit. The derivation uses the adiabatic perturbation theory~\cite{RigolinOrtizPonce2008} summarized in \cref{app:APT}.

For notational simplicity, we further assume throughout this paper that the entire instantaneous spectrum is non-degenerate for all \(s\in[0,1]\). We choose the eigenvectors in a smooth gauge satisfying \(\ket{n(1)}=\ket{n(0)}\) for every \(n\). Since the initial state is the non-degenerate ground state, the argument naturally extends to the more general setting in which only the ground state remains gapped from the excited-state manifold, even if degeneracies or level crossings occur among the excited states, as noted in Ref.~\cite{RigolinOrtizPonce2008}. We nevertheless restrict ourselves to the fully non-degenerate case because treating the more general setting would substantially complicate the notation without changing the main ideas.

We denote the maximum operator norm of the Hamiltonian by
$H_{\max} := \max_s \|H(s)\|$, 
and the maximum norms of its first and second 
derivatives by
\[
  \dot{H}_{\max} := \max_s \|\dot{H}(s)\|, 
  \qquad
  \ddot{H}_{\max} := \max_s \|\ddot{H}(s)\|.
\]
Defining the adiabatic coupling \(M_{nm}(s):=\braket{n(s)|\dot m(s)}\) and the spectral gap \(\Delta_n(s):=E_n(s)-E_0(s)\), differentiation of the instantaneous eigenvalue equation gives
\[
  M_n(s):=M_{n0}(s)
  =-\frac{\braket{n(s)|\dot H(s)|\psi(s)}}{\Delta_n(s)},
  \qquad n\neq 0.
\]
This implies the uniform bound \( |M_n(s)|\leq \dot H_{\max}/\Delta_{\min} \). We denote the Berry phase of the \(n\)-th eigenstate by \(\theta_B^{(n)}(s):=\int_0^s i\braket{n(\sigma)|\dot n(\sigma)}\,d\sigma\), and the difference from the ground-state Berry phase by
\(\beta_n(s):=\theta_B^{(n)}(s)-\theta_B(s)\).

\begin{lemma}[Phase error expansion]\label{lem:phase_error}
  Under the setting of \cref{sec:adiabatic_error}, for any fixed integer
  \(K\geq2\), the final phase error \(\varphi\) admits the asymptotic expansion
  \begin{equation}\label{eq:phi_expansion}
    \varphi
    =
    \frac{\varphi_1}{T}
    +
    \sum_{k=2}^{K}
    \frac{\varphi_k+\varphi_k^{(T)}}{T^k}
    +
    O(T^{-K-1}),
  \end{equation}
  where \(\varphi_k\) are \(T\)-independent non-oscillatory loop
  contributions, while, for each \(k\geq2\),
  \begin{equation*}
    \varphi_k^{(T)}
    =
    \sum_{\nu\in I_k}
    \left[
      a_{k,\nu}\cos(\Omega_{k,\nu}T)
      +
      b_{k,\nu}\sin(\Omega_{k,\nu}T)
    \right]
  \end{equation*}
  is a finite linear combination of oscillatory boundary modes. Here
  \(I_k\) is finite, \(a_{k,\nu}\) and \(b_{k,\nu}\) are independent of
  \(T\), and each \(\Omega_{k,\nu}\) is a positive sum of the integrated
  gaps \(\omega_n:=\int_0^1\Delta_n(\sigma)\,d\sigma\). Hence
  \(\Omega_{k,\nu}\geq\Delta_{\min}>0\), so no zero-frequency component
  appears in the oscillatory sector. The leading coefficients are
  \begin{align*}
    \varphi_1
    &=
    \int_0^1\sum_{n\neq0}
    \frac{|M_n|^2}{\Delta_n}\,ds, \\
    \varphi_2^{(T)}
    &=
    \sum_{n\neq0}
    \frac{|M_n(0)|\,|M_n(1)|}{\Delta_n(0)^2}
    \sin\bigl(\gamma_n-\omega_nT\bigr)
  \end{align*}
  where
  \(\gamma_n:=\beta_n(1)-\arg M_n(1)+\arg M_n(0)\). Moreover,
  \[
    \varphi_1\leq\frac{\dot H_{\max}^2}{\Delta_{\min}^3},
    \qquad
    \bigl|\varphi_2^{(T)}\bigr|
    \leq
    \frac{\|\dot H(0)\|\,\|\dot H(1)\|}{\Delta(0)^4},
  \]
  where \(\Delta(0):=\Delta_1(0)=E_1(0)-E_0(0)\).
  \end{lemma}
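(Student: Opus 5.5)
We expand the evolved state in the adiabatic perturbation theory (APT) ansatz of \cref{app:APT}:
\[
U_T(s)\ket{\psi(0)}=e^{-i\theta_D(s)}e^{i\theta_B(s)}\sum_{p\ge0}T^{-p}\sum_{n}e^{-iT\int_0^s\Delta_n}e^{i\beta_n(s)}b_n^{(p)}(s)\ket{n(s)} .
\]
Here \(b_0^{(0)}=1\), \(b_{n\neq0}^{(0)}=0\), and the coefficients \(b_n^{(p)}\) are built recursively from \(M_{nm}\), \(\Delta_n\) and their derivatives. Projecting onto \(\ket{\psi(s)}\) at \(s=1\) gives
\[
z=1+\delta,\qquad \delta=\sum_{p\ge1}T^{-p}\,b_0^{(p)}(1),
\]
where the \(n=0\) component carries no oscillating factor.

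The APT recursion splits each coefficient into a nonoscillatory bulk part and boundary parts. The bulk part is an integral over the loop of local quantities. The boundary parts carry phases \(e^{\pm iT\omega_n}\) and come from integrating by parts terms like \(\int e^{iT\int\Delta_n}(\cdots)\,ds\). Multiplying such terms together in the recursion produces phases \(e^{\pm iT\Omega}\), where \(\Omega\) is a sum of \(\omega_n\)'s. I will track by induction on \(p\) that every phase entering \(b_0^{(p)}\) has this form. I will also show that the diagonal channel, the only place a zero net frequency can arise, contributes only \(T\)-independent terms. These get absorbed into \(\varphi_k\), so every genuinely oscillatory frequency satisfies \(\Omega\ge\omega_1\ge\Delta_{\min}\). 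Then I substitute into \(\varphi=\sum_k\frac{(-1)^{k-1}}{k}\Im\delta^k\) and collect powers of \(T\). Each \(\Im\) of a product of exponentials yields the cos/sin modes. Truncating at order \(K\) leaves an \(O(T^{-K-1})\) remainder, because the APT series is asymptotic at fixed order and the logarithm series converges for \(|\delta|<1\).

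The leading terms then follow explicitly.

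First order. At order \(T^{-1}\), APT gives \(b_n^{(1)}=iM_n/\Delta_n\) at the boundary, minus its initial value. Normalization gives
\[
\Re b_0^{(1)}=0,\qquad \Im b_0^{(1)}(1)=\int_0^1\sum_{n\ne0}\frac{|M_n|^2}{\Delta_n}\,ds ,
\]
which is \(\varphi_1\). The bound follows from \(|M_n|\le |\braket{n|\dot H|\psi}|/\Delta_n\), so that
\[
\sum_n\frac{|M_n|^2}{\Delta_n}\le\frac{\sum_n|\braket{n|\dot H|\psi}|^2}{\Delta_{\min}^3}\le\frac{\dot H_{\max}^2}{\Delta_{\min}^3}.
\]

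Second order. The oscillatory part of \(b_0^{(2)}(1)\) is the cross term between the excited amplitude created at \(s=0\), namely \(-\frac{M_n(0)}{i\Delta_n(0)}e^{-iT\omega_n+i\beta_n(1)}\), and its coupling back through \(M_n(1)^*/\Delta_n(1)\). The \(\Im\delta^2\) contribution is nonoscillatory at this order, because \(b_0^{(1)}\) is. Using \(\Delta_n(1)=\Delta_n(0)\) and writing \(M_n=|M_n|e^{i\arg M_n}\), the cross term becomes the stated \(\sin(\gamma_n-\omega_nT)\) sum. For the bound, I use \(\Delta_n(0)\ge\Delta(0)\) and apply Cauchy–Schwarz:
\[
\sum_n|\braket{n|\dot H(0)|\psi}|\,|\braket{n|\dot H(1)|\psi}|\le\|\dot H(0)\|\,\|\dot H(1)\|,
\]
which gives the \(\Delta(0)^{-4}\) bound.

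The main obstacle is the general-order structural claim. Specifically, the oscillatory sector must contain only strictly positive frequencies formed from sums of \(\omega_n\), and products like \(e^{iT\omega_n}e^{-iT\omega_n}\) must land in the nonoscillatory \(\varphi_k\). I would handle this by a careful induction on the APT recursion. At each order I would classify terms by their net phase, and I would first check the sign conventions of \(\Omega_{k,\nu}\) explicitly at order \(k=3\).
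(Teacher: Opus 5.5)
You follow the paper's route for the explicit pieces: the first-order diagonal coefficient, the endpoint cross term at second order, and the bounds via completeness and Cauchy--Schwarz. But the general-order structural claim, which is most of the lemma's content, rests on an induction whose bookkeeping cannot deliver it.

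You let boundary terms carry phases $e^{\pm iT\omega_n}$, multiply them in the recursion, and propose to show that only the ``diagonal channel'' yields zero net frequency. If both signs really occurred in $\delta$, then $\Im(\delta^2)$ would contain $e^{iT(\omega_n-\omega_m)}$ with $n\neq m$. Such a frequency is not zero, so it cannot be absorbed into $\varphi_k$. It is also not a positive sum of integrated gaps, and it can be far below $\Delta_{\min}$ when excited levels are close. Your diagonal-channel argument says nothing about these terms, so the planned induction does not give $\Omega_{k,\nu}\ge\Delta_{\min}$.

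The missing idea is the single-level, single-sign phase structure of $z$, which needs no induction. Write the APT coefficients with two indices, $b_n^{(p)}(s)=\sum_m e^{iT\omega_{nm}(s)}e^{-i\beta_{nm}(s)}b_{nm}^{(p)}(s)$, where $b_{nm}^{(p)}$ is $T$-independent. Your single-index $b_0^{(p)}(1)$ hides exactly this $T$-dependence, which is why you first say the $n=0$ component has no oscillating factor and later find one at second order. Projecting onto the ground state gives
$z=\sum_m e^{-iT\omega_{m0}(1)}e^{i\beta_{m0}(1)}\sum_p T^{-p}b_{0m}^{(p)}(1)$.
Every oscillating factor is the phase of one level $m\neq0$, always with the same sign. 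Inside $z$ there are no products of phases and no $e^{+iT\omega}$.

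Now split $\delta=\delta_{\mathrm{bulk}}+\delta_{\mathrm{osc}}$ into $m=0$ and $m\neq0$ parts, and factor $1+\delta=(1+\delta_{\mathrm{bulk}})\bigl(1+\delta_{\mathrm{osc}}/(1+\delta_{\mathrm{bulk}})\bigr)$. The second logarithm is a power series in $\delta_{\mathrm{osc}}/(1+\delta_{\mathrm{bulk}})$ with no conjugation before the final $\Im$. Hence every term has frequency $\omega_{m_10}+\dots+\omega_{m_q0}\ge\Delta_{\min}$, and sums of frequencies arise only here, not in the recursion. Also, $b_{0m}^{(1)}=iM_{0m}b_m(0)/\Delta_{0m}=0$ for $m\neq0$ is what rules out a $\varphi_1^{(T)}$ term; you use this implicitly but should state it.

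Separately, check the sign of your second-order excited amplitude. Your own first-order description (adiabatic part $iM_n/\Delta_n$ minus its propagated initial value) gives $-i\,M_n(0)/\Delta_n(0)\,e^{-iT\omega_n+i\beta_n(1)}$. You instead wrote $-M_n(0)/(i\Delta_n(0))=+i\,M_n(0)/\Delta_n(0)$. Carried through the coupling back at $s=1$, that would flip the sign of $\varphi_2^{(T)}$.
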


  The proof of \cref{lem:phase_error}, including the explicit form of \(\varphi_2\), is given in \cref{app:phase_exact}; a bound on \(\varphi_2\) is given in \cref{app:bound}, while the bounds on \(\varphi_1\) and \(\varphi_2^{(T)}\) follow from completeness of the eigenbasis. Unlike the \(\Delta_{\min}^{-3}\) bound on \(\varphi_1\), the bound on \(\varphi_2^{(T)}\) depends only on endpoint data.

If, in addition, \(\dot H(0)=\dot H(1)\), then \(M_n(1)=M_n(0)\) and the bound reduces to \(|\varphi_2^{(T)}|\le\|\dot H(0)\|^2/\Delta(0)^4\).

The phase error separates into a non-oscillatory bulk contribution and an oscillatory boundary contribution. The coefficients \(\varphi_k\) are \(T\)-independent and encode information accumulated along the loop, whereas \(\varphi_k^{(T)}\) depend on \(T\) through dynamical phases and arise from endpoint terms that survive the closed-loop condition.

\subsection{Leakage error expansion}
\label{sec:leakage}

In addition to the phase error within the adiabatically transported
ground-state sector, it is important to quantify nonadiabatic
transitions out of that sector. We therefore consider the leakage
probability \(p_{\mathrm{leak}}(s)\) defined in \cref{eq:error}.
It measures the loss of weight from the transported ground-state
sector, whereas the phase error \(\varphi(s)=\arg z(s)\) measures the
phase of the same survival amplitude. Unlike the phase error, the leading-order leakage probability follows
directly from the first-order excited-state amplitudes derived in
\cref{app:APT}.

The \(O(T^{-2})\) leakage scaling and its leading endpoint-interference structure are established results~\cite{JansenRuskaiSeiler2007,CohenOh2025}, which we recover here using adiabatic perturbation theory~\cite{RigolinOrtizPonce2008} in the same notation as our phase-error analysis.

\begin{proposition}[Leakage error expansion]\label{prop:leakage_error}
  Under the setting of \cref{sec:adiabatic_error}, the leakage probability satisfies
  \begin{align}
    p_{\mathrm{leak}}(1)
    &=
    \frac{1}{T^2}\sum_{n\neq 0}
    \left|
      \frac{M_{n}(1)}{\Delta_{n}(1)}
      -
      e^{-iT\omega_{n}}e^{i\beta_{n}(1)}
      \frac{M_{n}(0)}{\Delta_{n}(0)}
    \right|^2 \notag\\
    &\quad + O(T^{-3}).
    \label{eq:pleak_expansion}
  \end{align}
  In particular,
  \begin{align*}
    p_{\mathrm{leak}}(1)
    &\le
    \frac{C_{\mathrm{leak}}}{T^2}
    + O(T^{-3}), \\
    C_{\mathrm{leak}}
    &=
    \frac{2\bigl(
      \|\dot H(0)\|^2+\|\dot H(1)\|^2
    \bigr)}
    {\Delta(0)^4},
  \end{align*}
  where \(\Delta(0)\) is defined in \cref{lem:phase_error}.
\end{proposition}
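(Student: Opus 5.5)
The approach is to expand the evolved state in the instantaneous eigenbasis using the adiabatic perturbation theory of \cref{app:APT}, and to read off the leakage from the first-order excited-state amplitudes. Write
\[
U_T(1)\ket{\psi(0)}=\sum_n e^{-i\theta_D^{(n)}(1)}e^{i\theta_B^{(n)}(1)}c_n(1)\ket{n(1)},
\]
where \(c_n(1)=\sum_k c_n^{(k)}/T^k\) and \(c_0^{(0)}=1\).

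The first step is to integrate by parts once in the interaction-picture equation for \(c_n\), \(n\neq0\). The coupling to the ground state carries the factor \(M_n(s)e^{iT\int_0^s\Delta_n}e^{-i\beta_n(s)}\), up to the conventions of \cref{app:APT}. A single integration by parts then gives the first-order amplitude as a pure boundary term:
\[
c_n^{(1)}=i\Bigl[\frac{M_n(1)}{\Delta_n(1)}e^{iT\omega_n}e^{-i\beta_n(1)}-\frac{M_n(0)}{\Delta_n(0)}\Bigr]+O(T^{-1}).
\]
The remainder is an integral of a smooth function against an oscillating phase, so a second integration by parts shows it is \(O(T^{-1})\) relative to the boundary term. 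Multiplying by the relative phase \(e^{-iT\omega_n}e^{i\beta_n(1)}\) puts \(|c_n^{(1)}|\) into exactly the form inside the modulus in \cref{eq:pleak_expansion}. The sign and phase conventions must be matched carefully to those of \cref{app:APT}, but the modulus is insensitive to an overall phase.

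The second step uses unitarity. Since the total norm is one and the eigenbasis is orthonormal,
\[
p_{\mathrm{leak}}(1)=1-|z(1)|^2=\sum_{n\neq0}|\braket{n(1)|U_T(1)|\psi(0)}|^2=\sum_{n\neq0}|c_n(1)|^2 .
\]
Here \(c_n(1)=c_n^{(1)}/T+O(T^{-2})\), and the \(O(T^{-2})\) bound is uniform because the spectrum is finite and gapped. The cross terms are therefore \(O(T^{-3})\), which yields \cref{eq:pleak_expansion}. This avoids computing the second-order correction to \(c_0\) directly.

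The third step is the bound. Apply \(|a-b|^2\le2(|a|^2+|b|^2)\) to each \(n\), and use \(\Delta_n(s)\ge\Delta(s)\), where \(\Delta(s)=\Delta_1(s)\). With \(M_n=-\braket{n|\dot H|\psi}/\Delta_n\), this gives
\[
\frac{|M_n(s)|^2}{\Delta_n(s)^2}\le\frac{|\braket{n(s)|\dot H(s)|\psi(s)}|^2}{\Delta(s)^4}.
\]
Summing over \(n\neq0\) and using completeness, \(\sum_{n\neq 0}|\braket{n|\dot H|\psi}|^2\le\|\dot H\psi\|^2\le\|\dot H(s)\|^2\). The loop condition \(H(1)=H(0)\) gives \(\Delta(1)=\Delta(0)\), which produces \(C_{\mathrm{leak}}\).

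The main obstacle is the first step: controlling the remainder of the integration by parts uniformly and confirming that the next-order APT terms only contribute \(O(T^{-2})\) to \(c_n\). I would handle this by citing the APT recursion of \cref{app:APT}, in which each order gains a factor of \(T^{-1}\) with coefficients bounded by smooth functions of \(M_{nm}\) and the gaps.
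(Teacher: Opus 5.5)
Your proposal is correct and takes essentially the paper's route: the paper reads the first-order excited-state amplitude, a pure boundary term that agrees with yours up to phase, directly off the APT first-order state \eqref{eq:Psi1}. It then squares and sums over $n\neq0$, implicitly using your unitarity identity $p_{\mathrm{leak}}(1)=\sum_{n\neq0}|\langle n(1)|U_T(1)|\psi(0)\rangle|^2$ and relying on the APT expansion for the $O(T^{-2})$ amplitude remainder, as you propose. Your third step ($|a-b|^2\le2(|a|^2+|b|^2)$, $\Delta_n\ge\Delta$, completeness, and $\Delta(1)=\Delta(0)$) supplies the derivation of $C_{\mathrm{leak}}$ that the paper leaves implicit.
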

The proof of \cref{prop:leakage_error} is given at
the end of \cref{app:APT}.

The two terms in \cref{eq:pleak_expansion} describe contributions from the endpoints of the loop. Setting the first \(k\) derivatives of \(H\) to zero at both endpoints gives \(p_{\mathrm{leak}}(1)=O(T^{-2(k+1)})\). This is the standard boundary-cancellation technique~\cite{LidarRezakhaniHamma2009,CamposVenutiLidar2018}, which also suppresses endpoint-induced phase errors. The geometric nature of the Berry phase makes it invariant under schedule reparametrization, allowing boundary cancellation to improve error scaling without changing the target phase. The dynamical phase, by contrast, generally changes with the schedule. We do not develop boundary cancellation further here. In the Hadamard-test implementation, the leakage-induced reduction in overlap magnitude increases statistical noise only by a factor \(1+O(T^{-2})\), without changing the asymptotic scaling (\cref{sec:hadamard}). We instead focus on canceling the bulk phase errors and suppressing oscillatory errors through runtime randomization, which also suppresses the preparation bias that boundary cancellation does not remove (\cref{sec:imperfect}).

\section{Adiabatic Error Cancellation}\label{sec:cancellation}

\subsection{Odd-order non-oscillatory cancellation}
\label{sec:forward_reverse}
To isolate the Berry phase while canceling the dynamical
phase, we introduce the sign-reverse evolution generated by
$-H(s)$:
\begin{equation*}
  \hat{U}_T(s) := \mathcal{T}\exp\!\left(
  +iT\int_0^s H(\sigma)\,d\sigma\right).
\end{equation*}
Since \(-H(s)\) has eigenvalues \(-E_n(s)\) with the same
eigenstates \(|n(s)\rangle\), the same instantaneous eigenstate
\(|\psi(s)\rangle\) is followed adiabatically. Hence the adiabatic phase for the reverse evolution is obtained by
changing the sign of the dynamical phase while leaving the Berry phase
unchanged:
\[
  \hat U_T(s)|\psi(0)\rangle
  =
  e^{+i\theta_D(s)}e^{i\theta_B(s)}|\psi(s)\rangle
  +O(T^{-1}).
\]
We therefore define the reverse phase error by
\begin{equation*}
  \hat\varphi(s)
  :=
  \arg\left( e^{-i\theta_D(s)}e^{-i\theta_B(s)}
  \langle\psi(s)|\hat U_T(s)|\psi(0)\rangle \right).
\end{equation*}
Since \(-H(s)\) satisfies the same loop conditions as \(H(s)\),
\cref{lem:phase_error} applies with
\(\Delta_n\to-\Delta_n\) and \(\omega_n\to-\omega_n\), while
\(M_n\), \(\beta_n\), and \(\gamma_n\) remain unchanged. Equivalently,
the reverse evolution is formally obtained by \(T\to -T\), giving, for
any fixed \(K\geq2\),
\begin{equation}\label{eq:phihat_expansion}
  \hat{\varphi}
  =
  -\frac{\varphi_1}{T}
  +
  \sum_{k=2}^{K}
  \frac{(-1)^k\bigl(
    \varphi_k+\varphi_k^{(-T)}
  \bigr)}{T^k}
  +
  O(T^{-K-1}).
\end{equation}
Accordingly, all odd-order non-oscillatory terms change sign, whereas
the even-order ones remain unchanged, and each oscillatory contribution
\(\varphi_k^{(T)}\) is replaced by
\((-1)^k\varphi_k^{(-T)}\).

When the phases of the overlap amplitudes
\(\langle\psi(0)|U_T(1)|\psi(0)\rangle\) and
\(\langle\psi(0)|\hat U_T(1)|\psi(0)\rangle\) are obtained
independently, they are
\(-\theta_D+\theta_B+\varphi\) and
\(+\theta_D+\theta_B+\hat\varphi\), respectively. Their average cancels
the dynamical phase exactly and defines the Berry-phase estimator
\begin{equation}\label{eq:theta_est}
  \tilde{\theta}_B(T)
  :=
  \theta_B+\frac{\varphi+\hat\varphi}{2}.
\end{equation}
Since the sum of the two overlap phases is known only modulo \(2\pi\),
division by two determines \(\tilde{\theta}_B(T)\) only modulo \(\pi\).
This ambiguity is resolved by the branch-resolution procedure in
\cref{sec:algorithm}.

The following theorem shows that all odd-order non-oscillatory errors
cancel. Thus \(\tilde{\theta}_B(T)-\theta_B=O(T^{-2})\), improving on the
\(O(T^{-1})\) error of a single evolution.

\begin{theorem}[Odd-order non-oscillatory adiabatic error cancellation]%
  \label{thm:cancellation}
  Under the setting of \cref{sec:adiabatic_error}, the forward--reverse
  symmetrized estimator cancels the non-oscillatory terms
  \[
    \frac{\varphi_1}{T},
    \frac{\varphi_3}{T^3},
    \frac{\varphi_5}{T^5},
    \ldots
  \]
  exactly. More precisely, for any fixed integer $K\geq 2$, the estimated
  Berry phase satisfies
\begin{equation*}
  \tilde{\theta}_B(T)-\theta_B
  =
  \sum_{k=1}^{\lfloor K/2\rfloor}\frac{\Phi_{2k}}{T^{2k}}
  +
  \sum_{k=2}^{K}\frac{\Phi_k^{(T)}}{T^k}
  +
  O(T^{-K-1}),
\end{equation*}
where
$\Phi_{2k}=\varphi_{2k}$ are non-oscillatory coefficients, while
$\Phi_k^{(T)}
=\bigl(\varphi_k^{(T)}+(-1)^k\varphi_k^{(-T)}\bigr)/2$
are oscillatory coefficients depending on $T$. In particular,
  \begin{align}\label{eq:Phi2T}
    \Phi_2^{(T)}
    &= \sum_{n\neq 0} B_n \cos(\omega_{n}T),
  \end{align}
  where
  \begin{equation}\label{eq:Bn_def}
    B_n
    :=
    \frac{|M_n(0)|\,|M_n(1)|}
         {\Delta_n(0)^2}
         \sin\gamma_n.
  \end{equation}
\end{theorem}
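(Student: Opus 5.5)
The plan is to add the forward expansion \cref{eq:phi_expansion} and the reverse expansion \cref{eq:phihat_expansion} term by term and then divide by two, as \cref{eq:theta_est} prescribes. The only genuine input is that \cref{lem:phase_error} applies to the family \(-H(s)\) under the substitutions \(\Delta_n\to-\Delta_n\), \(\omega_n\to-\omega_n\), with \(M_n\), \(\beta_n\), \(\gamma_n\) unchanged. So the first step is to justify \cref{eq:phihat_expansion} carefully.

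Each coefficient in the adiabatic-perturbation-theory derivation (\cref{app:phase_exact}) is a homogeneous expression. At order \(T^{-k}\), every nonoscillatory coefficient \(\varphi_k\) carries exactly \(k\) inverse powers of gaps, since each integration by parts against \(e^{\pm iT\int\Delta_n}\) produces one factor \((iT\Delta_n)^{-1}\). The oscillatory terms depend on \(T\) only through \(e^{\pm iT\omega_n}\). The map \((H,T)\mapsto(-H,T)\) acts on these pieces as follows:
\begin{itemize}
\item The eigenvectors, and hence \(M_{nm}\) and the Berry phases, are unchanged.
\item The factor \(e^{-iT\int(E_n-E_0)}\) becomes \(e^{+iT\int(E_n-E_0)}\).
\end{itemize}
Both effects are reproduced by the formal replacement \(T\to-T\). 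The sign in the definition of \(\hat\varphi\) is arranged so that the removed reference phase \(e^{-i\theta_D}e^{-i\theta_B}\) is exactly the image of the forward one under \(T\to-T\). The ground-state overlap \(\langle\psi|\hat U_T|\psi\rangle\) is therefore the forward overlap with \(T\) replaced by \(-T\), order by order. This yields \cref{eq:phihat_expansion}, with \(\varphi_k\to(-1)^k\varphi_k\) and \(\varphi_k^{(T)}/T^k\to\varphi_k^{(-T)}/(-T)^k\).

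Summing the two expansions gives, at order \(T^{-k}\), the nonoscillatory part \((1+(-1)^k)\varphi_k/2\). This vanishes for odd \(k\) and equals \(\varphi_k\) for even \(k\), which gives \(\Phi_{2k}=\varphi_{2k}\). The oscillatory part is \(\bigl(\varphi_k^{(T)}+(-1)^k\varphi_k^{(-T)}\bigr)/2=\Phi_k^{(T)}\). The two \(O(T^{-K-1})\) remainders combine into a single remainder of the same order.

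For \cref{eq:Phi2T}, I insert the explicit \(\varphi_2^{(T)}\) from \cref{lem:phase_error}, with \(\Delta_n(0)^2\) invariant under the sign flip:
\[
\Phi_2^{(T)}=\tfrac12\sum_{n\neq0}\frac{|M_n(0)||M_n(1)|}{\Delta_n(0)^2}\bigl[\sin(\gamma_n-\omega_nT)+\sin(\gamma_n+\omega_nT)\bigr].
\]
The identity \(\sin(a-b)+\sin(a+b)=2\sin a\cos b\) then gives \(\sum_n B_n\cos(\omega_nT)\).

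The main obstacle is the rigorous justification that the reverse expansion is the exact \(T\to-T\) image, including the oscillatory coefficients. A subtle point is whether \(\gamma_n\) really stays fixed. It contains \(\arg M_n\), and \(M_n=-\langle n|\dot H|\psi\rangle/\Delta_n\) changes sign under \(H\to-H\) only if one naively reads the formula. I will check instead that \(M_n=\langle n|\dot\psi\rangle\) depends only on the eigenvectors and is therefore literally unchanged. I will also confirm in the appendix derivation that no odd power of \(\Delta_n\) appears anywhere other than through the \(T\Delta_n\) combinations.
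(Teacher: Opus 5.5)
Your proposal is correct and follows the paper's argument. You add the forward expansion to its $T\to-T$ image, read off the factor $\bigl(1+(-1)^k\bigr)/2$ on $\varphi_k$, and use $\sin(\gamma_n-\omega_nT)+\sin(\gamma_n+\omega_nT)=2\sin\gamma_n\cos(\omega_nT)$ to obtain $\Phi_2^{(T)}$.

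For the reverse expansion, a cleaner justification than counting ``$k$ inverse gaps'' is available. The Schr\"odinger equation depends only on $TH$, so $\hat U_T=U_{-T}$, and the $T$-independent APT recursion then gives coefficients $(-1)^p b^{(p)}_{nm}$ for $-H$ by induction. The gap count is loose because derivatives such as $\dot\Delta_n$ also appear in numerators, and only the net homogeneity degree matters.
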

\begin{proof}
  Substituting the phase-error expansion of \cref{lem:phase_error}
  and its reverse-evolution counterpart \cref{eq:phihat_expansion} into
  the symmetrized estimator \cref{eq:theta_est}, the non-oscillatory coefficient \(\varphi_k\) is multiplied
  by \(\bigl(1+(-1)^k\bigr)/2\). Hence all odd-order non-oscillatory terms
  vanish, while the even-order terms remain. The oscillatory terms combine
  as stated.
  \end{proof}

The remaining error consists of non-oscillatory even-order terms and oscillatory terms. In the next subsection, we apply Richardson extrapolation to remove the leading non-oscillatory \(O(T^{-2})\) term.

\subsection{Even-order non-oscillatory cancellation}
\label{sec:richardson}

The residual error established in \cref{thm:cancellation} has the
finite-order asymptotic form
\[
  \tilde{\theta}_B(T)-\theta_B
  =
  \sum_{k=1}^{\lfloor K/2\rfloor}
  \frac{\Phi_{2k}}{T^{2k}}
  +
  \sum_{k=2}^{K}
  \frac{\Phi_k^{(T)}}{T^k}
  +
  O(T^{-K-1}),
\]
where \(\Phi_{2k}\) are \(T\)-independent non-oscillatory terms and
\(\Phi_k^{(T)}=O(1)\) are oscillatory boundary terms. We now use
Richardson extrapolation to remove the even-order non-oscillatory
contributions.

For a fixed integer \(r\ge1\), define
\[
  \tilde{\theta}_{B,R}^{(r)}(T)
  :=
  \sum_{\ell=0}^{r}
  w_{r,\ell}\tilde{\theta}_B(\alpha^\ell T),
  \qquad
  \alpha>1.
\]
The weights are chosen so that
\[
  \sum_{\ell=0}^{r}w_{r,\ell}=1,
  \quad
  \sum_{\ell=0}^{r}w_{r,\ell}\alpha^{-2j\ell}=0,
  \quad
  j=1,\ldots,r.
\]
\begin{theorem}[Even-order non-oscillatory adiabatic error cancellation]
  \label{thm:richardson_m}
  Under the setting of \cref{sec:adiabatic_error}, the \(r\)-fold
  Richardson extrapolant cancels the non-oscillatory terms
  \[
    \frac{\Phi_2}{T^2},
    \frac{\Phi_4}{T^4},
    \ldots,
    \frac{\Phi_{2r}}{T^{2r}}
  \]
  exactly. For any fixed integer \(K\geq 2r+2\),
  \[
    \tilde{\theta}_{B,R}^{(r)}(T)-\theta_B
    =
    \sum_{j=r+1}^{\lfloor K/2\rfloor}
    \frac{C_{j,r}\Phi_{2j}}{T^{2j}}
    +
    \sum_{k=2}^{K}
    \frac{\Phi_{k,r}^{(T)}}{T^k}
    +
    O(T^{-K-1}),
  \]
  where
  \[
    C_{j,r}
    :=
    \sum_{\ell=0}^{r}
    w_{r,\ell}\alpha^{-2j\ell},\quad
    \Phi_{k,r}^{(T)}
    :=
    \sum_{\ell=0}^{r}
    w_{r,\ell}\alpha^{-k\ell}
    \Phi_k^{(\alpha^\ell T)}.
  \]
  Here \(\Phi_{k,r}^{(T)}\) is oscillatory and \(O(1)\), while the first remaining \(T\)-independent contribution occurs at order
  \(T^{-(2r+2)}\).
\end{theorem}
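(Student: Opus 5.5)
The plan is to substitute the expansion of \cref{thm:cancellation} into the definition of the Richardson extrapolant and collect terms order by order. The key point is that, for fixed \(r\) and \(\alpha\), each of the finitely many runtimes \(\alpha^\ell T\) tends to infinity with \(T\). So \cref{thm:cancellation} may be applied at each \(\alpha^\ell T\) with the same fixed \(K\ge 2r+2\), giving
\[
  \tilde\theta_B(\alpha^\ell T)-\theta_B
  =\sum_{j=1}^{\lfloor K/2\rfloor}\frac{\Phi_{2j}\,\alpha^{-2j\ell}}{T^{2j}}
  +\sum_{k=2}^{K}\frac{\alpha^{-k\ell}\,\Phi_k^{(\alpha^\ell T)}}{T^k}
  +O\bigl((\alpha^\ell T)^{-K-1}\bigr).
\]
The remainders \(O((\alpha^\ell T)^{-K-1})\) are \(O(T^{-K-1})\) with constants multiplied by \(\alpha^{-(K+1)\ell}\le 1\). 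A finite weighted sum of them therefore stays \(O(T^{-K-1})\), with constant at most \(\sum_\ell|w_{r,\ell}|\) times the single-runtime constant.

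Next I will form \(\sum_\ell w_{r,\ell}(\cdot)\) and use \(\sum_\ell w_{r,\ell}=1\) to write \(\tilde\theta_{B,R}^{(r)}(T)-\theta_B\) as the weighted sum of the individual errors. For the non-oscillatory sector, the coefficient of \(\Phi_{2j}/T^{2j}\) is \(C_{j,r}=\sum_\ell w_{r,\ell}\alpha^{-2j\ell}\). This vanishes for \(j=1,\dots,r\) by the defining conditions on the weights, so those terms cancel exactly and only \(j\ge r+1\) survive.

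The weights exist and are unique because the defining system is a Vandermonde system in the distinct nodes \(x_\ell=\alpha^{-2\ell}\), distinct since \(\alpha>1\). Indeed, \(w_{r,\ell}\) are the Lagrange-type coefficients \(\prod_{m\ne \ell} x_m/(x_m-x_\ell)\). The same formula shows \(C_{r+1,r}=(-1)^r\prod_\ell x_\ell\neq0\), which justifies that the first surviving \(T\)-independent term genuinely occurs at order \(T^{-(2r+2)}\) when \(\Phi_{2r+2}\neq0\).

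For the oscillatory sector, the weighted sum at each order \(k\) is, by definition, \(\Phi_{k,r}^{(T)}/T^k\). It remains to argue that \(\Phi_{k,r}^{(T)}\) is oscillatory and \(O(1)\). By \cref{lem:phase_error}, each \(\Phi_k^{(\alpha^\ell T)}\) is a finite trigonometric sum in \(\Omega_{k,\nu}\alpha^\ell T\) with \(T\)-independent bounded coefficients and frequencies \(\Omega_{k,\nu}\alpha^\ell\ge\Delta_{\min}>0\). A finite linear combination of such sums is again a finite trigonometric polynomial in \(T\) with no zero-frequency component, and it is bounded uniformly in \(T\).

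The only step requiring care is this last uniformity and the bookkeeping of remainders. The rescaled frequencies differ across \(\ell\), so the oscillatory terms are not cancelled by the weights; they are merely reorganized, which is exactly what the theorem claims. I expect no real obstacle beyond making explicit that \(r\), \(\alpha\) and \(K\) are fixed, so that all implicit constants remain \(T\)-independent.
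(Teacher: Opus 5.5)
Your proposal is correct and follows the paper's proof: substitute the forward--reverse expansion at each node $\alpha^\ell T$, recognize the weights as the Lagrange coefficients at the nodes $x_\ell=\alpha^{-2\ell}$ evaluated at $x=0$, and use the moment conditions to kill $C_{j,r}$ for $j\le r$, while the oscillatory terms are only regrouped into $\Phi_{k,r}^{(T)}$. Your extra observation that $C_{r+1,r}=(-1)^r\prod_\ell x_\ell\neq0$ is correct. It slightly sharpens the paper's ``first possible remaining term'' statement, and it is consistent with the $-\alpha^{-2}\Phi_4$ coefficient the paper gives for $r=1$.
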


The proof of \cref{thm:richardson_m} is given in \cref{app:richardson_details}.

Richardson extrapolation cancels the first \(r\)
non-oscillatory even-order terms but leaves a bounded oscillatory
boundary contribution. The leading oscillatory coefficient obeys
\[
  |\Phi_{2,r}^{(T)}|
  \leq
  L_r(\alpha)
  \frac{\|\dot H(0)\|\,\|\dot H(1)\|}
       {\Delta(0)^4},
\]
where
\[
  L_r(\alpha)
  :=
  \sum_{\ell=0}^{r}
  |w_{r,\ell}|\,\alpha^{-2\ell}.
\]
Throughout this work, the asymptotic estimates are
taken at fixed \(r\). The explicit form of \(\Phi_{k,r}^{(T)}\) and the bound on \(L_r(\alpha)\) are given in \cref{app:richardson_details}. Further properties of the factor $L_r(\alpha)$---its monotone decrease in $r$ for $\alpha\ge\sqrt3$ and the higher-sector factors $L_{r,k}(\alpha)$---are collected in \cref{app:richardson_factor}.

For \(r=1\), the extrapolant reduces to
\[
  \tilde{\theta}_{B,R}^{(1)}(T)
  =
  \frac{\alpha^2\tilde{\theta}_B(\alpha T)-\tilde{\theta}_B(T)}
       {\alpha^2-1}.
\]
Since
\[
  \Phi_2^{(T)}
  =
  \sum_{n\neq 0}B_n\cos(\omega_nT),
\]
the leading oscillatory coefficient becomes
\[
  \Phi_{2,1}^{(T)}
  =
  \frac{1}{\alpha^2-1}
  \sum_{n\neq 0}B_n
  \left[
    \cos(\alpha\omega_nT)-\cos(\omega_nT)
  \right].
\]
Thus,
\[
  \tilde{\theta}_{B,R}^{(1)}(T)-\theta_B
  =
  \frac{\Phi_{2,1}^{(T)}}{T^2}
  +
  \frac{\Phi_{3,1}^{(T)}}{T^3}
  +
  \frac{\Phi_{4,1}^{(T)}-\alpha^{-2}\Phi_4}{T^4}
  +
  O(T^{-5}),
\]
where \(\Phi_{3,1}^{(T)}\) and \(\Phi_{4,1}^{(T)}\) are oscillatory and
\(O(1)\).

\subsection{Oscillatory-error suppression}
\label{sec:randomization}

Like boundary cancellation discussed in \cref{sec:leakage}, runtime randomization exploits the geometric nature of the Berry phase: averaging estimates at different runtimes suppresses oscillatory errors without changing \(\theta_B\). We develop this approach further because it also suppresses initial-state preparation bias, as shown in \cref{sec:imperfect}. We combine it with Richardson extrapolation as follows. Let \(\mu\) be a probability distribution supported on
\([1-\lambda,1+\lambda]\), where \(0<\lambda<1\), with density \(\rho\), and write
\(c_\mu:=2\lambda\,\|\rho\|_\infty\) (so \(c_\mu=1\) for the uniform distribution; \(c_\mu\)
is finite whenever some \(\chi_{\mu,k}\) is integrable, in particular under the decay
hypothesis of \cref{thm:randomized_richardson} with \(M>1\)). Partition the support into
\(N\) intervals \(I_1,\dots,I_N\) of equal width \(2\lambda/N\), let
\(p_j:=\mu(I_j)\), and draw the runtime multipliers by stratified sampling,
\(X_j\sim\mu(\,\cdot\mid I_j)\), independently sampled from \(\mu\) conditioned on \(I_j\), for \(j=1,\dots,N\); stratification
rather than independent sampling is used to reduce the variance of the estimator.
Set
\[
  T_j=TX_j.
\]
For each randomized runtime, we form the \(r\)-fold Richardson estimator
\[
  \tilde{\theta}_{B,R}^{(r)}(T_j)
  :=
  \sum_{\ell=0}^{r}
  w_{r,\ell}\tilde{\theta}_B(\alpha^\ell T_j),
\]
and average over the \(N\) strata with the weights \(p_j\):
\[
  \hat{\theta}_{B,R}^{(N,r)}(T)
  :=
  \sum_{j=1}^{N}p_j\,
  \tilde{\theta}_{B,R}^{(r)}(T_j).
\]
Since \(\sum_jp_j\,\mathbb E[f(X_j)]=\int f\,d\mu\) for every bounded \(f\), the
expectation of \(\hat{\theta}_{B,R}^{(N,r)}(T)\) is the same as for \(N\) independent
samples from \(\mu\); stratification changes only the fluctuation.

Richardson extrapolation removes the non-oscillatory terms
\(T^{-2},T^{-4},\ldots,T^{-2r}\). The leading remaining contribution is
oscillatory. Runtime randomization suppresses this contribution in
expectation through the decay of the weighted characteristic functions
\[
  \chi_{\mu,k}(\xi)
  :=
  \int_{1-\lambda}^{1+\lambda}
  x^{-k}e^{i\xi x}\,d\mu(x).
\]
\begin{theorem}[Runtime-randomized Richardson estimator]
  \label{thm:randomized_richardson}
  Under the setting of \cref{sec:adiabatic_error}, suppose that, for some
  \(M>0\) with \(M\ge2r-1\), the weighted characteristic functions satisfy
  \[
    |\chi_{\mu,k}(\xi)|
    \leq
    C_{\mu,K}(1+|\xi|)^{-M},
    \qquad
    2\leq k\leq K.
  \]
  Then, for any fixed integer \(K\geq 2r+2\),
  \begin{align}
    \mathbb E\left[
      \hat{\theta}_{B,R}^{(N,r)}(T)
    \right]-\theta_B
    &=
    O\left(
      \frac{
        C_{\mu,K}L_r(\alpha)\,
        \|\dot H(0)\|\,\|\dot H(1)\|
      }{
        \Delta(0)^4
        \Delta_{\min}^{M}
        T^{M+2}
      }
    \right)
    \nonumber\\
    &\quad
    +
    O(T^{-(2r+2)}).
    \label{eq:randomized_richardson_theorem}
  \end{align}
  Moreover, with \(L_{r,k}(\alpha)\) as in \cref{app:richardson_factor}, the finite-sample fluctuation satisfies
  \begin{align*}
    \sqrt{
      \operatorname{Var}\left(
        \hat{\theta}_{B,R}^{(N,r)}(T)
      \right)
    }
    &=
    O\Bigl(
      \frac{
        \sqrt{c_\mu}\,\lambda\,
        \|\dot H(0)\|\,\|\dot H(1)\|
      }{
        \Delta(0)^4\,N^{3/2}
      }\\
    &\quad\times
      \Bigl[
        \frac{H_{\max}L_{r,1}(\alpha)}{(1-\lambda)^{2}\,T}
        +
        \frac{L_r(\alpha)}{(1-\lambda)^{3}\,T^{2}}
      \Bigr]
    \Bigr)\\
    &\quad{}+O\!\left(\frac{1}{T^2N^{3/2}}\right).
  \end{align*}
\end{theorem}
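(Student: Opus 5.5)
We start from the deterministic expansion of \cref{thm:richardson_m}, evaluated at the randomized runtime \(T_j=TX_j\). For each \(j\), using \(K\ge 2r+2\), we write
\[
\tilde\theta_{B,R}^{(r)}(TX_j)-\theta_B
=\sum_{j'=r+1}^{\lfloor K/2\rfloor}\frac{C_{j',r}\Phi_{2j'}}{T^{2j'}X_j^{2j'}}
+\sum_{k=2}^{K}\frac{\Phi_{k,r}^{(TX_j)}}{T^kX_j^k}+O(T^{-K-1}).
\]
Since \(X_j\in[1-\lambda,1+\lambda]\) with \(\lambda<1\), the non-oscillatory part is \(O(T^{-(2r+2)})\) uniformly, and so is the remainder. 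Taking expectations and using \(\sum_jp_j\,\mathbb E[f(X_j)]=\int f\,d\mu\), the bias reduces to \(\sum_{k=2}^K T^{-k}\int x^{-k}\Phi_{k,r}^{(Tx)}\,d\mu(x)\) plus \(O(T^{-(2r+2)})\). One caveat is that the \(O(T^{-K-1})\) remainder must be uniform in \(T'\in[(1-\lambda)T,(1+\lambda)\alpha^rT]\). I will assume this, since \cref{lem:phase_error} is an asymptotic expansion with constants that are continuous in the runtime.

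By \cref{lem:phase_error} and the definitions in \cref{thm:cancellation,thm:richardson_m}, each \(\Phi_{k,r}^{(Tx)}\) is a finite combination of \(\cos(\alpha^\ell\Omega_{k,\nu}Tx)\) and \(\sin(\alpha^\ell\Omega_{k,\nu}Tx)\). The coefficients are independent of \(x\), and every frequency satisfies \(\alpha^\ell\Omega_{k,\nu}\ge\Delta_{\min}\). Each mode therefore integrates to the real or imaginary part of \(\chi_{\mu,k}(\pm\alpha^\ell\Omega_{k,\nu}T)\). The decay hypothesis then bounds it by \(C_{\mu,K}(\Delta_{\min}T)^{-M}\).

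For \(k=2\) we use the explicit form \(\Phi_2^{(T)}=\sum_nB_n\cos(\omega_nT)\). Together with \(\sum_n|B_n|\le\|\dot H(0)\|\|\dot H(1)\|/\Delta(0)^4\) (completeness, as in \cref{lem:phase_error}) and the factors \(|w_{r,\ell}|\alpha^{-2\ell}\), this gives exactly the first term of \cref{eq:randomized_richardson_theorem}, of order \(T^{-(M+2)}\). For \(3\le k\le K\) we obtain \(O(T^{-k-M})\le O(T^{-(M+3)})\). Since \(M\ge 2r-1\), this is \(O(T^{-(2r+2)})\) and is absorbed. This establishes the bias claim.

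For the variance, the strata are independent, so \(\operatorname{Var}=\sum_jp_j^2\operatorname{Var}(\tilde\theta^{(r)}_{B,R}(TX_j))\). Within stratum \(I_j\), of width \(2\lambda/N\), we bound the variance by the squared oscillation of \(x\mapsto\tilde\theta^{(r)}_{B,R}(Tx)\) over \(I_j\). That oscillation is at most \((2\lambda/N)\sup|\partial_x(\cdot)|\).

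Differentiating the leading term \(T^{-2}x^{-2}\Phi_{2,r}^{(Tx)}\) produces two pieces. The first is \(T^{-2}x^{-2}\,T\alpha^\ell\omega_n B_n\sin(\cdots)\), with \(\omega_n\le 2H_{\max}\). Absorbing \(\alpha^\ell\) into \(L_{r,1}\), it is bounded by \(H_{\max}L_{r,1}\|\dot H(0)\|\|\dot H(1)\|/(\Delta(0)^4(1-\lambda)^2T)\). The second is the \(\partial_x x^{-2}\) piece, \(O(L_r/((1-\lambda)^3T^2))\). Higher orders contribute \(O(T^{-2})\) derivatives.

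We then sum over strata using \(p_j\le c_\mu/N\), so that \(\sum_jp_j^2\le c_\mu/N\). This gives \(\operatorname{Var}\lesssim (c_\mu/N)(\lambda/N)^2\sup|\partial_x|^2\), and hence the stated \(\sqrt{c_\mu}\,\lambda N^{-3/2}\) scaling.

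The main obstacle is the justification step. We must differentiate the asymptotic expansion in \(T\), and make the expansion uniform over the randomized runtimes, so that the derivative of the \(O(T^{-3})\) tail is genuinely \(O(T^{-2})\) (each oscillatory mode gains at most one factor of \(T\)). I would handle this by returning to the adiabatic-perturbation-theory representation in \cref{app:APT}. There the remainder is itself a finite sum of boundary modes plus integrals with smooth \(T\)-dependence, so differentiation term by term is legitimate.
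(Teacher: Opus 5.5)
Your proposal is correct and follows essentially the same route as the paper's proof. The stratified identity $\sum_j p_j\,\mathbb E[f(X_j)]=\int f\,d\mu$ reduces the bias to weighted characteristic functions evaluated at frequencies $\ge\Delta_{\min}T$. The $k=2$ sector then gives the explicit $L_r(\alpha)$ endpoint term, and the sectors $k\ge3$ are absorbed into $O(T^{-(2r+2)})$ because $M\ge 2r-1$. The variance follows from $\operatorname{Var}(f(X_j))\le\bigl((2\lambda/N)\sup|f'|\bigr)^2/4$, the bound $p_j\le c_\mu/N$, and the two-piece slope bound using $\omega_n\le 2H_{\max}$.

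The uniformity caveat you raise is automatic: the remainder in \cref{thm:richardson_m} is $O(T'^{-K-1})$ as a function of the single variable $T'$, so one constant serves for all $T'\ge(1-\lambda)T$. The derivative of the APT remainder is handled in the paper as you suggest. It retains extra APT orders and differentiates Duhamel's formula, so that each $x$-derivative costs at most one factor of $T$.
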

The proof of \cref{thm:randomized_richardson} is given in \cref{app:runtime_randomization}.

The first term in \cref{eq:randomized_richardson_theorem} is the randomized oscillatory bias, while the second is the remaining non-oscillatory Richardson error: runtime randomization controls the oscillatory sector, whereas Richardson extrapolation controls the non-oscillatory one. To make the gap dependence of the leading error explicit, we can choose \(M=2r-1\), so that the oscillatory contribution is \(O(T^{-2r-1})\) while the non-oscillatory remainder is \(O(T^{-(2r+2)})\).

The variance statement shows that randomization suppresses the oscillatory
bias but does not, by itself, suppress the finite-sample fluctuation.  The
remaining randomized fluctuation is of order \(T^{-1}N^{-3/2}\), governed
by the leading \(k=2\) oscillatory sector. The factor \(N^{-3/2}\), rather than the \(N^{-1/2}\) of independent sampling,
comes from stratification: within a stratum the fluctuation is bounded by the width
times the slope of the oscillation, which is why \(H_{\max}T\) appears.

Three representative choices of \(\mu\) are worth noting. It is enough to establish the stated decay for large \(|\xi|\), since
\(\chi_{\mu,k}\) is bounded on compact \(\xi\)-intervals and the constant
\(C_{\mu,K}\) can be enlarged accordingly.

\paragraph{Example 1: Uniform randomization.}
For the uniform distribution on \([1-\lambda,1+\lambda]\), \(c_\mu=1\). Integration
by parts gives
\begin{align*}
  |\chi_{\mu,k}(\xi)|
  &\leq
  \frac{C_{\mu,k}}{|\xi|}
  +
  O(|\xi|^{-2}),
  \\
  C_{\mu,k}
  &:=
  \frac{
    (1-\lambda)^{-k}
    +(1+\lambda)^{-k}
  }{2\lambda}.
\end{align*}
Thus uniform randomization corresponds to \(M=1\), and the leading
oscillatory bias scales as \(O(T^{-3})\).

\paragraph{Example 2: Triangular randomization.}
For the triangular distribution
\[
  \frac{d\mu}{dx}
  =
  \frac{1}{\lambda}
  \left(
    1-\frac{|x-1|}{\lambda}
  \right),
  \qquad
  x\in[1-\lambda,1+\lambda],
\]
the density vanishes at the endpoints. Here, \(c_\mu=2\). Piecewise integration by parts
gives
\begin{align*}
  |\chi_{\mu,k}(\xi)|
  &\leq
  \frac{C_{\mu,k}^{\triangle}}{|\xi|^2}
  +
  O(|\xi|^{-3}),
  \\
  C_{\mu,k}^{\triangle}
  &:=
  \frac{
    (1-\lambda)^{-k}
    +2
    +(1+\lambda)^{-k}
  }{\lambda^2}.
\end{align*}
Thus triangular randomization corresponds to \(M=2\), and the leading
oscillatory bias scales as \(O(T^{-4})\).

\paragraph{Example 3: Smooth bump randomization.}
We use the standard smooth compactly supported bump function
\cite{HormanderAnalysisI}. For the normalized density
\[
  \frac{d\mu}{dx}
  =
  \frac{1}{Z_\lambda}
  \exp\left(
    -\frac{1}{
      1-\left(\frac{x-1}{\lambda}\right)^2
    }
  \right),
  \quad
  x\in(1-\lambda,1+\lambda),
\]
with \(d\mu/dx=0\) outside this interval, the density and all of its
derivatives vanish at the endpoints. Here, \(c_\mu=2\lambda/(eZ_\lambda)\approx1.657\), independent of \(\lambda\). Consequently, for every fixed
integer \(M\geq1\),
\[
  |\chi_{\mu,k}(\xi)|
  \leq
  \frac{C_{\mu,k}^{(M)}}{|\xi|^M},
\]
where \(C_{\mu,k}^{(M)}\) depends on \(M\). Thus smooth bump
randomization suppresses the oscillatory bias to arbitrary fixed
inverse-polynomial order, up to the remaining non-oscillatory
Richardson error \(O(T^{-(2r+2)})\). Sampling from \(\mu\), including the stratified conditional draws, is a one-dimensional classical task and can be done by numerical inversion of the distribution function at negligible cost.

For a fixed runtime, increasing \(M\) does not necessarily give a tighter
numerical bound, because the constant \(C_{\mu,k}^{(M)}\) also depends on
\(M\).  In the asymptotic regime \(\Delta_{\min}T\gg1\), however, smoother
runtime distributions give higher-order suppression of the oscillatory
sector.

  \subsection{Error Cancellation for an Imperfect Initial State}\label{sec:imperfect}
  The analysis so far has assumed that the protocol is supplied with the exact
ground state \(\ket{\psi(0)}\) of \(H(0)\). In practice the guiding state is prepared to
finite accuracy: instead of it the protocol is supplied with a normalized
\begin{equation*}
  \ket{\psi_{\mathrm{prep}}}=\sum_{n\ge0}c_n\ket{n(0)},
  \qquad
  p_{\mathrm{prep}}:=1-|c_0|^2 ,
\end{equation*}
with \(c_0\neq0\), so that \(p_{\mathrm{prep}}\) is the preparation infidelity,
in parallel with the leakage probability \(p_{\mathrm{leak}}\) of \cref{eq:error}.
In analogy with \(z(s)\) we set
\begin{equation*}
  \tilde z:=e^{i\theta_D}e^{-i\theta_B}
  \braket{\psi_{\mathrm{prep}}|U_T(1)|\psi_{\mathrm{prep}}},
\end{equation*}
and write \(\varphi^{\mathrm{prep}}:=\arg\tilde z-\varphi\), so that
\(\arg\tilde z=\varphi+\varphi^{\mathrm{prep}}\) with \(\varphi\) exactly the quantity
expanded in \cref{lem:phase_error}: the whole effect of the imperfect preparation is
carried by \(\varphi^{\mathrm{prep}}\).

\begin{lemma}[Preparation bias]\label{lem:prep_bias}
  Let \(r_n:=|c_n|^2/|c_0|^2\) and assume \(p_{\mathrm{prep}}<1/2\), so that
  \(\sum_{n\ge1}r_n<1\). For any fixed integer \(K\ge2\),
  \begin{equation}
    \varphi^{\mathrm{prep}}
    =\varphi_0^{\mathrm{prep}(T)}
    +\sum_{k=1}^{K}
    \frac{\varphi_k^{\mathrm{prep}}+\varphi_k^{\mathrm{prep}(T)}}{T^{k}}
    +O(T^{-K-1}),
    \label{eq:prep_expansion}
  \end{equation}
  where the \(\varphi_k^{\mathrm{prep}}\) are \(T\)-independent non-oscillatory
  coefficients, each \(\varphi_k^{\mathrm{prep}(T)}\) is an absolutely convergent
  combination of \(\cos(\Omega T)\) and \(\sin(\Omega T)\) with \(T\)-independent
  coefficients and frequencies \(\Omega\ge\Delta_{\min}\), and
  \(\varphi_k^{\mathrm{prep}},\varphi_k^{\mathrm{prep}(T)}=O(p_{\mathrm{prep}}^{1/2})\)
  for \(k\ge1\), with
  \begin{align}
    \varphi_0^{\mathrm{prep}(T)}
    &=\Im\log\Bigl(1+\sum_{n\ge1}r_ne^{i(\beta_n-\omega_nT)}\Bigr) \notag\\
    &=\sum_{\nu}\bigl[a_\nu\cos(\Omega_\nu T)
      +b_\nu\sin(\Omega_\nu T)\bigr].
    \label{eq:prep_k0}
  \end{align}
  The latter sum is countable, with frequencies \(\Omega_\nu\) that are sums of one or more of the \(\omega_n\), hence \(\Omega_\nu\ge\omega_1\), and
  \(\sum_\nu(|a_\nu|+|b_\nu|)=O(p_{\mathrm{prep}})\).
\end{lemma}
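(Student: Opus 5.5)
Expanding \(\ket{\psi_{\mathrm{prep}}}\) in the eigenbasis at \(s=0\), I would write
\[
\braket{\psi_{\mathrm{prep}}|U_T(1)|\psi_{\mathrm{prep}}}=\sum_{n,m}\bar c_n c_m\braket{n(0)|U_T(1)|m(0)},
\]
using \(\ket{n(1)}=\ket{n(0)}\). Adiabatic perturbation theory (\cref{app:APT}) applies to each initial eigenstate \(\ket{m(0)}\), not only to the ground state. It gives the diagonal amplitudes as
\[
\braket{n(0)|U_T(1)|n(0)}=e^{-i\theta_D^{(n)}}e^{i\theta_B^{(n)}}\bigl(1+\delta_n\bigr),
\]
where \(\theta_D^{(n)}=T\int_0^1E_n\) and \(\delta_n\) has an asymptotic expansion in \(T^{-1}\) of the same bulk-plus-boundary type as in \cref{lem:phase_error}. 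The off-diagonal amplitudes are \(O(T^{-1})\), and each carries an oscillatory factor \(e^{-iT\int E_n}\) or \(e^{-iT\int E_m}\) together with a finite expansion.

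Multiplying by \(e^{i\theta_D}e^{-i\theta_B}/|c_0|^2\) gives
\[
\tilde z/|c_0|^2=z+\sum_{n\ge1}r_ne^{i(\beta_n-\omega_nT)}(1+\delta_n)+\sum_{n\ne m}\frac{\bar c_nc_m}{|c_0|^2}e^{i\theta_D-i\theta_B}A_{nm},
\]
where \(A_{nm}:=\braket{n(0)|U_T(1)|m(0)}\) and \(\beta_n=\beta_n(1)\). The rescaling by \(|c_0|^2>0\) does not change the argument. I would factor out \(z\), whose argument is \(\varphi\), and write
\[
\varphi^{\mathrm{prep}}=\Im\log\Bigl(1+S_0+\epsilon(T)\Bigr)+O(\cdot),
\qquad
S_0:=\sum_{n\ge1}r_ne^{i(\beta_n-\omega_nT)},
\]
with \(z^{-1}=1-\delta+\delta^2-\dots\) also expanded.

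The zeroth-order piece is then \(\Im\log(1+S_0)\), which is \cref{eq:prep_k0}. Since \(p_{\mathrm{prep}}<1/2\), we have \(\sum r_n=p_{\mathrm{prep}}/(1-p_{\mathrm{prep}})<1\), so \(|S_0|<1\) uniformly in \(T\). The series \(\log(1+S_0)=\sum_j(-1)^{j-1}S_0^j/j\) therefore converges absolutely. Expanding \(S_0^j\) multinomially produces exponentials \(e^{-iT(\omega_{n_1}+\dots+\omega_{n_j})}\), whose frequencies are sums of the \(\omega_n\) and are hence \(\ge\omega_1\ge\Delta_{\min}\). Bounding the absolute coefficient sum by \(\sum_j(\sum r_n)^j/j=-\log(1-\sum r_n)=O(p_{\mathrm{prep}})\) gives the stated \(\ell^1\) bound on \((a_\nu,b_\nu)\).

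For \(k\ge1\), I would write
\[
\log(1+S_0+\epsilon)=\log(1+S_0)+\log\Bigl(1+\frac{\epsilon}{1+S_0}\Bigr),
\]
where \(\epsilon\) collects three kinds of terms: \(r_n\delta_n\), the cross terms \(\bar c_nc_m A_{nm}\), and the corrections from dividing by \(z\). All of these are \(O(T^{-1})\). The cross terms with \(n=0\) or \(m=0\) carry a single factor \(|c_m|/|c_0|=O(p_{\mathrm{prep}}^{1/2})\), which is the source of the \(O(p_{\mathrm{prep}}^{1/2})\) coefficient size. Inserting the APT expansions and reexpanding in \(T^{-1}\) up to order \(K\) gives a polynomial in \(T^{-1}\). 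Its coefficients are \(T\)-independent pieces plus products of \(e^{\pm i\omega_nT}\) with powers of \((1+S_0)^{-1}\); the latter expand as absolutely convergent series in \(S_0\), since \(|S_0|<1\). I would then sort the resulting terms into zero-frequency (non-oscillatory) parts \(\varphi_k^{\mathrm{prep}}\) and the remainder \(\varphi_k^{\mathrm{prep}(T)}\).

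\textbf{Main obstacle.} The hardest step is the frequency claim \(\Omega\ge\Delta_{\min}\) for \(k\ge1\). Cross terms carry \(e^{+i\omega_mT}\) factors, which is possible for instance through \(\bar c_m\) paired with forward-evolved amplitudes. Combined with \(S_0^j\), these factors could in principle produce differences of integrated gaps, and such a difference could be small or zero. I would handle this by tracking which combinations have zero net frequency and assigning them to \(\varphi_k^{\mathrm{prep}}\). Nonzero net frequencies that are smaller than \(\Delta_{\min}\) would require either a careful phase-bookkeeping argument or a gap-structure assumption. My first attempt would be to show that every term in the off-diagonal amplitudes carries the phase of its final level, so that the relative phase of each term after multiplying by \(\bar c_n\) is \(e^{-i\omega_nT}\) or 1 only. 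If that holds, only nonnegative combinations of the \(\omega_n\) appear, and the bound \(\Omega\ge\Delta_{\min}\) follows.
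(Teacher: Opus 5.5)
You handle the zeroth-order term correctly: $|S_0|\le p_{\mathrm{prep}}/(1-p_{\mathrm{prep}})<1$, the multinomial expansion of $\log(1+S_0)$, and the bound $\log\frac{1-p_{\mathrm{prep}}}{1-2p_{\mathrm{prep}}}=O(p_{\mathrm{prep}})$ are all right. Your $O(p_{\mathrm{prep}}^{1/2})$ argument for $k\ge1$ is also sound, since every term of $\log(1+\epsilon/(1+S_0))$ contains at least one factor of $\epsilon$. Dividing by $z$, instead of subtracting the exact-state bulk and oscillatory phases separately as the paper does, is a legitimate variant.

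The gap is the one you flag yourself. You do not prove that the $k\ge1$ oscillatory coefficients have frequencies $\Omega\ge\Delta_{\min}$. Your fallback, a gap-structure assumption, would change the lemma. The fix you propose also rests on a false statement. An amplitude $\braket{n(0)|U_T(1)|m(0)}$ does not carry only the phase of its final level. By \cref{eq:Psi1}, its first-order part already contains both $e^{-i\theta_D^{(m)}}$ and $e^{-i\theta_D^{(n)}}$. The second line of \cref{eq:b2_offdiag} shows that an initial amplitude $b_k(0)$ feeds coefficients $b^{(2)}_{nm}$ whose phase-carrying index $m$ is neither the initial nor the final level. Nor does the bra introduce any phase: $\bar c_m$ is $T$-independent, so it cannot generate $e^{+i\omega_mT}$.

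The missing idea is already built into the APT ansatz:
\[
U_T(1)\ket{\psi_{\mathrm{prep}}}=\sum_{n,m}\sum_{p}T^{-p}e^{-i\theta_D^{(m)}(1)}e^{i\theta_B^{(m)}(1)}b^{(p)}_{nm}(1)\ket{n(1)},
\]
with $T$-independent $b^{(p)}_{nm}$ that are linear in $b_k(0)=c_k$. So every term carries exactly one \emph{forward} dynamical phase, that of its carrier level $m$. Now contract with $\sum_n\bar c_n\bra{n(1)}$ and multiply by $e^{i\theta_D}e^{-i\theta_B}$. Each term becomes a $T$-independent coefficient times $e^{-iT\omega_m}$, with $\omega_0=0$ and $\omega_m\ge\Delta_{\min}$ for $m\ge1$; the same holds for $z$.

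Every later operation you perform ($z^{-1}$, $(1+S_0)^{-1}$, $\log$) is a power series in such quantities. These series converge absolutely in the $\ell^1$ norm of the Fourier coefficients because $\sum_n r_n<1$. Complex conjugation enters only through the final $\Im$, which maps $c\,e^{-i\Omega T}$ to $\cos(\Omega T)$ and $\sin(\Omega T)$ at the same $\Omega$. Hence every frequency is a sum of $\omega_m$'s. It is $0$ exactly for terms built entirely from the $m=0$ carrier sector, and these define $\varphi_k^{\mathrm{prep}}$. Otherwise it is at least $\omega_1\ge\Delta_{\min}$, and differences of integrated gaps never occur.

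The negative frequencies you anticipated are an artifact of splitting $e^{-iT\omega_n}$ off $\delta_n$. For excited $n$, a lower carrier $k<n$ contributes $e^{+iT\omega_{nk}(1)}$ relative to $e^{-i\theta_D^{(n)}}$, so $\delta_n$ is \emph{not} of the type in \cref{lem:phase_error}. The product $r_ne^{-iT\omega_n}\delta_n$, however, carries $e^{-iT\omega_k}$ again. The paper makes this automatic by organizing the expansion by carrier index: $m=0$ is the bulk, $m\ne0$ is oscillatory, and the factorization $(1+\tilde\delta_{\mathrm{bulk}})(1+\tilde\delta_{\mathrm{osc}}/(1+\tilde\delta_{\mathrm{bulk}}))$ keeps the two apart.
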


The proof is given in \cref{app:prep_expansion}; the assumption \(p_{\mathrm{prep}}<1/2\)
is exactly what makes the expansion of the logarithm converge. \Cref{eq:prep_expansion} has the same structure as \cref{eq:phi_expansion}, except that it begins at \(k=0\). Consequently, \cref{thm:cancellation,thm:richardson_m} remove the non-oscillatory terms \(\varphi_k^{\mathrm{prep}}/T^k\) as before, but neither removes the zeroth-order oscillatory term \(\varphi_0^{\mathrm{prep}(T)}\), which carries no power of \(1/T\). The proof of \cref{thm:randomized_richardson}, by contrast, relies only on oscillation at frequencies bounded away from zero and therefore carries over without modification.

\begin{theorem}[Suppression of the preparation bias]\label{thm:prep_supp}
  Suppose that, for some \(M>0\),
  \begin{equation*}
    |\chi_{\mu,k}(\xi)|\le C_{\mu,K}(1+|\xi|)^{-M},
    \qquad 0\le k\le K.
  \end{equation*}
  Let \(\widehat\varphi^{\mathrm{prep}}(t)\) denote the reverse-evolution preparation error, defined analogously to \(\varphi^{\mathrm{prep}}(t)\), and define the preparation contribution to the full estimator by
  \begin{align*}
    \hat\varphi^{(N,r)}_{\mathrm{prep}}(T)
    &:=\sum_{j=1}^{N}p_j\sum_{\ell=0}^{r}w_{r,\ell}
    \frac{\varphi^{\mathrm{prep}}(t_{j,\ell})
    +\widehat\varphi^{\mathrm{prep}}(t_{j,\ell})}{2},\\
    t_{j,\ell}&:=\alpha^\ell TX_j.
  \end{align*}
  Then, for any fixed integer \(K\ge2r+2\), \(\hat\varphi^{(N,r)}_{\mathrm{prep}}(T)\) satisfies
  \begin{align}
    \mathbb E\bigl[\hat\varphi^{(N,r)}_{\mathrm{prep}}(T)\bigr]
    &=O\!\left(
      \frac{p_{\mathrm{prep}}\,C_{\mu,K}\,L_{r,0}(\alpha)}
           {(\Delta_{\min}T)^{M}}
    \right) \notag\\
    &\quad{}+O\!\left(
      \frac{p_{\mathrm{prep}}^{1/2}}{\Delta_{\min}^{M}T^{M+1}}
    \right) \notag\\
    &\quad+O\bigl(p_{\mathrm{prep}}^{1/2}T^{-(2r+2)}\bigr),
    \label{eq:prep_supp}
  \end{align}
  and the finite-sample fluctuation satisfies
  \begin{equation}
    \begin{aligned}
    \sqrt{\operatorname{Var}\bigl(\hat\varphi^{(N,r)}_{\mathrm{prep}}(T)\bigr)}
    &=O\!\left(
      \frac{\sqrt{c_\mu}\,\lambda\,\alpha^{r}L_{r,0}(\alpha)\,H_{\max}T\,p_{\mathrm{prep}}}
           {N^{3/2}}
    \right)\\
    &\quad{}+O\!\left(\frac{p_{\mathrm{prep}}^{1/2}}{N^{3/2}}\right),
    \end{aligned}
    \label{eq:prep_var}
  \end{equation}
\end{theorem}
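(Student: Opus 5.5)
The plan is to substitute the expansion of \cref{lem:prep_bias} for the forward error and its reverse analogue into \(\hat\varphi^{(N,r)}_{\mathrm{prep}}(T)\), and to treat three sectors separately: the zeroth-order oscillatory term, the oscillatory terms with \(k\ge1\), and the non-oscillatory terms with \(k\ge1\). For the reverse evolution we use \(T\to -T\), exactly as in \cref{eq:phihat_expansion}. Then \(\widehat\varphi^{\mathrm{prep}}\) has the same structure: \(\varphi_k^{\mathrm{prep}}\) becomes \((-1)^k\varphi_k^{\mathrm{prep}}\) and the oscillatory terms keep frequencies \(\pm\Omega\) with \(|\Omega|\ge\Delta_{\min}\). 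Because stratification preserves the mean, \(\sum_jp_j\mathbb E f(X_j)=\int f\,d\mu\). The expectation is therefore a deterministic \(\mu\)-average of \(\sum_\ell w_{r,\ell}(\cdot)(\alpha^\ell Tx)\).

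\textbf{Non-oscillatory sector.} The forward--reverse average kills the odd \(\varphi_k^{\mathrm{prep}}\), as in \cref{thm:cancellation}. The Richardson weights cancel the even orders \(2,\dots,2r\), as in \cref{thm:richardson_m}. This holds pointwise in \(x\); the residue is \(x^{-2j}\) times \(T\)-independent constants, whose \(\mu\)-average is finite. What remains is \(O(p_{\mathrm{prep}}^{1/2}T^{-(2r+2)})\), together with the \(O(T^{-K-1})\) remainder, which is harmless for \(K\ge2r+2\). The \(p_{\mathrm{prep}}^{1/2}\) prefactor comes directly from the lemma's bound \(\varphi_k^{\mathrm{prep}}=O(p_{\mathrm{prep}}^{1/2})\).

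\textbf{Oscillatory sectors.} A term \(a\cos(\Omega t)/t^k\) with \(t=\alpha^\ell Tx\) averages to \(a(\alpha^\ell T)^{-k}\,\Re\chi_{\mu,k}(\Omega\alpha^\ell T)\), and sine terms give the imaginary part. Applying the decay hypothesis with \(\Omega\alpha^\ell T\ge\Delta_{\min}T\) bounds each such term by \(C_{\mu,K}|a|(\alpha^\ell T)^{-k}(\Delta_{\min}\alpha^\ell T)^{-M}\).

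For \(k=0\), \(\chi_{\mu,0}\) is the ordinary characteristic function, which is why the hypothesis includes \(k=0\). Summing over \(\nu\) uses \(\sum_\nu(|a_\nu|+|b_\nu|)=O(p_{\mathrm{prep}})\) from \cref{lem:prep_bias}. Summing over \(\ell\) with \(|w_{r,\ell}|\) gives a factor at most \(L_{r,0}(\alpha)\), which yields the first term of \cref{eq:prep_supp}.

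For \(1\le k\le K\), the oscillatory coefficients are \(O(p_{\mathrm{prep}}^{1/2})\) and absolutely convergent, and the leading \(k=1\) term gives \(O(p_{\mathrm{prep}}^{1/2}\Delta_{\min}^{-M}T^{-M-1})\). The main technical point is justifying the exchange of the \(\mu\)-integral with the countable sums. I will handle it by dominated convergence, using absolute summability together with boundedness of \(x^{-k}\) on \([1-\lambda,1+\lambda]\). I also need the \(O(T^{-K-1})\) remainder of \cref{lem:prep_bias} to be uniform for runtimes in \([(1-\lambda)T,(1+\lambda)\alpha^rT]\). This follows because the lemma's remainder is uniform for large arguments.

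\textbf{Variance.} Independence across strata gives \(\operatorname{Var}=\sum_jp_j^2\operatorname{Var}(g(X_j))\), where \(g(x)\) is the full integrand. Within the stratum \(I_j\), of width \(2\lambda/N\), I bound \(\operatorname{Var}(g(X_j))\le(\sup_{I_j}|g'|\cdot2\lambda/N)^2\). Since \(p_j\le c_\mu/N\), the sum \(\sum_jp_j^2\) is at most \(c_\mu/N\), and together these give the \(N^{-3/2}\) scaling.

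The dominant derivative comes from the zeroth-order term. Differentiating \(\cos(\Omega\alpha^\ell Tx)\) yields a factor \(\Omega\alpha^\ell T\), and \(\Omega\le\) (sum of gaps) is controlled by \(H_{\max}\); this is where \(\alpha^rL_{r,0}(\alpha)H_{\max}Tp_{\mathrm{prep}}\) arises. I expect this to be the delicate step. Because \(\Omega_\nu\) can be a sum of many \(\omega_n\), I would bound \(\partial_x\varphi_0^{\mathrm{prep}(T)}\) directly from the closed logarithmic form in \cref{eq:prep_k0} rather than termwise. Its derivative is \(\Im[\sum_n r_n(-i\omega_n\alpha^\ell T)e^{i(\cdot)}/(1+\sum_n r_ne^{i(\cdot)})]\). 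Using \(\omega_n\le2H_{\max}\) and \(\sum_nr_n=O(p_{\mathrm{prep}})\) with denominator \(\ge1-\sum r_n>0\), this gives \(O(H_{\max}\alpha^\ell Tp_{\mathrm{prep}})\). The \(k\ge1\) terms have derivatives \(O(p_{\mathrm{prep}}^{1/2})\) uniformly in \(T\) (each factor \(T\) is offset by \(1/T\)), which produces the second term of \cref{eq:prep_var}.
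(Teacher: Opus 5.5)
Your proposal is correct and follows the paper's proof essentially step by step. It uses the same three-sector split, the same \(\chi_{\mu,k}\)-based bias bounds (with \(\Omega\alpha^\ell T\ge\Delta_{\min}T\) and the factor \(L_{r,0}(\alpha)\)), and the same stratified slope-times-width variance bound.

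The one variation is how you bound the slope of \(\varphi_0^{\mathrm{prep}(T)}\). You use the closed form \(\Im[w'/(1+w)]\), whereas the paper differentiates the Fourier series termwise and uses that a degree-\(q\) mode has frequency at most \(2qH_{\max}\). Both give \(O(H_{\max}\alpha^\ell T\,p_{\mathrm{prep}}/(1-2p_{\mathrm{prep}}))\); yours has a slightly better constant and needs no justification of termwise differentiation. Like the paper, you still need a separate slope bound on the \(O(T^{-K-1})\) remainder for the stratified variance estimate, which the paper asserts via Duhamel's formula and your proposal does not mention.
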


The proof is given in \cref{app:prep_randomization}. Requiring the first term of
\cref{eq:prep_supp} to be at most the target accuracy \(\varepsilon_B\) of the Berry phase gives
\begin{equation}
  p_{\mathrm{prep}}\lesssim
  \frac{(\Delta_{\min}T)^{M}}{C_{\mu,K}\,L_{r,0}(\alpha)}\,\varepsilon_B ,
  \label{eq:prep_req}
\end{equation}
replacing the requirement \(p_{\mathrm{prep}}\lesssim\varepsilon_B\) that holds
without randomization, the admissible infidelity being larger by the factor
\((\Delta_{\min}T)^{M}\gg1\). Since \(M\) is fixed by the runtime density
alone---arbitrarily large for a compactly supported \(C^\infty\) density, at no cost in
circuit depth---whereas each Richardson level costs a node of depth \(\alpha^\ell T\),
it is natural to raise \(M\) rather than \(r\). Take \(M=2r+2\), the order at which the
preparation bias reaches the same \(O(T^{-(2r+2)})\) floor as the cascade itself. The
cascade runtime is \(T=O(\varepsilon_B^{-1/(2r+2)})\), so \(T^{2r+2}\varepsilon_B\) is a
constant and the right-hand side of \cref{eq:prep_req} no longer depends on
\(\varepsilon_B\): the guiding state need not be prepared more accurately as
\(\varepsilon_B\) is reduced, only \(T\) grows. This matters because the cost of
adiabatic state preparation grows rapidly as the target infidelity is reduced. Increasing \(M\) improves the asymptotic suppression, but the larger constant \(C_{\mu,K}\) can limit the benefit at finite runtime.

The fluctuation \cref{eq:prep_var} grows linearly in \(T\), unlike that of
\cref{thm:randomized_richardson}: the preparation oscillation carries no power of
\(1/T\), so its slope in the runtime multiplier is of order \(\omega_n\alpha^\ell T\).
It decays as \(N^{-3/2}\), however, and with the \(N=O(\varepsilon_B^{-1})\) stratified
runtimes used in \cref{sec:hadamard} it is \(O(p_{\mathrm{prep}}\,H_{\max}T\,
\varepsilon_B^{3/2})\) up to the constants of \cref{eq:prep_var}, which at
\(T=O(\varepsilon_B^{-1/(2r+2)})\) is \(o(\varepsilon_B)\) for any fixed
\(p_{\mathrm{prep}}\).

Finally, this error is beyond the reach of any modification of the schedule. As noted in
\cref{sec:leakage}, boundary cancellation suppresses the oscillatory terms of
\cref{lem:phase_error} because their amplitudes are proportional to the endpoint
couplings \(M_n(0)\) and \(M_n(1)\). The amplitude of \(\varphi_0^{\mathrm{prep}(T)}\),
by contrast, is \(r_n\): fixed before the loop begins and containing no derivative of
\(H\). A reparametrization rescales its frequencies but leaves the amplitude, and its
lack of decay in \(T\), untouched.

\section{Numerical Simulations}\label{sec:numerics}
\begin{figure*}[t]
  \centering
  \includegraphics[width=\textwidth]{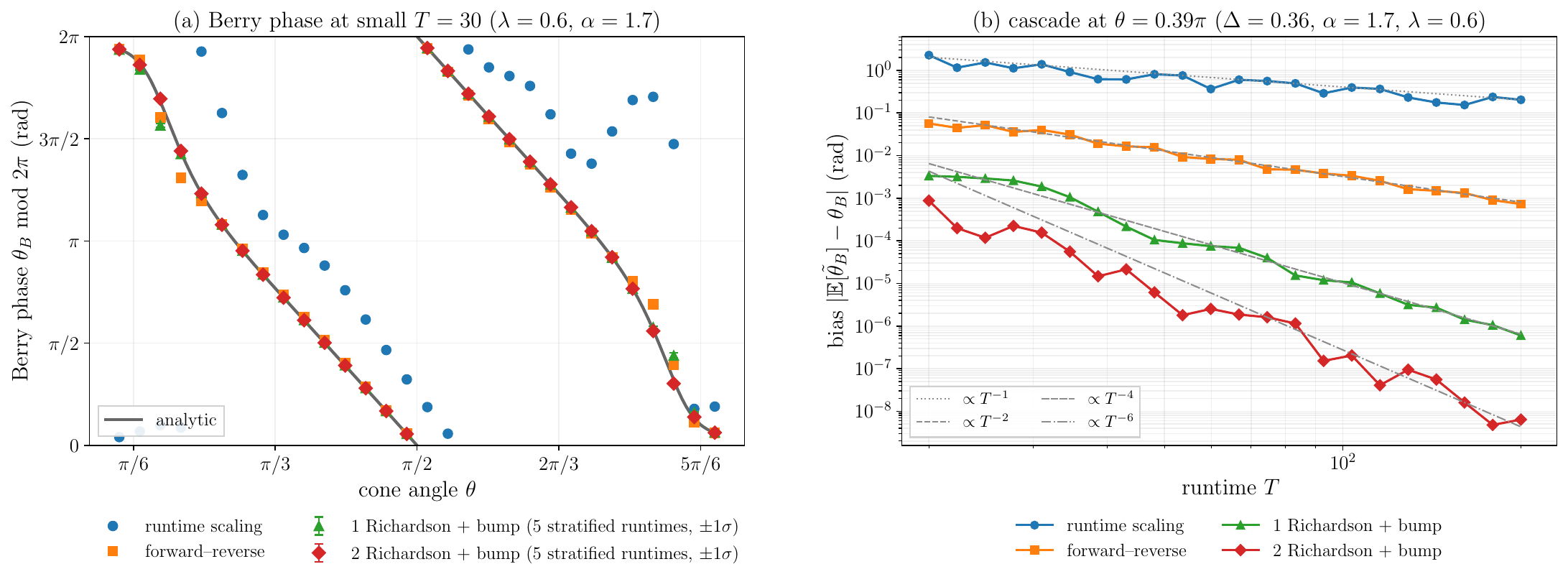}
  \caption{Entangled spiral Heisenberg chain
  (\(L=4\) sites, \(J=1\), \(B_0=1\); \(\lambda=0.6\), \(\alpha=1.7\)).
  \textbf{(a)} Berry phase \(\theta_B\) in the canonical window
  \([0,2\pi)\) versus cone angle at \(T=30\).  The exact analytic value is
  shown as the wrapped line.  The runtime-scaling reconstruction is plotted
  as raw values modulo \(2\pi\), while the forward--reverse and randomized Richardson estimates track
  the correct branch. The randomized Richardson estimates use five stratified
  runtimes (\cref{sec:randomization}); each point is the mean of ten independent
  such estimates and the error bar is the \(\pm1\sigma\) spread of a single
  estimate.
  \textbf{(b)} Error-cancellation cascade at \(\theta=0.39\pi\),
  \(\Delta=0.36\): runtime scaling, forward--reverse, one Richardson level
  + randomization, and two Richardson levels + randomization exhibit
  \(O(T^{-1})\), \(O(T^{-2})\), \(O(T^{-4})\), and \(O(T^{-6})\) behavior.
  Grey lines are reference slopes. Runtime randomization uses the smooth bump density of \cref{sec:randomization}.}
  \label{fig:spiral}
\end{figure*}

\begin{figure*}[t]
  \centering
  \includegraphics[width=\textwidth]{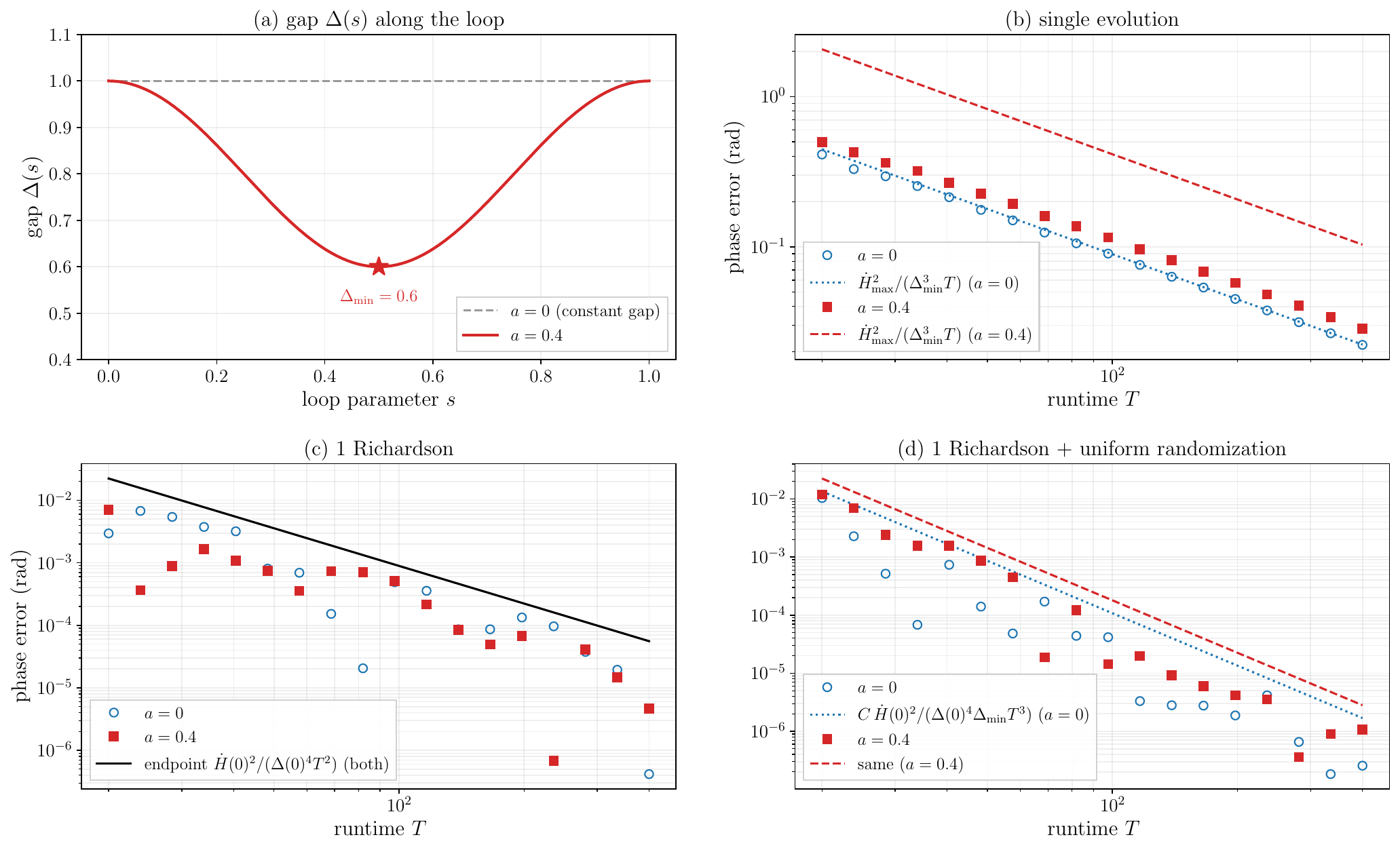}
  \caption{Numerics against the analytic bounds, and the controlling gap for each
  estimator, on a non-isospectral spin-\(1/2\) loop with field modulation
  \(|\mathbf B(s)|=|\mathbf B|(1-a\sin^2(\pi s))\), \(a=0.4\).
  \textbf{(a)} Instantaneous gap \(\Delta(s)\): constant loop
  (\(a=0\), dashed) and dipped loop with \(\Delta_{\min}=0.6\) in the
  interior, with fixed endpoint gap \(\Delta(0)=1\).
  \textbf{(b)} Single-evolution error and the worst-case bound
  \(\|\dot H\|_{\max}^2/(\Delta_{\min}^3T)\).
  \textbf{(c)} One-Richardson residual (extrapolation ratio $\alpha=2$): the dipped and undipped data
oscillate below the same endpoint-controlled reference scale
\(\|\dot H(0)\|^2/(\Delta(0)^4T^2)\).
  \textbf{(d)} One Richardson level with uniform runtime randomization and
  the \(M=1\) oscillatory bound
  \(C\|\dot H(0)\|^2/(\Delta(0)^4\Delta_{\min}T^3)\), with the a priori
  constant $C=C_1L_{1,3}(\alpha)\approx12.1$: one integration by parts gives $|\chi_{\mu,2}(\xi)|\le C_1/|\xi|$ for all $\xi$, with $C_1=(1-\lambda)^{-2}/\lambda\approx15.9$ at $\lambda=0.7$, and $L_{1,3}(\alpha)=\sum_\ell|w_{1,\ell}|\alpha^{-3\ell}\approx0.76$ at $\alpha=1.75$ is the sector factor of \cref{app:richardson_factor}, the $M=1$ decay contributing an extra $\alpha^{-\ell}$ per Richardson level; no constant is fitted. The uniform distribution makes the randomized
  oscillatory bias dominate the \(O(T^{-4})\) Richardson remainder, revealing
  the predicted \(T^{-3}\) scaling.}
  \label{fig:bounds}
\end{figure*}
We test the cancellation cascade of \cref{sec:cancellation} in two
settings.  First, we demonstrate the method on an entangled spiral
Heisenberg chain, comparing the small-runtime Berry-phase reconstruction
and the runtime scaling of the different estimators.  Second, we use a non-isospectral spin-\(1/2\) loop to test the analytic
bounds and to demonstrate how error cancellation replaces the
worst-case minimum-gap dependence by a tighter endpoint-controlled
scaling.

\subsection{Entangled spiral Heisenberg chain}
\label{sec:numerics-manybody}

We first consider a genuinely interacting model: a four-site Heisenberg
chain
\[
  H(s)
  =
  -\sum_i \mathbf B_i(s)\cdot\mathbf S_i
  +
  J\sum_{\langle ij\rangle}\mathbf S_i\cdot\mathbf S_j,
  \qquad
  J=1,
\]
with open boundary conditions (\(\langle ij\rangle\) runs over the \(L-1\) nearest-neighbor bonds) and the rotating field
\[
  \mathbf B_i(s)=B_0\bigl(\sin\theta\cos(2\pi s+\phi_i),\,\sin\theta\sin(2\pi s+\phi_i),\,\cos\theta\bigr),
\]
with cone angle \(\theta\) and a site-dependent azimuthal offset \(\phi_i=2\pi i/L\), \(L\) the number of sites. This spiral
field breaks the global spin-rotation symmetry, so the ground state is
entangled and the virtual-excitation contribution to the phase error is
nonzero. Since the loop is generated by a rigid global rotation, its
instantaneous spectrum is independent of \(s\), and hence
\[
  \Delta(s)=E_1(s)-E_0(s)=\Delta
\]
is constant along the loop. The gap \(\Delta\) is obtained by exact
diagonalization of \(H(0)\). The same rigid rotation also gives an exact propagator and the reference
Berry phase, where \(\langle\cdot\rangle_0\) denotes the expectation value
in the ground state of \(H(0)\):
\[
  \theta_B
  =
  2\pi\langle S^z_{\rm tot}\rangle_0,
  \qquad
  S^z_{\rm tot}:=\sum_{i=1}^{L}S_i^z.
\]

\cref{fig:spiral}(a) shows \(\theta_B\) modulo \(2\pi\) versus the cone
angle at the small runtime \(T=30\). The corresponding gap profile is
shown in \cref{fig:randomizedHT}(c) and varies over \([0.11,0.42]\) across
the sweep. The runtime-scaling reconstruction is shown as raw values
modulo \(2\pi\); because its error can exceed \(\pi\), the displayed
points scatter across the canonical window. By contrast, the
forward--reverse estimate and the randomized Richardson estimates
generally follow the analytic curve. The randomized estimates use five stratified runtimes per estimate
(\cref{sec:randomization}); their sample-to-sample fluctuation, the
\(T^{-1}N^{-3/2}\) term of \cref{thm:randomized_richardson}, grows as the gap
decreases but stays below \(0.03\) rad over the sweep. The two-Richardson
estimate follows the analytic curve at every angle (deviation below
\(0.05\) rad). Near the minimum-gap angles
(\(\theta/\pi\approx0.2,0.8\), \(\Delta\approx0.11\), \(T\Delta\approx3.3\)) the
one-Richardson estimate still shows a systematic deviation of about \(0.4\)
rad: its residual is the oscillatory term left by one Richardson level, whose suppression by the
runtime average requires \(\lambda\omega_1T\gg1\), and here
\(\lambda\omega_1T\approx2\). At \(T=20\) the two-Richardson estimate itself
deviates by about \(0.2\) rad at these angles.

\cref{fig:spiral}(b) traces the error versus runtime at
\(\theta=0.39\pi\), where the gap is \(\Delta=0.36\). As a baseline, we
compare with the runtime-scaling reconstruction of
Ref.~\cite{HayakawaSakamotoKiumi2025}, which shows the expected
\(O(T^{-1})\) behavior. The observed hierarchy is
\[
  T^{-1}
  \xrightarrow{\mathrm{fwd\mbox{-}rev}}
  T^{-2}
  \xrightarrow{\mathrm{1-fold\ Richardson + randomization}}
  T^{-4}
\]
\[
  T^{-1}
  \xrightarrow{\mathrm{fwd\mbox{-}rev}}
  T^{-2}
  \xrightarrow{\mathrm{2-fold\ Richardson + randomization}}
  T^{-6}.
\]
Forward--reverse symmetrization cancels the \(O(T^{-1})\) term, and one
Richardson level with runtime randomization removes the \(T^{-2}\)
non-oscillatory term and suppresses the leading oscillatory term in
expectation, yielding the observed \(T^{-4}\) decay; two levels give
\(T^{-6}\). For these higher-order remainders we do not obtain an explicit
gap dependence. These numerical results confirm the successive cancellation of the leading deterministic and oscillatory contributions predicted by the analysis.

\subsection{Theory bounds and the controlling gap}
\label{sec:numerics-bounds}

We next test the gap-dependent scaling predicted by the analytic bounds.
The spiral Heisenberg loop considered above is isospectral and satisfies
\(\Delta(0)=\Delta_{\min}\), so it cannot distinguish dependence on the
endpoint gap from dependence on the minimum gap along the path. To vary
these two gap scales independently, we now consider a spin-\(1/2\) cone loop and modulate its field magnitude to make it non-isospectral:
\[
  |\mathbf B(s)|
  =
  |\mathbf B|\bigl(1-a\sin^2(\pi s)\bigr),
  \qquad
  a=0.4 .
\]
This deformation preserves the endpoint gap while reducing the gap in
the interior, allowing us to test the predicted dependence of the
single-evolution error, the Richardson residual, and the randomized
oscillatory bias on \(\Delta(0)\) and \(\Delta_{\min}\). Specifically, it lowers the interior minimum gap to
\(\Delta_{\min}=0.6\), while keeping the endpoint quantities
\(\Delta(0)=1\) and \(\|\dot H(0)\|\) fixed. Since the direction of the
field is unchanged, the Berry phase is also unchanged.

The leading \(O(T^{-1})\) single-evolution error is bounded in terms of
the worst-case gap along the entire path,
\[
  |\varphi|
  \lesssim
  \frac{\|\dot H\|_{\max}^2}{\Delta_{\min}^3T}.
\]
Because this estimate retains only the minimum gap and discards where and
over how much of the path the gap becomes small, the localized interior
dip makes the bound substantially looser than in the isospectral case.
Accordingly, although both the numerical error and the bound increase,
the bound increases much more strongly, as shown in
\cref{fig:bounds}(b). By contrast, the
one-Richardson residual is an endpoint-controlled boundary term,
\[
  \frac{\|\dot H(0)\|^2}{\Delta(0)^4T^2}.
\]
Thus the dipped and undipped data oscillate below the same
endpoint-controlled asymptotic scale in \cref{fig:bounds}(c), independent of the
interior gap; the bound proven in \cref{thm:richardson_m} differs from this curve only by the $O(1)$ weight factor $L_1(\alpha)$.

Finally, \cref{fig:bounds}(d) tests the randomized oscillatory bound.
Here we use uniform runtime randomization, whose characteristic function
has decay order \(M=1\). The resulting oscillatory bias scales as
\[
  C
  \frac{\|\dot H(0)\|^2}
       {\Delta(0)^4\Delta_{\min}T^3}.
\]
This \(T^{-3}\) term dominates the non-oscillatory \(O(T^{-4})\)
Richardson remainder, so the data exhibit the predicted oscillatory
scaling and are consistent with its \(\Delta_{\min}^{-1}\) dependence.
The constant \(C\) is determined by the Richardson parameter and the
runtime-randomization parameter for the uniform distribution, as derived
in \cref{sec:richardson,sec:randomization}. It is therefore fixed entirely
a priori, with no fitting parameter introduced.

\section{Berry Phase Estimation Algorithm}\label{sec:algorithm}

We now describe the algorithmic implementation of the cancellation and
extrapolation results developed above.  The main protocol is formulated in
terms of Hadamard-test estimation of overlap amplitudes.  This choice is
natural in the present setting, since the relevant phase information is
contained in the overlap amplitude
\[
  \langle\psi(0)|U_T(1)|\psi(0)\rangle
  =
  e^{-i\theta_D}e^{i\theta_B}\,z(1),
\]
whose real and imaginary parts can be estimated directly at each randomized
runtime.  Although this shot-based approach has sample complexity
\(O(\varepsilon^{-2})\) for estimating an amplitude to accuracy
\(\varepsilon\), it allows the statistical error from runtime randomization
to be treated separately from the deterministic adiabatic bias.  Moreover,
leakage only reduces the overlap signal multiplicatively, and therefore
does not change the asymptotic runtime scaling.

\subsection{Hadamard-test based algorithm}
\label{sec:hadamard}

\begin{figure}[t!]
  \centering

  \begin{quantikz}[row sep=0.6cm, column sep=0.45cm]
    \lstick{$\ket{0}$} & \gate{H} & \ctrl{1} & \gate{H} & \meter{} & \rstick{$b_{+,j,\ell}$} \\
    \lstick{$\ket{\psi(0)}$} & \qw & \gate{U_{T_{j,\ell}}(1)} & \qw & \qw & \qw
  \end{quantikz}

  \vspace{0.2cm}
  \small (a) Forward evolution

  \vspace{0.6cm}

  \begin{quantikz}[row sep=0.6cm, column sep=0.45cm]
    \lstick{$\ket{0}$} & \gate{H} & \ctrl{1} & \gate{H} & \meter{} & \rstick{$b_{-,j,\ell}$} \\
    \lstick{$\ket{\psi(0)}$} & \qw & \gate{\hat U_{T_{j,\ell}}(1)} & \qw & \qw & \qw
  \end{quantikz}

  \vspace{0.2cm}
  \small (b) Reverse evolution

  \vspace{0.6cm}

  \includegraphics[width=\linewidth]{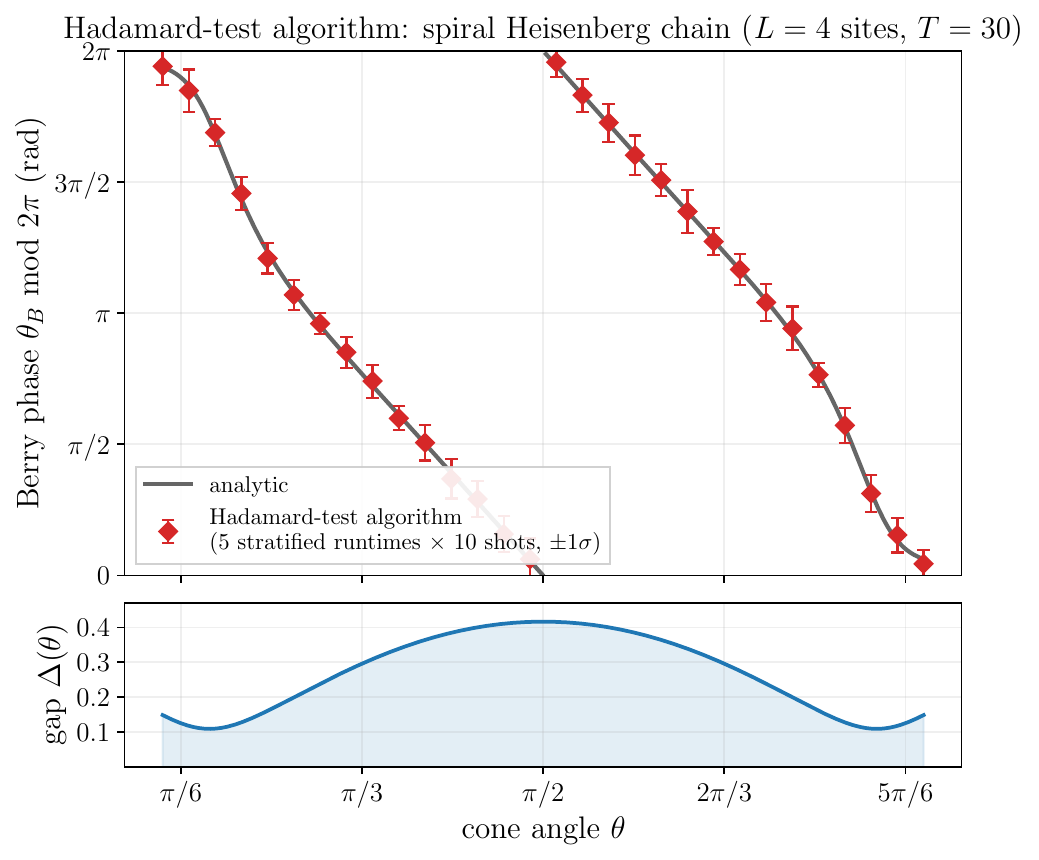}

  \vspace{0.2cm}
  \small (c) Runtime randomization and Richardson extrapolation

  \caption{Hadamard-test implementation of the randomized Berry phase estimation
    algorithm.  Panels (a) and (b) show the forward and reverse Hadamard
    tests used to estimate the overlap amplitudes at
    \(T_{j,\ell}=\alpha^\ell T X_j\), from which a mod-\(\pi\)
    forward--reverse Berry phase estimate is reconstructed.  Panel (c)
    shows the full algorithm on the same entangled spiral Heisenberg chain
    as in \cref{fig:spiral}: the Berry phase is estimated across the
    cone angle sweep at \(T=30\), using runtime randomization, two
    Richardson levels, and integer branch lifting.  Each point is the mean over \(24\) independent runs; within each run five stratified runtimes are used; at each runtime \(T_{j,\ell}\) the real and imaginary parts of the forward and reverse overlaps are each estimated from \(S=10\) Hadamard shots, the phases are lifted, and the Richardson combination is formed per runtime before the weighted average over runtimes.  Error bars show the single-run standard deviation
    from projective shot noise and runtime randomization; the lower strip
    shows the instantaneous gap \(\Delta(\theta)\).}
  \label{fig:randomizedHT}
\end{figure}

Let \(r\ge 0\) be the Richardson level, so that the target deterministic
bias order is \(T^{-(2r+2)}\).  Let \(\alpha>1\), and choose Richardson
weights \(w_{r,\ell}\), \(\ell=0,\dots,r\), satisfying
\[
  \sum_{\ell=0}^{r} w_{r,\ell}=1,
  \qquad
  \sum_{\ell=0}^{r} w_{r,\ell}\alpha^{-2q\ell}=0,
  \qquad
  q=1,\dots,r .
\]
Let \(\rho\) be a probability density supported on
\([1-\lambda,1+\lambda]\), with \(\lambda\in(0,1)\), whose weighted characteristic functions, defined as in \cref{sec:randomization} by $\chi_{\rho,k}(\xi):=\int_{1-\lambda}^{1+\lambda}x^{-k}e^{i\xi x}\rho(x)\,dx$, satisfy
\[
  |\chi_{\rho,k}(\xi)|
  \le
  C_{\rho,r}(1+|\xi|)^{-M},
  \qquad
  2\le k\le K,
\]
for some \(M>0\) with \(M\ge2r\) to achieve the target bias
\(O(T^{-(2r+2)})\), by \cref{thm:randomized_richardson}.
We also consider \(M=2r-1\) for \(r\ge1\) below to obtain an explicit
gap-dependent bound, with bias \(O(T^{-2r-1})\).
A compactly supported \(C^\infty\) bump density is one convenient choice (Example~3 of \cref{sec:randomization}), and satisfies this decay for every fixed $M$.

\medskip
\noindent\textbf{Step 1: Coarse branch resolution.}
We first obtain a coarse estimate \(\hat{\theta}_B^{\,\mathrm{coarse}}\) of
\(\theta_B\) over the full range \([0,2\pi)\), sufficient to identify the
correct \(\pi\)-branch in the subsequent fine estimates.  Since only
constant accuracy is required, this step can be implemented using Hadamard
tests at appropriately chosen runtimes.  Its cost is independent of the
target fine accuracy \(\varepsilon_B\), and is therefore subdominant in the
overall complexity.  The concrete construction of this coarse branch
estimate is described in the next subsection.

\medskip
\noindent\textbf{Step 2: Runtime randomization.}
Partition \([1-\lambda,1+\lambda]\) into \(N\) intervals \(I_1,\dots,I_N\) of equal
width \(2\lambda/N\), set \(p_j:=\int_{I_j}\rho\), and for each \(j=1,\dots,N\) sample
one runtime multiplier from the \(j\)-th stratum (stratified sampling, as in
\cref{sec:randomization}),
\[
  X_j\sim \rho(\,\cdot\mid I_j),
\]
and define the Richardson-node runtimes
\[
  T_{j,\ell}:=\alpha^\ell T X_j,
  \qquad
  \ell=0,\dots,r .
\]

\medskip
\noindent\textbf{Step 3: Hadamard-test sampling.}
For each pair \((j,\ell)\), perform Hadamard tests for the forward and
reverse propagators
\[
  U_{T_{j,\ell}}(1),
  \qquad
  \hat U_{T_{j,\ell}}(1).
\]
Using the appropriate control phases, estimate the real and imaginary parts
of the corresponding overlap amplitudes, each from \(S\) Hadamard shots
(\(4S\) shots per pair \((j,\ell)\)), giving estimates \(\hat z_{f}(T_{j,\ell})\) and
\(\hat z_{r}(T_{j,\ell})\) of \(\braket{\psi(0)|U_{T_{j,\ell}}(1)|\psi(0)}\) and
\(\braket{\psi(0)|\hat U_{T_{j,\ell}}(1)|\psi(0)}\).  From these estimates reconstruct
a forward--reverse Berry phase estimate
\[
  \tilde\theta_B(T_{j,\ell}):=\tfrac12\arg\bigl(\hat z_{f}(T_{j,\ell})\,\hat z_{r}(T_{j,\ell})\bigr)\pmod{\pi}.
\]

\medskip
\noindent\textbf{Step 4: Branch lifting.}
Using the coarse estimate \(\hat{\theta}_B^{\,\mathrm{coarse}}\), lift each
mod-\(\pi\) estimate to the unique representative in the interval
\[
  I :=
  \bigl(
    \hat{\theta}_B^{\,\mathrm{coarse}}-\pi/2,\,
    \hat{\theta}_B^{\,\mathrm{coarse}}+\pi/2
  \bigr).
\]
We denote the lifted value by
\[
  \theta_B^{\mathrm{lift}}(T_{j,\ell}).
\]

\medskip
\noindent\textbf{Step 5: Richardson extrapolation and averaging.}
For each sample \(j\), form the Richardson combination
\[
  \Theta_{R,j}(T)
  :=
  \sum_{\ell=0}^{r}
  w_{r,\ell}\,
  \theta_B^{\mathrm{lift}}(T_{j,\ell}).
\]
Finally output the weighted average over the strata,
\[
  \hat{\theta}
  :=
  \left(
  \sum_{j=1}^{N}p_j\,\Theta_{R,j}(T)
  \right)
  \pmod {2\pi}.
\]

\medskip
\noindent\textbf{Error and cost.}
We assume that a lower bound on \(\Delta_{\min}\) is known,
so that the runtimes below can be chosen; this is a standard assumption in adiabatic
quantum computation.
Conditioned on successful branch lifting, for decay order $M\ge 2r$ (e.g., the bump density) the deterministic bias is $O(T^{-(2r+2)})$, and the asymptotic runtime scaling is $T=O(\varepsilon_B^{-1/(2r+2)})$. However, because the gap dependence
of the non-oscillatory Richardson remainder is not explicit, we derive an
explicit gap-dependent sufficient scaling from the oscillatory term. For
\(r\geq1\), applying \cref{thm:randomized_richardson} with the minimal admissible decay order $M=2r-1$ makes the oscillatory term asymptotically dominant, yielding
\[
  T
  =
  O\!\left(
    \frac{
      \bigl(\|\dot H(0)\|\,\|\dot H(1)\|\bigr)^{1/(2r+1)}
    }{
      \Delta(0)^{4/(2r+1)}
      \Delta_{\min}^{(2r-1)/(2r+1)}
      \varepsilon_B^{1/(2r+1)}
    }
  \right).
\]
The statistical error has three sources: the Hadamard-test shot noise, the bias
of the phase of a finite-shot amplitude estimate, and the fluctuation from sampling the
runtimes. With \(S\) shots for each of the real and imaginary parts of the forward and
reverse overlaps at each runtime \(T_{j,\ell}\) (Step 3), the total number of shots is
\(4(r+1)NS\).
(i) Each phase estimate \(\tilde\theta_B(T_{j,\ell})\) carries shot noise
\(O(S^{-1/2})\), and the weighted average over the \(N\) strata reduces it to
\(O((NS)^{-1/2})\), so we require
\[
  NS=O\!\left(\varepsilon_B^{-2}\log\frac1\eta\right)
\]
to make it \(O(\varepsilon_B)\) with high probability, up to a constant
redistribution of the failure budget.
(ii) The phase of an \(S\)-shot amplitude estimate is biased by \(O(1/S)\): writing
\(\hat z=z+n\) with \(\mathbb E[n]=0\), one has
\(\arg\hat z-\arg z=\Im(n/z)-\tfrac12\Im\!\left[(n/z)^2\right]+\cdots\) with
\(\mathbb E[n^2]=(\Im^2z-\Re^2z)/S\neq0\). This bias is removed by none of the
cancellations above, so we take \(S=O(\varepsilon_B^{-1})\), hence
\(N=O(\varepsilon_B^{-1}\log(1/\eta))\).
(iii) By \cref{thm:randomized_richardson}, the fluctuation from the stratified runtimes
is \(O(T^{-1}N^{-3/2})\) up to path-dependent constants, which at
\(N=O(\varepsilon_B^{-1})\) is \(O(\varepsilon_B^{3/2}T^{-1})\ll\varepsilon_B\).
Altogether the total number of shots is \(O(\varepsilon_B^{-2}\log(1/\eta))\),
distributed over \(O(\varepsilon_B^{-1}\log(1/\eta))\) distinct runtimes.
For \(r\ge1\), with independent instead of stratified runtimes, the fluctuation in (iii) would be
\(O(T^{-2}N^{-1/2})\), and making it \(O(\varepsilon_B)\) would require
\(N=O(\varepsilon_B^{-2}T^{-4})\); with \(S=O(\varepsilon_B^{-1})\) and
\(T=O(\varepsilon_B^{-1/(2r+2)})\) the total number of shots would then be
\(NS=O(\varepsilon_B^{-2-(r-1)/(r+1)})\).

Finally, leakage does not change the deterministic cancellation mechanism.
In the Hadamard-test implementation, leakage reduces the overlap magnitude
and hence multiplies the phase-estimation noise by
\[
  (1-p_{\rm leak}^{\max})^{-1/2},
\]
where \(p_{\rm leak}^{\max}\) is the maximum leakage probability over the
runtimes used by the algorithm.  Since
\[
  p_{\rm leak}^{\max}=O(T^{-2}),
\]
we have
\[
  (1-p_{\rm leak}^{\max})^{-1/2}
  =
  1+O(T^{-2}).
\]
Thus leakage changes only the constants, not the asymptotic
\(\varepsilon_B\)-scaling, provided the algorithm is run in the adiabatic
regime where \(p_{\rm leak}^{\max}\ll 1\).

Moreover, the protocol does not require the exact ground state as input. If
\(\ket{\psi(0)}\) is replaced by a guiding state with preparation infidelity
\(p_{\mathrm{prep}}\), \cref{thm:prep_supp} bounds the extra bias, and the same choice
of \(M\) that fixes the runtime scaling also fixes the admissible infidelity. At
\(M=2r-1\), where the gap dependence above is explicit, the condition reads
\(p_{\mathrm{prep}}\lesssim(\Delta_{\min}T)^{2r-1}\varepsilon_B/(C_{\mu,K}L_{r,0})
\propto\varepsilon_B^{2/(2r+1)}\), a relaxation of the \(p_{\mathrm{prep}}
\lesssim\varepsilon_B\) that holds without randomization; at \(M\ge2r+2\) the
right-hand side is independent of \(\varepsilon_B\) altogether, though its value then
inherits the unspecified constant of the non-oscillatory remainder.
In either case, interference with the excited components gives the
overlap-magnitude lower bound \(1-2p_{\mathrm{prep}}-O(T^{-1})\), so the
shot-noise amplification is bounded by
\((1-2p_{\mathrm{prep}}-O(T^{-1}))^{-1}\) for sufficiently large \(T\).
For \(p_{\mathrm{prep}}\le p_*<1/2\) with fixed \(p_*\), this is absorbed
by a constant increase of \(S\).

\subsection{Branch lifting and topological discrimination}
\label{sec:integer_branch_lifting}

The forward--reverse estimator in \cref{eq:theta_est} determines
\(\theta_B\) only modulo \(\pi\).  Before applying Richardson extrapolation
or runtime randomization, we therefore need a coarse mod-\(2\pi\) estimate
to choose the correct \(\pi\)-branch.  This preprocessing step only requires
constant accuracy; the final high-precision scaling is supplied by the
branch-lifted forward--reverse Richardson and randomized estimator. 

For tasks that require only changes in the Berry phase, the global
branch-lifting procedure may be unnecessary; a simpler local
branch-tracking scheme can be used instead. The situation is different
when the Berry phase itself is quantized. For example, in a
one-dimensional inversion-symmetric band, the Berry phase, or Zak phase,
can take only the values \(0\) and \(\pi\) modulo \(2\pi\). Since these two
values are identical modulo \(\pi\), the forward--reverse estimator cannot
distinguish them. The branch-resolution procedure described below then
becomes the main estimator. 

The integer-runtime construction cancels the
dynamical phase and the chosen non-oscillatory adiabatic terms, although
the oscillatory boundary terms generally remain. Define the circular distance modulo \(L\) by
\[
  d_L(x,y):=\min_{k\in\mathbb Z}|x-y+kL|.
\]
Concretely, suppose that a coarse estimate \(\theta_{\rm br}\) satisfies
\[
  d_{2\pi}(\theta_{\rm br},\theta_B)<\eta_c
\]
and that each fine forward--reverse estimate satisfies
\[
  d_{\pi}\bigl(\tilde\theta_B(T_\ell),\theta_B\bigr)<\eta_f.
\]
If
\[
  \eta_c+\eta_f<\frac{\pi}{2},
\]
then each fine estimate has a unique representative in
\[
  I_{\rm br}
  =
  \left(
    \theta_{\rm br}-\frac{\pi}{2},
    \theta_{\rm br}+\frac{\pi}{2}
  \right),
\]
and this representative lies on the correct branch.  After this lifting
step, Richardson extrapolation and runtime randomization can be applied as
ordinary real linear operations.

We now construct such a coarse estimate.  We use the runtime-scaling
branch-resolution framework of
Ref.~\cite{HayakawaSakamotoKiumi2025}.  In the nearby-runtime version, the
relative runtime is chosen close to one using a Hamiltonian-norm bound, so
that the branch information is extracted from a small phase difference.
Here we instead use fixed integer runtime ratios.  This avoids a
near-degenerate phase comparison depending on \(T\) and \(H_{\max}\), and
the same integer conditions also cancel the leading non-oscillatory
adiabatic error.

Let
\[
  \vartheta_+(T)
  :=
  -\theta_D(T)+\theta_B+\varphi(T)
  \pmod {2\pi}
\]
be the phase of the forward overlap amplitude.
Here $\theta_D(T)$ is exactly linear in \(T\), while by \cref{lem:phase_error},
\[
  \varphi(T)
  =
  \frac{\varphi_1}{T}
  +
  \sum_{k=2}^{K}
  \frac{\varphi_k+\varphi_k^{(T)}}{T^k}
  +
  O(T^{-K-1}).
\]

Choose positive runtime ratios \(q_1,\ldots,q_m\) and integer coefficients
\(n_1,\ldots,n_m\), and define
\[
  \Theta_{\boldsymbol n}(T)
  :=
  \sum_{j=1}^{m}n_j\vartheta_+(q_jT)
  \pmod {2\pi}.
\]
Because the coefficients are integers, \(\Theta_{\boldsymbol n}(T)\) is
well defined modulo \(2\pi\) without unwrapping the individual overlap phases.
We impose
\begin{align*}
  \sum_{j=1}^{m} n_j &= 1,
  \\
  \sum_{j=1}^{m} n_j q_j &= 0,
  \\
  \sum_{j=1}^{m} n_j q_j^{-r} &= 0,
  \qquad r=1,\ldots,p .
\end{align*}
These conditions keep one copy of \(\theta_B\), cancel the dynamical phase,
and cancel the non-oscillatory adiabatic terms
\(T^{-1},\ldots,T^{-p}\).  Substitution gives
\begin{align*}
  \Theta_{\boldsymbol n}&(T)
  =
  \theta_B
  +
  \sum_{k=p+1}^{K}
  \frac{\varphi_k}{T^k}
  \sum_{j=1}^{m} n_j q_j^{-k}
  \nonumber\\
  &
  +
  \sum_{k=2}^{K}
  \frac{1}{T^k}
  \sum_{j=1}^{m}
  n_j q_j^{-k}\varphi_k^{(q_jT)}
  +
  O(T^{-K-1})
  \pmod {2\pi}.
\end{align*}
Thus the non-oscillatory residual starts at order \(T^{-p-1}\).  The
oscillatory boundary terms generally remain, but branch lifting only needs
constant accuracy.

As a simple low-order instance, take
\[
  (q_1,q_2,q_3)=(1,2,4),
  \quad
  (n_1,n_2,n_3)=(-2,5,-2).
\]
Then the above conditions hold with \(p=1\).  Hence
\[
  \Theta_I(T)
  :=
  -2\vartheta_+(T)+5\vartheta_+(2T)-2\vartheta_+(4T)
  \pmod {2\pi}
\]
satisfies
\[
  \Theta_I(T)=\theta_B+O(T^{-2})\pmod {2\pi}.
\]
Thus the integer combination improves the asymptotic order of the coarse
branch estimate from \(T^{-1}\) to \(T^{-2}\).

This improvement comes with a constant overhead.  One evaluation of
\(\Theta_I(T)\) uses the three runtimes \(T,2T,4T\), whose total evolution
time is \(7T\).  In addition, the integer coefficients amplify statistical
errors: for independent phase estimates with comparable variance, the
standard deviation is multiplied by
\[
  \sqrt{(-2)^2+5^2+(-2)^2}=\sqrt{33},
\]
while a deterministic worst-case error bound is multiplied by
\[
  |{-2}|+|5|+|{-2}|=9.
\]
Since branch lifting requires only constant accuracy and is performed at a
relatively small preprocessing runtime, the additional runtimes and larger
integer coefficients may outweigh the asymptotic \(T^{-2}\) improvement in
finite-\(T\) regimes.

In this work we use the construction only at the level of asymptotic branch
selection. Branch lifting assumes that all fine estimates entering the same
Richardson or randomized combination are lifted to the same local branch; this
can fail if the runtime variation is too large or the fine-estimator error is
not uniformly controlled over the runtimes used. A certified finite-precision
choice of \(T_{\rm br}\) and a guarantee of branch consistency would both
require a non-asymptotic adiabatic bound, including phase error and leakage,
as in Ref.~\cite{JansenRuskaiSeiler2007}. We instead focus on the asymptotic
cancellation order and precision scaling, leaving this certification to future work.

\section{Conclusion and Outlook}\label{sec:discussion}
\paragraph{\textbf{Summary of Results}}
We have developed a systematic analysis of finite-runtime adiabatic errors in Berry-phase estimation and shown how their structure can be exploited algorithmically. \Cref{lem:phase_error} separates the phase error into non-oscillatory bulk and oscillatory boundary contributions. \Cref{thm:cancellation} cancels all odd-order bulk terms by the forward--reverse combination, \cref{thm:richardson_m} removes the even-order bulk terms through \(T^{-2r}\) by Richardson extrapolation, and \cref{thm:randomized_richardson} suppresses the remaining oscillatory boundary terms by runtime randomization, reducing the bias to \(O(T^{-(2r+2)})\) with a subleading fluctuation \(O(T^{-1}N^{-3/2})\). \Cref{lem:prep_bias,thm:prep_supp} show that the zeroth-order oscillatory bias from an imperfect guiding state, which no deterministic step removes, is suppressed by the same randomization, so that the admissible preparation infidelity no longer depends on \(\varepsilon_B\). These results yield a randomized Hadamard-test algorithm over the full range \([0,2\pi)\) with coherent runtime \(O(\varepsilon_B^{-1/(2r+2)})\) at the standard \(O(\varepsilon_B^{-2})\) sample complexity, and the numerical results confirm the predicted cancellation hierarchy.
  
\paragraph{\textbf{Comparison with Energy Estimation}}
Ground-energy estimation and Berry-phase estimation address fundamentally different computational problems: the former concerns a spectral property of a fixed Hamiltonian, whereas the latter probes the geometry traced by an eigenstate along a closed path of Hamiltonians and therefore involves additional finite-runtime adiabatic errors. When a suitable guiding state is supplied as input, both problems are BQP-complete~\cite{HayakawaSakamotoKiumi2025}. In both settings, however, preparing a state with sufficient overlap with the relevant ground state can be computationally difficult and may dominate the practical cost. Berry-phase estimation may nevertheless be more favorable in some applications because the quantity of interest is determined by the path traversed by the eigenstate rather than by the energy of a particular target state. The loop may therefore begin at a reference Hamiltonian with a simple, efficiently preparable ground state while still probing nontrivial eigenstate geometry along the path. Identifying physically relevant settings in which this feature reduces the overall state-preparation cost is an important direction for future work.

\paragraph{\textbf{Extensions to Broader Quantum Geometry}}

The present work focuses on the adiabatic \(U(1)\) Berry phase of a nondegenerate eigenstate transported around a closed gapped loop, but geometric phases extend far beyond this setting. Nonadiabatic cyclic evolution gives rise to the Aharonov--Anandan phase~\cite{AharonovAnandan1987}, while noncyclic evolution is described by open-path geometric phases~\cite{SamuelBhandari1988}. Mixed states and open quantum systems admit several notions of geometric phase and holonomy~\cite{Sjoqvist2000,Tong2004,Carollo2003,SarandyLidar2006,Whitney2005,Uhlmann1986}. These extensions suggest that geometric error cancellation may provide a broadly applicable principle for controlling phase errors in driven and dissipative quantum dynamics. For degenerate eigenspaces, geometric transport is described by the non-Abelian Wilczek--Zee holonomy~\cite{WilczekZee1984}, which forms the basis of adiabatic holonomic quantum computation~\cite{Zanardi1999,DuanCiracZoller2001}. In this setting, finite-runtime adiabatic errors impose a direct tradeoff between gate speed and accuracy; suppressing them would allow faster holonomic gates at fixed accuracy and could overcome a central limitation of conventional adiabatic holonomic computation, complementing nonadiabatic holonomic approaches~\cite{Sjoqvist2012}. Berry phases and curvatures also underlie a wide range of physical observables and computational tasks, including Chern numbers and quantized Hall responses~\cite{TKNN1982,FukuiHatsugaiSuzuki2005}, electric polarization~\cite{KingSmith1993,Resta2000}, quantized charge pumping~\cite{Thouless1983}, and anomalous transport phenomena~\cite{XiaoChangNiu2010,Nagaosa2010}. Extending geometric error cancellation to these settings could enable more efficient quantum algorithms for topological invariants, transport coefficients, and many-body geometric responses. More broadly, the Berry curvature and quantum metric constitute the imaginary and real parts of the quantum geometric tensor~\cite{ProvostVallee1980,Kolodrubetz2017}, which connects eigenstate geometry to distinguishability, nonadiabatic response, localization, superfluid weight, flat-band physics, and quantum metrology~\cite{Peotta2015,OzawaGoldman2018,Torma2022}. Establishing error-cancellation principles for the full quantum geometric tensor could therefore open a unified route toward quantum algorithms for topology, many-body response, geometric control, and quantum information geometry.

\paragraph{\textbf{Robustness Against Other Sources of Error}}
In this work, we have focused on the finite-runtime adiabatic error of the ideal continuous-time evolution. A digital implementation introduces an additional product-formula error because the time-ordered evolution must be approximated by a finite sequence of gates; this error is controlled by the time step and the commutator structure of the Hamiltonian terms~\cite{ChildsSuTranWiebeZhu2021}. Whether the forward--reverse construction also suppresses this error depends on how the corresponding effective error Hamiltonian transforms under \(H\mapsto -H\), and generic Trotter errors need not cancel. Hardware implementations are further affected by control errors, decoherence, and measurement noise. Previous theoretical and experimental studies have shown that geometric phases can be insensitive to certain slowly varying or stochastic control fluctuations~\cite{DeChiaraPalma2003,Filipp2009,ZhuZanardi2005}. This robustness is not universal, however, and geometric protocols can have noise sensitivities comparable to those of dynamical protocols under common conditions~\cite{Colmenar2022}. An important direction for future work is therefore to identify the discretization errors and noise channels that are canceled or suppressed by the forward--reverse symmetry and to quantify their combined effect on the Berry-phase estimator.

\paragraph{\textbf{Toward an Optimal QPE-Based Algorithm}}
Our analysis is formulated for Hadamard-test readout, whose sample complexity \(O(\varepsilon_B^{-2})\) is not Heisenberg-limited. Quantum phase estimation would reach the Heisenberg limit, but it returns an eigenphase of \(U_T(1)\) rather than the phase of the overlap \(\langle\psi(0)|U_T(1)|\psi(0)\rangle\); the two need not coincide, and leakage and imperfect preparation then appear as failure probabilities rather than biases. Extending the cancellation principles established here to the eigenphase is a natural route toward an algorithm whose total cost is simultaneously optimal in the gap and in \(\varepsilon_B\), which we leave to future work.

\paragraph{\textbf{Toward Practical Quantum Advantage}}

An important direction for future work is to combine the present approach with recently proposed algorithms tailored to the early fault-tolerant regime~\cite{Granet2024,KiumiKoczor2025,PagesKiumiMorohoshi2026,HayataKikuchi2026}, as well as with Hamiltonian-simulation methods based on randomization and extrapolation, including multi-product-formula approaches~\cite{CarreraVazquezEggerOchsnerWoerner2023,Watson2025}. Our results suggest that Berry-phase estimation is particularly well suited to this setting: the forward--reverse structure suppresses the dominant adiabatic error, Richardson extrapolation removes the leading non-oscillatory residual, and the large shot budgets available in sampling-based implementations can simultaneously support runtime randomization and statistical estimation.

This raises the possibility that Berry phases, and perhaps geometric quantities more broadly, may provide favorable targets for practical quantum advantage. A meaningful assessment, however, requires an end-to-end resource analysis that jointly accounts for initial-state preparation, Hamiltonian-simulation cost, discretization error, sampling overhead, and hardware noise. Such an analysis should go beyond asymptotic scaling and establish explicit finite-runtime error bounds, including constants and confidence guarantees, so that the different error sources can be combined into a quantitative resource estimate. Because these contributions are strongly interdependent, combining geometric error cancellation with noise-robust simulation and efficiently preparable reference states may reduce the overall quantum resources far more substantially than suggested by the adiabatic-runtime improvement alone. Establishing such rigorous end-to-end resource guarantees in concrete applications is therefore an important direction for future work.

\section*{Data Availability}
The simulation code used in this work is available at \url{https://github.com/CKiumi/berry-phase-cancellation}.

\section*{Acknowledgements}
The author thanks Ryu Hayakawa and Kazuki Sakamoto for valuable discussions. This work was supported by JST PRESTO, Grant Number JPMJPR25F1; JSPS KAKENHI, Grant Number JP26K17052; and JST ASPIRE, Grant Number JPMJAP2319. 

\appendix
\crefalias{section}{appendix}
\crefalias{subsection}{appendix}

\section{Adiabatic Perturbation Theory}\label{app:APT}
In this appendix we review the adiabatic perturbation theory (APT) developed by Rigolin, Ortiz, and Ponce~\cite{RigolinOrtizPonce2008}, adapted to the notation of the main text.
The APT provides systematic $1/T$ corrections to the adiabatic approximation via recursion relations that reduce each order to explicit quadratures, without solving any differential equations.

Let $H(s)$ be a smooth family of Hamiltonians for $s\in[0,1]$ with $H(0)=H(1)$.
For each $s$, let $E_n(s)$ and $\ket{n(s)}$ denote the $n$-th instantaneous eigenvalue and eigenstate, with the ground state $\ket{\psi(s)}=\ket{0(s)}$ assumed non-degenerate for all~$s$.
We define
\[
  \Delta_{nm}(s) := E_n(s) - E_m(s),\quad
  M_{nm}(s) := \braket{n(s)|\dot{m}(s)},
\]
where the dot denotes $d/ds$.
Note that $M_{nn}(s)$ is purely imaginary (since $\braket{n(s)|n(s)}=1$ implies $\Re(M_{nn})=0$), and $M_{nm}(s) = -M_{mn}^*(s)$ for $n\neq m$.
We also define the instantaneous dynamical and Berry phases for each eigenstate:
\[
  \theta_D^{(m)}(s) := T\int_0^s E_m(\sigma)\,d\sigma,\quad
  \theta_B^{(m)}(s) := i\int_0^s M_{mm}(\sigma)\,d\sigma,
\]
and the relative Berry phase $\beta_{mn}(s) := \theta_B^{(m)}(s) - \theta_B^{(n)}(s)$. The evolved state $\ket{\Psi(s)} := U_T(s)\ket{\Psi(0)}$ of an arbitrary initial state $\ket{\Psi(0)}=\sum_m b_m(0)\ket{m(0)}$ is expanded in the instantaneous eigenbasis as
\[
  \ket{\Psi(s)} = \sum_n a_n(s)\ket{n(s)}.
\]
As in the main text, we assume for notational simplicity that the entire instantaneous spectrum is non-degenerate for all $s\in[0,1]$, so that every gap $\Delta_{nm}(s)$ appearing in the recursion below is nonzero. As noted in Ref.~\cite{RigolinOrtizPonce2008}, the expansion itself only requires that the initially occupied level---here the ground state---never crosses any other level: for the ground-state initial condition $b_n(0)=\delta_{n0}$, the low-order coefficients involve only the gaps $\Delta_{n0}(s)$ to the ground state, so degeneracies or level crossings within the excited spectrum can be accommodated at the cost of more cumbersome notation.
Substituting into the Schr\"odinger equation $i\ket{\dot{\Psi}(s)} = T H(s)\ket{\Psi(s)}$ and projecting onto $\bra{m(s)}$ yields the coupled equations
\begin{equation}\label{eq:am_eom}
  \dot{a}_m(s) = -iT E_m(s)\,a_m(s) - \sum_n a_n(s)\,M_{mn}(s).
\end{equation}
To remove the explicit dynamical and Berry phases, we define
\begin{equation*}
  a_m(s)=e^{-i\theta_D^{(m)}(s)}e^{i\theta_B^{(m)}(s)}\,b_m(s).
\end{equation*}
Substituting this into~\eqref{eq:am_eom} yields
\begin{equation*}
  \dot b_m(s)
  =
  -\sum_{n\neq m}
  e^{-iT\omega_{nm}(s)}e^{i\beta_{nm}(s)}M_{mn}(s)\,b_n(s),
\end{equation*}
where
\[
  \omega_{nm}(s):=\int_0^s \Delta_{nm}(\sigma)\,d\sigma.
\]
In terms of the main-text notation, $\Delta_n(s)=\Delta_{n0}(s)$, $M_n(s)=M_{n0}(s)$, $\beta_n(s)=\beta_{n0}(s)$, $\theta_D(s)=\theta_D^{(0)}(s)$, $\theta_B(s)=\theta_B^{(0)}(s)$, and $\omega_n=\omega_{n0}(1)$.
Thus, after extracting the dynamical and Berry phases, the remaining amplitudes \(b_m(s)\) are coupled only through the off-diagonal adiabatic couplings \(M_{mn}(s)\), modulated by rapidly oscillating phase factors. This motivates the adiabatic perturbation expansion.

We therefore expand the state as
\begin{equation*}
  \ket{\Psi(s)}=\sum_{p=0}^\infty T^{-p}\ket{\Psi^{(p)}(s)},
\end{equation*}
with
\begin{equation}\label{eq:Psi_p}
  \ket{\Psi^{(p)}(s)}
  =
  \sum_n
  e^{-i\theta_D^{(n)}(s)}e^{i\theta_B^{(n)}(s)}
  b_n^{(p)}(s)\ket{n(s)},
\end{equation}
where
\[
  b_n^{(p)}(s)
  =
  \sum_m
  e^{iT\omega_{nm}(s)}e^{-i\beta_{nm}(s)}\,b_{nm}^{(p)}(s).
\]
Equivalently,
\[
  \ket{\Psi(s)}
  =
  \sum_{n,m}\sum_{p=0}^{\infty}
  T^{-p}
  e^{-i\theta_D^{(m)}(s)}
  e^{i\theta_B^{(m)}(s)}
  b_{nm}^{(p)}(s)\ket{n(s)}.
\]
Substituting this ansatz into the Schr\"odinger equation, projecting onto \(\bra{n(s)}\), and matching powers of \(1/T\), we obtain
\begin{widetext}
\begin{equation}\label{eq:APT_recursion}
  i\Delta_{nm}(s)\,b_{nm}^{(p+1)}(s)
  +\dot b_{nm}^{(p)}(s)
  +W_{nm}(s)b_{nm}^{(p)}(s)
  +\sum_{k\neq n} M_{nk}(s)b_{km}^{(p)}(s)=0,
\end{equation}
\end{widetext}
where
\[
  W_{nm}(s):=M_{nn}(s)-M_{mm}(s).
\]
This recursion determines the coefficient \(b_{nm}^{(p+1)}(s)\) at the next order in terms of the lower-order amplitudes. In this way, APT generates the adiabatic expansion systematically, with corrections organized in successive powers of \(1/T\).
The initial conditions at each order are fixed by requiring the expansion to reproduce the initial state $\ket{\Psi(0)}$ at $s=0$, which gives
\begin{equation}
  \label{eq:APT_IC}
  \begin{aligned}
    b_{nm}^{(0)}(0)
    &= b_n(0)\,\delta_{nm},\\
    b_n^{(p)}(0)
    &= \sum_{m\neq n} b_{nm}^{(p)}(0)
     + b_{nn}^{(p)}(0)
     = 0,
    \qquad p\geq 1.
  \end{aligned}
  \end{equation}
For a system starting in the ground state, $b_n(0)=\delta_{n0}$.

\noindent\textbf{Zeroth order.}

At zeroth order, the recursion relation with $p=-1$ forces the off-diagonal coefficients to vanish, $b_{nm}^{(0)}(s)=0$ for $n\neq m$, and the diagonal part of the recursion at $p=0$ then gives $\dot b_{nn}^{(0)}(s)=0$. Together with the initial conditions, this yields
\begin{equation*}
  b_{nm}^{(0)}(s) = b_n(0)\,\delta_{nm}.
\end{equation*}
The zeroth-order solution yields the standard
adiabatic approximation:
\[
  \ket{\Psi^{(0)}(s)} = \sum_n 
  e^{-i\theta_D^{(n)}(s)}\,
  e^{i\theta_B^{(n)}(s)}\,b_n(0)\,\ket{n(s)}.
\]

\noindent\textbf{First order.}

\smallskip
\noindent\textit{Off-diagonal coefficients ($n\neq m$):}

Setting $p=0$ in the recursion~\eqref{eq:APT_recursion} and using $\dot{b}_{nm}^{(0)}=0$, $W_{nm}\,b_{nm}^{(0)}=0$ (since $b_{nm}^{(0)}\propto\delta_{nm}$ and $W_{nn}=0$), we obtain
\begin{equation}\label{eq:b1_offdiag}
  b_{nm}^{(1)}(s) = \frac{iM_{nm}(s)}{\Delta_{nm}(s)}\,b_m(0),\qquad n\neq m.
\end{equation}

\smallskip
\noindent\textit{Diagonal coefficients ($n=m$):}

When $n=m$, the term $i\Delta_{nn}(s)\,b_{nn}^{(p+1)}(s)$ vanishes identically, so the recursion at order $p=0$ becomes an identity.
Instead, one uses the recursion at $p=1$ with $n=m$:
\begin{equation*}
  \dot{b}_{nn}^{(1)}(s) + \sum_{k\neq n} M_{nk}(s)\,b_{kn}^{(1)}(s) = 0.
\end{equation*}
Integrating with~\eqref{eq:b1_offdiag} and using $M_{nk} = -M_{kn}^*$ gives
\begin{equation}\label{eq:b1_diag}
  b_{nn}^{(1)}(s) = i\sum_{m\neq n} J_{mn}(s)\,b_n(0) + b_{nn}^{(1)}(0),
\end{equation}
where
\begin{equation*}
  J_{mn}(s) := \int_0^s \frac{|M_{mn}(\sigma)|^2}{\Delta_{mn}(\sigma)}\,d\sigma,
\end{equation*}
and the initial condition~\eqref{eq:APT_IC} determines
\begin{equation}\label{eq:b1_diag_IC}
  b_{nn}^{(1)}(0) = -i\sum_{m\neq n}\frac{M_{nm}(0)}{\Delta_{nm}(0)}\,b_m(0).
\end{equation}

\smallskip
\noindent\textit{First-order correction to the state:}

Assembling these results, the first-order correction is
\begin{widetext}
\begin{align}\label{eq:Psi1}
  \ket{\Psi^{(1)}(s)}
  &= i\sum_{n\neq m}\bigg[
  e^{-i\theta_D^{(n)}(s)}\,e^{i\theta_B^{(n)}(s)}\,
  J_{mn}(s)\,b_n(0) \notag\\
  &\qquad\qquad
  + \left(
  \frac{e^{-i\theta_D^{(m)}(s)}\,e^{i\theta_B^{(m)}(s)}\,
  M_{nm}(s)}{\Delta_{nm}(s)}
  - \frac{e^{-i\theta_D^{(n)}(s)}\,e^{i\theta_B^{(n)}(s)}\,
  M_{nm}(0)}{\Delta_{nm}(0)}
  \right)b_m(0)
  \bigg]\ket{n(s)}.
\end{align}
\end{widetext}
The term proportional to \(J_{mn}(s)\) gives a correction to the amplitude of each initial component, which is absent in the standard non-APT treatment. The remaining two terms describe couplings between different instantaneous eigenstates: the term proportional to \(M_{nm}(s)/\Delta_{nm}(s)\) is non-oscillatory, while the term proportional to \(M_{nm}(0)/\Delta_{nm}(0)\) carries the dynamical phase of the \(n\)-th level and is oscillatory.

\smallskip
\noindent\textbf{Second order.}

\smallskip
\noindent\textit{Off-diagonal coefficients ($n\neq m$):}

Setting $p=1$ and $n\neq m$ in the recursion~\eqref{eq:APT_recursion}, using the first-order results, and separating the $k=m$ term from the sum yields
\begin{widetext}\[b_{nm}^{(2)} (s)=\frac{i}{\Delta _{nm}( s)}\left[ i\frac{d}{ds}\left(\frac{M_{nm}( s)}{\Delta _{nm}( s)}\right) b_{m} (0)+W_{nm} b_{nm}^{(1)} (s)+M_{nm}( s) b_{mm}^{(1)} (s)+\sum _{k\neq n,m} M_{nk}( s) b_{km}^{(1)} (s)\right].\]\end{widetext}
The right-hand side still contains the first-order coefficients. We now substitute the off-diagonal solution~\eqref{eq:b1_offdiag} and the diagonal solution~\eqref{eq:b1_diag} together with its initial condition~\eqref{eq:b1_diag_IC}, and collect the terms proportional to \(b_m(0)\) separately from those generated by the remaining initial amplitudes \(b_k(0)\) with \(k\neq m\). The first group consists of non-oscillatory contributions built from instantaneous quantities at \(s\), whereas the second group carries the endpoint factors \(M_{mk}(0)/\Delta_{mk}(0)\) inherited from \(b_{mm}^{(1)}(0)\), which are the source of the oscillatory boundary terms in the phase-error analysis below. Simplifying in this way gives
\begin{widetext}
\begin{align}\label{eq:b2_offdiag}
  b_{nm}^{(2)}(s)
  &= -\frac{1}{\Delta_{nm}(s)}\Biggl[
    \frac{d}{ds}\!\left(\frac{M_{nm}(s)}{\Delta_{nm}(s)}\right)
    + \frac{W_{nm}(s)\,M_{nm}(s)}{\Delta_{nm}(s)}+M_{nm}(s)\sum_{k\neq m}J_{km}(s)
    + \sum_{\substack{k\neq n\\k\neq m}}\frac{M_{nk}(s)\,M_{km}(s)}{\Delta_{km}(s)}
  \Biggr]b_m(0) \notag\\
  &\quad + \frac{M_{nm}(s)}{\Delta_{nm}(s)}\sum_{k\neq m}\frac{M_{mk}(0)}{\Delta_{mk}(0)}\,b_k(0).
\end{align}
\end{widetext}

\smallskip
\noindent\textit{Diagonal coefficients ($n=m$):}
Setting $p=2$ and $n=m$ in the recursion gives
\begin{equation}\label{eq:b2_diag}
  b_{nn}^{(2)}(s) = -\int_0^s \sum_{m\neq n}M_{nm}(\sigma)\,b_{mn}^{(2)}(\sigma)\,d\sigma + b_{nn}^{(2)}(0),
\end{equation}
where $b_{nn}^{(2)}(0) = -\sum_{m\neq n}b_{nm}^{(2)}(0)$ from~\eqref{eq:APT_IC}, and $b_{nm}^{(2)}(0)$ is obtained from~\eqref{eq:b2_offdiag} at $s=0$.

\smallskip
\noindent\textit{Second-order correction to the state:}
The full second-order correction is
\begin{equation*}
  \ket{\Psi^{(2)}(s)} = \sum_{n,m} e^{-i\theta_D^{(m)}(s)}\,e^{i\theta_B^{(m)}(s)}\,b_{nm}^{(2)}(s)\,\ket{n(s)},
\end{equation*}
with the coefficients given by~\eqref{eq:b2_offdiag} and~\eqref{eq:b2_diag}.
A key feature of the APT is that all corrections at order $p$ are determined from the coefficients at order $p-1$ by algebraic operations and quadratures, requiring only the instantaneous eigenvalues $E_n(s)$ and eigenstates $\ket{n(s)}$ of~$H(s)$.

Finally, we record the proof of the leading-order leakage expansion.
\begin{proof}[Proof of \cref{prop:leakage_error}]
For the ground-state initial condition $b_m(0)=\delta_{m0}$, the first-order result~\eqref{eq:Psi1} gives the amplitude on the excited state $\ket{n(1)}$ at the end of the loop as
\begin{align*}
  \frac{i}{T}\Bigl(
  &e^{-i\theta_D^{(0)}(1)}e^{i\theta_B^{(0)}(1)}
  \frac{M_{n0}(1)}{\Delta_{n0}(1)}
  \\
  &\quad
  -
  e^{-i\theta_D^{(n)}(1)}e^{i\theta_B^{(n)}(1)}
  \frac{M_{n0}(0)}{\Delta_{n0}(0)}
  \Bigr)
  +O(T^{-2}).
\end{align*}
Taking the squared modulus and summing over $n\neq0$ yields
\begin{align*}
  p_{\mathrm{leak}}(1)
  &=
  \frac{1}{T^2}\sum_{n\neq0}
  \left|
  \frac{M_{n0}(1)}{\Delta_{n0}(1)}
  -
  e^{-iT\omega_{n0}(1)}e^{i\beta_{n0}(1)}
  \frac{M_{n0}(0)}{\Delta_{n0}(0)}
  \right|^2
  \\
  &\quad
  +O(T^{-3}),
\end{align*}
which is precisely \cref{eq:pleak_expansion}.
\end{proof}

\section{APT analysis of the adiabatic phase error}\label{app:phase_APT}

\subsection{Exact initial state}\label{app:phase_exact}

In this subsection we use the adiabatic perturbation theory reviewed in \cref{app:APT} to derive the phase error expansion of \cref{lem:phase_error}. The scalar adiabatic error amplitude $z(s)=e^{i\theta_D(s)}e^{-i\theta_B(s)}\braket{\psi(s)|U_T(s)|\psi(0)}$ is obtained directly from the APT ansatz \eqref{eq:Psi_p}: projecting $\ket{\Psi(s)}$ onto the ground state $\bra{0(s)}$ and removing the ground-state dynamical and Berry phases gives
\begin{equation}\label{eq:z_APT}
  z(s)
  =
  \sum_{m}\sum_{p\ge0}
  T^{-p}\,
  e^{-iT\omega_{m0}(s)}\,
  e^{i\beta_{m0}(s)}\,
  b_{0m}^{(p)}(s),
\end{equation}
where the $b_{0m}^{(p)}(s)$ are the $T$-independent coefficients of the recursion \eqref{eq:APT_recursion}, and we used the ground-state initial condition $b_n(0)=\delta_{n0}$. The $m=0$ terms are non-oscillatory and start from $b_{00}^{(0)}(s)=1$, since the zeroth-order amplitudes are $b_{0m}^{(0)}=\delta_{0m}$. Writing $z=1+\delta$, the error amplitude splits into non-oscillatory ($m=0$) and oscillatory ($m\neq0$) parts, $\delta=\delta_{\mathrm{bulk}}+\delta_{\mathrm{osc}}$, where
\begin{align*}
  \delta_{\mathrm{bulk}}(s)
  &=
  \sum_{p\geq1}
  \frac{b_{00}^{(p)}(s)}{T^p},
  \\
  \delta_{\mathrm{osc}}(s)
  &=
  \sum_{p\geq1}
  \frac{1}{T^p}
  \sum_{m\neq0}
  e^{-iT\omega_{m0}(s)}
  e^{i\beta_{m0}(s)}
  b_{0m}^{(p)}(s).
\end{align*}
The phase error $\varphi(s):=\arg z(s)$ is the imaginary part of the principal logarithm,
\begin{equation}\label{eq:phi_taylor}
  \varphi(s)
  =
  \Im\log\bigl(1+\delta(s)\bigr)
  =
  \sum_{k=1}^{\infty}\frac{(-1)^{k-1}}{k}\,\Im\bigl(\delta(s)^{k}\bigr),
\end{equation}
convergent for $|\delta(s)|<1$. Factoring $1+\delta=(1+\delta_{\mathrm{bulk}})\bigl(1+\delta_{\mathrm{osc}}/(1+\delta_{\mathrm{bulk}})\bigr)$ before taking the logarithm separates the geometric and oscillatory contributions, $\varphi=\varphi_{\mathrm{bulk}}+\varphi_{\mathrm{osc}}$, with
\begin{align*}
  \varphi_{\mathrm{bulk}}
  &=
  \sum_{k=1}^{\infty}
  \frac{(-1)^{k-1}}{k}\,
  \Im\bigl(\delta_{\mathrm{bulk}}^{\,k}\bigr),
  \\
  \varphi_{\mathrm{osc}}
  &=
  \sum_{k=1}^{\infty}
  \frac{(-1)^{k-1}}{k}\,
  \Im\!\left[
    \left(
      \frac{\delta_{\mathrm{osc}}}
           {1+\delta_{\mathrm{bulk}}}
    \right)^{\!k}
  \right].
\end{align*}
The first sum is built solely from the geometric coefficients $b_{00}^{(p)}$ and carries no $T$-oscillation; collecting powers of $1/T$ gives the bulk coefficients $\varphi_k$. In the second sum, every term carries the factor $\delta_{\mathrm{osc}}^{\,k}$ with $k\ge1$, hence at least one $e^{-iT\omega_{m0}}$, and gives the boundary coefficients $\varphi_k^{(T)}$. Since the expansion contains only powers of $\delta_{\mathrm{osc}}/(1+\delta_{\mathrm{bulk}})$ and never a conjugate, each oscillatory term retains a net frequency $\Omega\ge\Delta_{\min}$ given by a positive sum of gaps. Because $b_{0m}^{(1)}=0$ for $m\neq0$, the oscillatory sum is $O(T^{-2})$, so the leading boundary term is $\varphi_2^{(T)}$. As emphasized in the introduction, the expansion is understood in the asymptotic sense: for each fixed $K$, truncating at order $T^{-K}$ leaves a remainder $O(T^{-K-1})$ inherited from the APT expansion, and we do not pursue uniform nonasymptotic bounds on this remainder.

More generally, the oscillatory coefficients have the following form.
For each fixed order $k$, since the Hilbert space---and hence the number of contributing modes---is finite-dimensional in our setting, there exists a finite index set $I_k$ such that
\begin{equation}\label{eq:osc_general_form}
  \varphi_k^{(T)}
  =
  \sum_{\nu\in I_k}
  \left[
    a_{k,\nu}\cos(\Omega_{k,\nu}T)
    +
    b_{k,\nu}\sin(\Omega_{k,\nu}T)
  \right],
\end{equation}
where $a_{k,\nu}$ and $b_{k,\nu}$ are independent of $T$. Each frequency
$\Omega_{k,\nu}$ is a positive sum of integrated gaps,
\begin{equation}\label{eq:osc_frequency_form}
  \Omega_{k,\nu}
  =
  \omega_{n_1 0}(1)+\cdots+\omega_{n_q 0}(1),
  \qquad q\geq 1,
\end{equation}
and hence satisfies
\[
  \Omega_{k,\nu}\geq \Delta_{\min}>0 .
\]
Thus no zero-frequency component appears in the oscillatory sector.

\noindent\textbf{First order.} At first order the recursion~\eqref{eq:APT_recursion} with the ground-state initial condition gives the non-oscillatory coefficient
\begin{equation*}
  b_{00}^{(1)}(s) = i\sum_{m\neq 0} J_{m0}(s),
\end{equation*}
while the off-diagonal survival coefficients vanish, $b_{0m}^{(1)}(s)=0$ for $m\neq0$. Hence at order $1/T$ the oscillatory part $\delta_{\mathrm{osc}}$ contributes nothing, and from~\eqref{eq:phi_taylor},
\begin{align}
  \varphi_1(s)
  &=
  \Im\bigl(b_{00}^{(1)}(s)\bigr)
  =
  \int_0^s
  \sum_{n\neq 0}
  \frac{|M_{n0}(\sigma)|^2}
       {\Delta_{n0}(\sigma)}
  \,d\sigma.
  \label{eq:phi1}
\end{align}
The vanishing of $b_{0m}^{(1)}$ is the reason there is no first-order boundary term; the oscillatory contribution first appears at second order, through $b_{0m}^{(2)}$, as $\varphi_2^{(T)}$.

\noindent\textbf{Second order.} At second order both the bulk and oscillatory coefficients contribute. From~\eqref{eq:phi_taylor}, collecting the $1/T^2$ terms of $\Im\log(1+\delta)$ gives
\begin{align*}
  \frac{\varphi_2+\varphi_2^{(T)}}{T^2}
  &=
  \frac{1}{T^2}
  \Im\!\left(
    b_{00}^{(2)}
    -\tfrac12\bigl(b_{00}^{(1)}\bigr)^2
  \right)
  \\
  &\quad
  +
  \frac{1}{T^2}
  \sum_{m\neq0}
  \Im\!\left(
    e^{-iT\omega_{m0}}
    e^{i\beta_{m0}}
    b_{0m}^{(2)}
  \right),
\end{align*}
since $b_{0m}^{(1)}=0$ removes the $m\neq0$ contribution to $\delta_{\mathrm{osc}}$ at first order. The first group is non-oscillatory and gives $\varphi_2$; the second carries $e^{-iT\omega_{m0}}$ and gives $\varphi_2^{(T)}$.

\emph{Bulk coefficient.} The ground-state coefficient $b_{00}^{(2)}(s)$ follows from the recursion~\eqref{eq:APT_recursion} with $n=0$. Two contributions drop out of the imaginary part: $\bigl(b_{00}^{(1)}\bigr)^2=-\bigl(\sum_{m\neq0}J_{m0}\bigr)^2$ is purely real, and the initial value $b_{00}^{(2)}(0)=-\sum_{m\neq0}|M_{m0}(0)|^2/\Delta_{m0}(0)^2$ is also real, so $\varphi_2(s)=\Im\,b_{00}^{(2)}(s)$, with only the integral term of \eqref{eq:b2_diag} contributing. Using $W_{n0}=M_{nn}-M_{00}=-i(\dot\theta_B^{(n)}-\dot\theta_B)$ and writing $M_{n0}=|M_{n0}|\,e^{i\phi_n}$ (defined wherever $M_{n0}\neq0$; only the combinations $|M_{n0}|^2\dot\phi_n$ and $|M_{n0}|\sin(\cdot)$ appear below, so zeros of the coupling cause no ambiguity), so that $\Im(\dot M_{n0}M_{n0}^*)=\dot\phi_n|M_{n0}|^2$, the imaginary part evaluates to
\begin{align}
  \varphi_2(s)
  &=
  \sum_{n\neq 0}\int_0^s
  \frac{
    \bigl(\dot{\beta}_{n0}(\sigma)-\dot{\phi}_n(\sigma)\bigr)
    |M_{n0}(\sigma)|^2
  }{
    \Delta_{n0}(\sigma)^2
  }\,d\sigma
  \nonumber\\
  &\quad
  -i\int_0^s
  \mathcal{A}^{(2)}(\sigma)\,d\sigma,
  \label{eq:phi2}
\end{align}
where $\dot{\beta}_{n0}=\dot{\theta}_B^{(n)}-\dot{\theta}_B$ and
\[
  \mathcal{A}^{(2)}(s) :=
  \sum_{\substack{n,k\neq 0\\ n\neq k}}
  \frac{M_{0n}(s)\,M_{nk}(s)\,M_{k0}(s)}
  {\Delta_{n0}(s)\,\Delta_{k0}(s)}
\]
is purely imaginary, so that $-i\int_0^s\mathcal{A}^{(2)}\,d\sigma$ is real, as follows from the anti-hermiticity relation $M_{ab}^*=-M_{ba}$ and relabeling the summation indices.
Physically, \(\varphi_2\) consists of two parts: an integral involving the mismatch between the relative Berry-phase rate \(\dot{\beta}_{n0}\) and the coupling-phase rate \(\dot{\phi}_n\), and the integral of \(\mathcal{A}^{(2)}\), which captures virtual excitation processes
\(
\ket{0}\to\ket{n}\to\ket{k}\to\ket{0}
\)
weighted by the spectral gaps.

\emph{Oscillatory coefficient.} The off-diagonal coefficient $b_{0m}^{(2)}(s)$ inherits the endpoint factor $M_{m0}(0)/\Delta_{m0}(0)$ from the first-order excited amplitude. Substituting into the oscillatory group above and writing $M_{m0}=|M_{m0}|\,e^{i\phi_m}$ yields
\begin{widetext}
\begin{equation}\label{eq:phi2T}
  \varphi_2^{(T)}(s)
  =
  \sum_{m\neq 0}
  \frac{|M_{m0}(0)|\,|M_{m0}(s)|}{\Delta_{m0}(s)\,\Delta_{m0}(0)}\,
  \sin\bigl(\beta_{m0}(s)+\phi_m(0)-\phi_m(s)-T\omega_{m0}(s)\bigr).
\end{equation}
\end{widetext}
At the endpoint $s=1$, with the closed loop $\Delta_{m0}(1)=\Delta_{m0}(0)$, this reduces to the form stated in \cref{lem:phase_error}. The bounds stated in \cref{lem:phase_error} also follow readily: since $|M_{n0}|=|\braket{n|\dot H|\psi}|/\Delta_{n0}$, completeness of the eigenbasis gives $\sum_{n\neq0}|\braket{n|\dot H|\psi}|^2\le\dot H_{\max}^2$, hence $\varphi_1\le\dot H_{\max}^2/\Delta_{\min}^3$; similarly, using $\Delta_{n0}(0)\ge\Delta(0)$ and the Cauchy--Schwarz inequality, $|\varphi_2^{(T)}|\le\sum_{n\neq0}|M_{n0}(0)|\,|M_{n0}(1)|/\Delta_{n0}(0)^2\le\|\dot H(0)\|\,\|\dot H(1)\|/\Delta(0)^4$.

We can now assemble the proof of the lemma.
\begin{proof}[Proof of \cref{lem:phase_error}]
Substituting the APT expansion into \eqref{eq:z_APT} and collecting powers of $1/T$ in \eqref{eq:phi_taylor} yields the expansion \eqref{eq:phi_expansion}, with the $T$-independent non-oscillatory coefficients $\varphi_k$ arising from the bulk sector and the oscillatory coefficients $\varphi_k^{(T)}$ of the form \eqref{eq:osc_general_form}, whose frequencies obey \eqref{eq:osc_frequency_form} and hence satisfy $\Omega_{k,\nu}\ge\Delta_{\min}>0$. The leading coefficients are given by \eqref{eq:phi1} and by \eqref{eq:phi2T} at $s=1$, and the stated bounds on $\varphi_1$ and $\varphi_2^{(T)}$ were derived above. The truncation error at order $K$ is $O(T^{-K-1})$ in the asymptotic sense discussed above.
\end{proof}

\subsection{Imperfect initial state}\label{app:prep_expansion}

We extend the phase-error analysis of \cref{app:phase_exact} to an imperfect initial state and derive the preparation-bias expansion of \cref{lem:prep_bias}.
We use the notation of \cref{sec:imperfect}: \(\ket{\psi_{\mathrm{prep}}}=\sum_n c_n\ket{n(0)}\), \(p_{\mathrm{prep}}=1-|c_0|^2<1/2\), and \(r_n=|c_n|^2/|c_0|^2\). We abbreviate \(\omega_n=\omega_{n0}(1)\) and \(\beta_n=\beta_{n0}(1)\).

  The APT of \cref{app:APT} is set up for a general initial state, so the
  expansion of \(\ket{\Psi(1)}\) holds with \(b_n(0)=c_n\). Contracting it with
  \(\bra{\psi_{\mathrm{prep}}}=\sum_n\bar c_n\bra{n(1)}\) and removing the ground-state
  dynamical and Berry phases gives
  \begin{equation*}
    \tilde z
    =\sum_{n,m}\sum_{p\ge0}
    T^{-p}\,\bar c_n\,
    e^{-iT\omega_{m0}(1)}e^{i\beta_{m0}(1)}\,
    b_{nm}^{(p)}(1).
  \end{equation*}

As in \cref{app:phase_exact}, we separate the \(m=0\) and \(m\ne0\) sectors. After dividing by the positive ground-state weight \(|c_0|^2\), write
\[
  \frac{\tilde z}{|c_0|^2}
  =1+\tilde\delta_{\mathrm{bulk}}+\tilde\delta_{\mathrm{osc}},
\]
\begin{align*}
  \tilde\delta_{\mathrm{bulk}}
  &=\frac{1}{|c_0|^2}\sum_{p\ge1}T^{-p}
    \sum_n\bar c_n b_{n0}^{(p)}(1),\\
  \tilde\delta_{\mathrm{osc}}
  &=\frac{1}{|c_0|^2}\sum_{p\ge0}T^{-p}
    \sum_{m\ne0}e^{-iT\omega_m}e^{i\beta_m}
    \sum_n\bar c_n b_{nm}^{(p)}(1).
\end{align*}
These are the counterparts of \(\delta_{\mathrm{bulk}}\) and \(\delta_{\mathrm{osc}}\) in \cref{app:phase_exact}. The bulk correction still starts at \(O(T^{-1})\). However, since \(b_{nm}^{(0)}(1)=c_n\delta_{nm}\), the oscillatory correction now starts at order \(T^0\):
\[
  \tilde\delta_{\mathrm{osc}}=w+O(T^{-1}),
  \qquad
  w:=\sum_{n\ge1}r_ne^{i(\beta_n-\omega_nT)}.
\]
When the initial state is exact, \(w=0\) and the two corrections reduce to those of \cref{app:phase_exact}, with \(\delta_{\mathrm{osc}}=O(T^{-2})\).

We now use the same factorization as in \cref{app:phase_exact}:
\[
  \tilde z=|c_0|^2(1+\tilde\delta_{\mathrm{bulk}})
  \left(1+\frac{\tilde\delta_{\mathrm{osc}}}
                    {1+\tilde\delta_{\mathrm{bulk}}}\right).
\]
The assumption \(p_{\mathrm{prep}}<1/2\) gives \(|w|\le\sum_{n\ge1}r_n=p_{\mathrm{prep}}/(1-p_{\mathrm{prep}})<1\). Thus, for sufficiently large \(T\), both logarithmic series below converge and give the phase on the same branch as the original overlap:
\begin{align*}
  \arg\tilde z
  &=\tilde\varphi_{\mathrm{bulk}}+\tilde\varphi_{\mathrm{osc}},\\
  \tilde\varphi_{\mathrm{bulk}}
  &:=\Im\log(1+\tilde\delta_{\mathrm{bulk}})\\
  &=\sum_{q\ge1}\frac{(-1)^{q-1}}q
    \Im\bigl(\tilde\delta_{\mathrm{bulk}}^{\,q}\bigr),\\
  \tilde\varphi_{\mathrm{osc}}
  &:=\Im\log\left(1+\frac{\tilde\delta_{\mathrm{osc}}}
                         {1+\tilde\delta_{\mathrm{bulk}}}\right)\\
  &=\sum_{q\ge1}\frac{(-1)^{q-1}}q
    \Im\left[\left(\frac{\tilde\delta_{\mathrm{osc}}}
                            {1+\tilde\delta_{\mathrm{bulk}}}\right)^q\right].
\end{align*}
The bulk series contains only \(T\)-independent coefficients multiplying powers of \(1/T\). Every term in the oscillatory series contains at least one factor from the \(m\ne0\) sector. Products add the positive frequencies \(\omega_m\), without introducing conjugate factors, so every resulting frequency is at least \(\Delta_{\min}\). Hence the second series has no non-oscillatory component.

\noindent\textbf{Zeroth order.} The bulk phase has no \(T^0\) term. Setting the inverse-runtime corrections to zero in the oscillatory phase gives
\[
  \varphi_0^{\mathrm{prep}(T)}=\Im\log(1+w),
\]
which is \cref{eq:prep_k0}. This term is absent for the exact initial state.

  Expanding the logarithm in \cref{eq:prep_k0} gives the absolutely convergent
  Fourier series
  \begin{align}
    \varphi_0^{\mathrm{prep}(T)}
    &=\sum_{q\ge1}\frac{(-1)^{q-1}}{q}
    \sum_{n_1,\dots,n_q\ge1}
    r_{n_1}\cdots r_{n_q} \notag\\
    &\quad\times
    \sin\Bigl(\textstyle\sum_{j=1}^{q}\beta_{n_j}
    -T\sum_{j=1}^{q}\omega_{n_j}\Bigr).
    \label{eq:prep_k0_series}
  \end{align}
  Every frequency is a sum of \(q\ge1\) of the \(\omega_n\), hence at least
  \(\omega_1>0\): no \(T\)-independent component survives at \(k=0\). Summing the
  moduli gives \(\sum_{q\ge1}(\sum_{n\ge1}r_n)^{q}/q
  =\log\bigl[(1-p_{\mathrm{prep}})/(1-2p_{\mathrm{prep}})\bigr]
  =p_{\mathrm{prep}}+O(p_{\mathrm{prep}}^{2})\), and retaining \(q=1\) gives
  \(\varphi_0^{\mathrm{prep}(T)}=\sum_{n\ge1}r_n\sin(\beta_n-\omega_nT)
  +O(p_{\mathrm{prep}}^{2})\).

\noindent\textbf{Higher orders.} Subtracting the bulk and oscillatory phases of the exact initial state separately gives
\begin{align*}
  \tilde\varphi_{\mathrm{bulk}}-\varphi_{\mathrm{bulk}}
  &=\sum_{k=1}^{K}\frac{\varphi_k^{\mathrm{prep}}}{T^k}
    +O(T^{-K-1}),\\
  \tilde\varphi_{\mathrm{osc}}-\varphi_{\mathrm{osc}}
  &=\varphi_0^{\mathrm{prep}(T)}
    +\sum_{k=1}^{K}\frac{\varphi_k^{\mathrm{prep}(T)}}{T^k}
    +O(T^{-K-1}).
\end{align*}
Their sum is \cref{eq:prep_expansion}. Unlike the exact case, arbitrarily many factors of \(w\) can occur at a fixed order in \(1/T\), since \(w\) carries no inverse-runtime factor. The oscillatory coefficients can therefore contain countably many modes. At each fixed order, these sums remain absolutely convergent: their terms are bounded by a polynomial in the number of factors times a geometric sequence with ratio \(\sum_{n\ge1}r_n<1\). All frequencies remain positive sums of integrated gaps, as established above.

After subtracting the exact-state result, the higher-order coefficients vanish when the excited-state amplitudes vanish. Their dependence on those amplitudes is smooth near the exact state, so they are \(O(p_{\mathrm{prep}}^{1/2})\). The zeroth-order term instead depends on the populations \(|c_n|^2\) and is \(O(p_{\mathrm{prep}})\), as shown above.

We can now assemble the proof of the lemma.
\begin{proof}[Proof of \cref{lem:prep_bias}]
The factorization of \(\tilde z\) and the logarithmic expansion above yield \cref{eq:prep_expansion} after subtracting \(\varphi\). The zeroth-order term is \cref{eq:prep_k0}, with the absolutely convergent Fourier series \cref{eq:prep_k0_series}. The frequency bounds, absolute convergence at higher orders, and preparation-infidelity dependence were established above. Truncation at fixed order \(K\) gives the asymptotic remainder \(O(T^{-K-1})\), as in \cref{app:phase_exact}.
\end{proof}

\section{Bound on the second-order term}\label{app:bound}

Here we bound the second-order coefficients introduced in \cref{lem:phase_error}: Terms~1 and~2 bound the two contributions to \(\varphi_2\) in \cref{eq:phi2}, and Term~3 bounds the oscillatory coefficient \(\Phi_2^{(T)}\) of \cref{thm:cancellation}. Combining them yields the leading-order bound on \(\tilde{\theta}_B(T)-\theta_B\) for the forward--reverse estimator in terms of physical parameters.
Explicitly, by \cref{eq:phi2,eq:Phi2T}, the quantities to be bounded are
\begin{align*}
  \Phi_2=\varphi_2
  &=
  \underbrace{\sum_{n\neq 0}\int_0^1
  \frac{\bigl(\dot{\beta}_{n0}-\dot{\phi}_n\bigr)|M_{n0}|^2}{\Delta_{n0}^2}\,ds}_{\text{Term 1}}
  \\
  &\quad
  \underbrace{\;-\;i\int_0^1 \mathcal{A}^{(2)}(s)\,ds}_{\text{Term 2}},
  \\
  \Phi_2^{(T)}
  &=
  \underbrace{\sum_{n\neq 0}B_n\cos(\omega_nT)}_{\text{Term 3}}.
\end{align*}

\paragraph{Term 1: the $(\dot{\beta}_{n0}-\dot{\phi}_n)$ 
integral.}
This term involves $M_{n0}^*\,b_{n0}^{(2)}$ before 
splitting into sub-terms.
From the explicit form of $b_{n0}^{(2)}$ 
in~\eqref{eq:b2_offdiag}, each contribution is 
bounded by products of $|M|$'s, $|\dot{M}|$'s, and 
inverse gaps.
Since the integrand $(\dot\beta_{n0}-\dot\phi_n)|M_{n0}|^2/\Delta_{n0}^2$ is gauge invariant, we may evaluate it in the parallel-transport gauge, in which $M_{nn}=M_{00}=0$ and hence $\dot\beta_{n0}=0$; the remaining contribution obeys $|\dot\phi_n|\,|M_{n0}|^2\le|\dot M_{n0}|\,|M_{n0}|$.
Differentiating $M_{n0} = -\braket{n|\dot{H}|\psi}
/\Delta_{n0}$ gives, in this gauge,
\[
  |\dot{M}_{n0}| \leq
  O\!\left(\frac{\ddot{H}_{\max}}{\Delta_{\min}}
  + \frac{\dot{H}_{\max}^2}{\Delta_{\min}\gamma_{\wedge}}\right),
\]
where $\gamma_{\wedge}:=\min(\Delta_{\min},\gamma_{\mathrm{ex}})$, with the minimum excited-state splitting
\[
  \gamma_{\mathrm{ex}}
  :=
  \min_{s\in[0,1]}
  \min_{\substack{n\neq k\\ n,k\neq 0}}
  |E_n(s)-E_k(s)|,
\]
and the $\gamma_{\wedge}$ term arises from the derivative $\ket{\dot n}$ of the excited eigenvector.
Combining with $|M_{n0}|\le\dot H_{\max}/\Delta_{\min}$, the integrand satisfies
\[
  \left|\frac{(\dot{\beta}_{n0}-\dot{\phi}_n)\,
  |M_{n0}|^2}{\Delta_{n0}^2}\right|
  \leq
  O\!\left(
  \frac{\dot H_{\max}\,\ddot{H}_{\max}}{\Delta_{\min}^4}
  + \frac{\dot{H}_{\max}^3}{\Delta_{\min}^4\,\gamma_{\wedge}}
  \right).
\]

\paragraph{Term 2: the $\mathcal{A}^{(2)}$ integral.}
Introduce the vector \(v(s)\) on the excited subspace by
\[
  v_n(s):=\frac{M_{n0}(s)}{\Delta_{n0}(s)},\qquad n\neq 0.
\]
Then \(\mathcal A^{(2)}\) can be viewed as the quadratic form
\begin{align*}
  \mathcal A^{(2)}(s)
  &=
  \sum_{\substack{n,k\neq 0\\ n\neq k}}
  \frac{M_{0n}(s)\,M_{nk}(s)\,v_k(s)}{\Delta_{n0}(s)}
  \\
  &=
  -\sum_{\substack{n,k\neq 0\\ n\neq k}}
  \overline{v_n(s)}\,M_{nk}(s)\,v_k(s),
\end{align*}
where we used \(M_{0n}=-\overline{M_{n0}}\).
Hence
\[
  |\mathcal A^{(2)}(s)|
  \le
  \|M_Q(s)\|\,\|v(s)\|^2,
\]
where \(M_Q(s)=(M_{nk}(s))_{n,k\neq 0}\) denotes the excited-sector
Berry-connection matrix.
Now
\[
  \|v(s)\|^2
  =
  \sum_{n\neq 0}\frac{|M_{n0}(s)|^2}{\Delta_{n0}(s)^2}
  \le
  \frac{\dot H_{\max}^2}{\Delta_{\min}^4},
\]
using \(|M_{n0}(s)|\le \dot H_{\max}/\Delta_{\min}\).
Recalling the minimum excited-state splitting $\gamma_{\mathrm{ex}}$ defined in Term~1, the Hellmann--Feynman relation bounds the matrix entries, $|M_{nk}(s)|\le\dot H_{\max}/\gamma_{\mathrm{ex}}$. Since an entrywise estimate does not by itself control the operator norm, we use in addition that the off-diagonal inverse-gap (Hilbert-transform-type) Schur multiplier is bounded up to an absolute constant, giving
\[
  \|M_Q(s)\|
  \le
  O\!\left(\frac{\dot H_{\max}}{\gamma_{\mathrm{ex}}}\right).
\]
Therefore
\[
  |\mathcal A^{(2)}(s)|
  \le
  O\!\left(
  \frac{\dot H_{\max}^3}
  {\Delta_{\min}^4\,\gamma_{\mathrm{ex}}}
  \right),
\]
and hence
\[
  \left|\int_0^1 \mathcal A^{(2)}(s)\,ds\right|
  \le
  O\!\left(
  \frac{\dot H_{\max}^3}
  {\Delta_{\min}^4\,\gamma_{\mathrm{ex}}}
  \right).
\]

When degeneracies or level crossings occur within the excited spectrum, the level-by-level quantities \(M_{nk}\) are no longer the most natural objects, since the relevant geometry is that of the excited manifold as a whole rather than of individual excited states. Accordingly, we expect that the term \(\mathcal A^{(2)}\) should be reformulated in terms of the Berry connection on the excited manifold, while the overall Berry-phase error scaling should remain unchanged as long as the ground state stays non-degenerate and separated from the excited manifold by a nonzero gap.

\paragraph{Term 3: the oscillatory boundary term.}
Using $|\sin\gamma_n|\le1$ and $|\cos(\omega_nT)|\le1$ in \cref{eq:Phi2T,eq:Bn_def}, and summing over $n$ by the Cauchy--Schwarz inequality and completeness as in the bounds of \cref{lem:phase_error}, this term is controlled by endpoint data alone:
\[
  |\Phi_2^{(T)}| \leq 
  \frac{\|\dot H(0)\|\,\|\dot H(1)\|}{\Delta(0)^4}.
\]

\paragraph{Combined bound.}
Summing over excited states by the Cauchy--Schwarz inequality and completeness of the eigenbasis, which removes any explicit dependence on the number of levels, and integrating over $s\in[0,1]$, we obtain
\begin{equation*}
  |\tilde{\theta}_B(T)-\theta_B |
  \leq \frac{C_2}{T^2}+O(T^{-3}),
\end{equation*}
\begin{equation*}
  C_2 = O\!\left(
  \frac{\dot{H}_{\max}\,\ddot{H}_{\max}}{\Delta_{\min}^4}
  + \frac{\dot H_{\max}^3}{\Delta_{\min}^4\,\gamma_{\wedge}}
  + \frac{\|\dot H(0)\|\,\|\dot H(1)\|}{\Delta(0)^4}
  \right).
\end{equation*}
We note that the contribution proportional to \(\gamma_{\wedge}^{-1}\) is the only part of the present bound that is sensitive to degeneracies or near-degeneracies within the excited spectrum. However, as discussed below, this term is eliminated by Richardson extrapolation, so the final complexity bound is unaffected by the detailed structure of the excited-state splittings.

\section{Proof of the Richardson cancellation theorem}
\label{app:richardson_details}

\begin{proof}[Proof of \cref{thm:richardson_m}]
Let $x_\ell=\alpha^{-2\ell}$. The Richardson weights are the Lagrange weights
\[
  w_{r,\ell}
  =
  \prod_{q\in Q_{r,\ell}}
  \frac{-x_q}{x_\ell-x_q},
  \quad
  Q_{r,\ell}:=\{0,\ldots,r\}\setminus\{\ell\}.
\]
They satisfy
\[
  \sum_{\ell=0}^{r}w_{r,\ell}=1,
  \quad
  \sum_{\ell=0}^{r}w_{r,\ell}\alpha^{-2j\ell}=0,
  \quad
  j=1,\ldots,r .
\]
Substituting the expansion
\[
  \tilde{\theta}_B(T)-\theta_B
  =
  \sum_{j=1}^{\lfloor K/2\rfloor}
  \frac{\Phi_{2j}}{T^{2j}}
  +
  \sum_{k=2}^{K}
  \frac{\Phi_k^{(T)}}{T^k}
  +
  O(T^{-K-1})
\]
into the Richardson extrapolant gives
\begin{align*}
  \tilde{\theta}_{B,R}^{(r)}(T)-\theta_B
  &=
  \sum_{j=1}^{\lfloor K/2\rfloor}
  \frac{\Phi_{2j}}{T^{2j}}
  \sum_{\ell=0}^{r}
  w_{r,\ell}\alpha^{-2j\ell}
  \nonumber\\
  &\quad
  +
  \sum_{k=2}^{K}
  \frac{\Phi_{k,r}^{(T)}}{T^k}
  +
  O(T^{-K-1}).
\end{align*}
where
\[
  \Phi_{k,r}^{(T)}
  :=
  \sum_{\ell=0}^{r}
  w_{r,\ell}\alpha^{-k\ell}
  \Phi_k^{(\alpha^\ell T)} .
\]
The Richardson conditions cancel the terms \(j=1,\ldots,r\).  Hence
\[
  \tilde{\theta}_{B,R}^{(r)}(T)-\theta_B
  =
  \sum_{k=2}^{K}
  \frac{\Phi_{k,r}^{(T)}}{T^k}
  +
  \sum_{j=r+1}^{\lfloor K/2\rfloor}
  \frac{C_{j,r}\Phi_{2j}}{T^{2j}}
  +
  O(T^{-K-1}),
\]
with
\[
  C_{j,r}
  :=
  \sum_{\ell=0}^{r}w_{r,\ell}\alpha^{-2j\ell}.
\]
This proves that the first possible remaining non-oscillatory term is of
order \(T^{-(2r+2)}\).

It remains to record the oscillatory structure.  If
\[
  \Phi_k^{(T)}
  =
  \sum_{\nu\in I_k}
  \left[
    a_{k,\nu}\cos(\Omega_{k,\nu}T)
    +
    b_{k,\nu}\sin(\Omega_{k,\nu}T)
  \right],
\]
then
\[
  \Phi_{k,r}^{(T)}
  =
  \sum_{\nu\in I_k}
  \sum_{\ell=0}^{r}
  w_{r,\ell}\alpha^{-k\ell}
  S_{k,\nu}^{(\ell)}(T),
\]
where
\[
  S_{k,\nu}^{(\ell)}(T)
  :=
  a_{k,\nu}\cos(\alpha^\ell\Omega_{k,\nu}T)
  +
  b_{k,\nu}\sin(\alpha^\ell\Omega_{k,\nu}T).
\]
In particular,
\[
  \Phi_{2,r}^{(T)}
  =
  \sum_{n\neq0}B_n
  \sum_{\ell=0}^{r}
  w_{r,\ell}\alpha^{-2\ell}
  \cos(\alpha^\ell\omega_nT).
\]
Therefore,
\[
  |\Phi_{2,r}^{(T)}|
  \leq
  L_r(\alpha)\sum_{n\neq 0}|B_n|,
  \qquad
  L_r(\alpha)
  :=
  \sum_{\ell=0}^{r}
  |w_{r,\ell}|\alpha^{-2\ell}.
\]
Using the endpoint estimate
\[
  \sum_{n\neq0}|B_n|
  \le
  \frac{\|\dot H(0)\|\,\|\dot H(1)\|}
       {\Delta(0)^4},
\]
we obtain
\[
  |\Phi_{2,r}^{(T)}|
  \le
  L_r(\alpha)
  \frac{\|\dot H(0)\|\,\|\dot H(1)\|}
       {\Delta(0)^4}.
\]
Finally, for fixed \(\alpha>1\), the weights obey
\[
  L_r(\alpha)
  \le
  (r+1)
  \left(
    \frac{\alpha^2}{\alpha^2-1}
  \right)^r .
\]
Indeed, from the Lagrange form,
\[
  |w_{r,\ell}|
  =
  \prod_{q\neq \ell}
  \frac{x_q}{|x_\ell-x_q|}.
\]
Each factor is bounded by \(\alpha^2/(\alpha^2-1)\), and
\(\alpha^{-2\ell}\le1\).  Hence the stated bound follows.

\end{proof}

\section{Properties of the Richardson factor}
\label{app:richardson_factor}

We collect elementary properties of the factor
\begin{equation*}
  L_r(\alpha)
  :=
  \sum_{\ell=0}^{r}|w_{r,\ell}|\,\alpha^{-2\ell},
  \qquad \alpha>1,
\end{equation*}
which appears in the bound for the leading oscillatory sector. This
quantity should not be confused with the statistical amplification factor
of the Richardson estimator. Rather, \(L_r(\alpha)\) is the effective
coefficient factor for the \(T^{-2}\) oscillatory contribution.

More generally, the \(k\)-th oscillatory sector is controlled by
\begin{equation*}
  L_{r,k}(\alpha)
  :=
  \sum_{\ell=0}^{r}|w_{r,\ell}|\,\alpha^{-k\ell}.
\end{equation*}
Indeed, the factor \(\alpha^{-k\ell}\) comes from
\[
  (\alpha^\ell T)^{-k}
  =
  \alpha^{-k\ell}T^{-k}.
\]
Thus
\begin{equation*}
  L_r(\alpha)=L_{r,2}(\alpha).
\end{equation*}
Moreover, for every \(k\ge2\),
\begin{equation*}
  L_{r,k}(\alpha)
  \le
  L_r(\alpha),
\end{equation*}
because \(\alpha^{-k\ell}\le\alpha^{-2\ell}\). Hence \(L_r(\alpha)\)
also gives a uniform upper bound for all higher oscillatory sectors
\(k\ge2\), although the exact factor for \(k>2\) is \(L_{r,k}(\alpha)\).

The Richardson weights imply the product representation
\begin{equation*}
  L_r(\alpha)
  =
  \alpha^{-2r}
  \frac{
    \prod_{j=0}^{r-1}(1+\alpha^{-2j})
  }{
    \prod_{j=1}^{r}(1-\alpha^{-2j})
  }.
\end{equation*}
Consequently,
\begin{equation*}
  \frac{L_{r+1}(\alpha)}{L_r(\alpha)}
  =
  \alpha^{-2}
  \frac{1+\alpha^{-2r}}{1-\alpha^{-(2r+2)}}.
\end{equation*}
Hence the step from $L_r$ to $L_{r+1}$ is decreasing whenever $\alpha^{-2}+2\alpha^{-(2r+2)}\le1$. For $r\ge1$ this holds whenever $\alpha\ge\sqrt2$, whereas the first step $L_1\le L_0=1$ requires the stronger condition $\alpha\ge\sqrt3$; for example, $\alpha=\sqrt2$ gives $L_1=2>1=L_0$. Consequently, $L_r(\alpha)$ is monotone decreasing in $r$ for all $r\ge0$ whenever $\alpha\ge\sqrt3$.
Thus, for the common choice \(\alpha=2\), the effective coefficient of the
leading oscillatory sector decreases with the Richardson order. For
example, for \(k=2\),
\begin{equation*}
  L_{1,2}(2)=\frac23,
  \qquad
  L_{2,2}(2)=\frac29,
  \qquad
  L_{3,2}(2)=\frac{34}{567}.
\end{equation*}
For higher oscillatory sectors the factor is even smaller. For instance,
for \(k=3\),
\begin{equation*}
  L_{1,3}(2)=\frac12,
  \quad
  L_{2,3}(2)=\frac{1}{10}.
\end{equation*}
On the other hand, the crude estimate
\begin{equation*}
  L_r(\alpha)
  \le
  (r+1)
  \left(
    \frac{\alpha^2}{\alpha^2-1}
  \right)^r
\end{equation*}
is only a worst-case upper bound. It does not capture the possible
decrease of \(L_r(\alpha)\) with \(r\).

\section{Analysis of Runtime Randomization}
\label{app:runtime_randomization}

In this appendix we prove \cref{thm:randomized_richardson}.  Recall that
\(\mu\) is supported on \([1-\lambda,1+\lambda]\) with density
\(\rho\le c_\mu/(2\lambda)\), that the support is partitioned into \(N\) intervals
\(I_1,\dots,I_N\) of equal width \(h:=2\lambda/N\) with \(p_j:=\mu(I_j)\), that
\(X_j\sim\mu(\,\cdot\mid I_j)\) are independent, and that
\[
  T_j=TX_j .
\]
For each randomized runtime, the \(r\)-fold Richardson estimator is
\[
  \tilde{\theta}_{B,R}^{(r)}(T_j)
  :=
  \sum_{\ell=0}^{r}
  w_{r,\ell}\tilde{\theta}_B(\alpha^\ell T_j).
\]
After Richardson extrapolation, the non-oscillatory terms
\(T_j^{-2},T_j^{-4},\ldots,T_j^{-2r}\) cancel.  Hence, for any fixed
\(K\ge 2r+2\),
\begin{equation}\label{eq:app_theta_single_event_random_richardson}
  \tilde{\theta}_{B,R}^{(r)}(T_j)-\theta_B
  =
  \sum_{k=2}^{K}
  \frac{\Phi_{k,r}^{(T_j)}}{T_j^k}
  +
  O(T_j^{-(2r+2)})
  +
  O(T_j^{-K-1}).
\end{equation}
Here the $O(T_j^{-(2r+2)})$ contribution is the non-oscillatory Richardson remainder. The randomized estimator is
\[
  \hat{\theta}_{B,R}^{(N,r)}(T)
  :=
  \sum_{j=1}^{N}p_j\,
  \tilde{\theta}_{B,R}^{(r)}(T_j).
\]
For any bounded \(g\), \(\sum_jp_j\,\mathbb E[g(X_j)]=\sum_j\int_{I_j}g\,d\mu
=\int g\,d\mu\), so every expectation in the bias estimate below is the
\(\mu\)-expectation of a single sample, exactly as for independent sampling; we write
\(\mathbb E\) for it.

\subsection{Bias estimate}

\begin{proof}[Proof of the bias bound in \cref{thm:randomized_richardson}]
We first estimate the mean oscillatory contribution.  For each \(k\), define
the weighted characteristic function
\begin{equation*}
  \chi_{\mu,k}(\xi)
  :=
  \int_{1-\lambda}^{1+\lambda}
  x^{-k}e^{i\xi x}\,d\mu(x).
\end{equation*}
The factor \(x^{-k}\) appears because \(T_j^{-k}=T^{-k}X_j^{-k}\).

Using the finite-mode form of the oscillatory coefficient,
\[
  \Phi_{k,r}^{(T)}
  =
  \sum_{\nu\in I_k}
  \left[
    a_{k,\nu}\cos(\Omega_{k,\nu}T)
    +
    b_{k,\nu}\sin(\Omega_{k,\nu}T)
  \right],
\]
we obtain
\begin{align*}
  \mathbb E\left[
    \frac{\Phi_{k,r}^{(T_j)}}{T_j^k}
  \right]
  &=
  \frac{1}{T^k}
  \sum_{\nu\in I_k}
  \Bigg[
    a_{k,\nu}
    \int_{1-\lambda}^{1+\lambda}
    x^{-k}\cos(\Omega_{k,\nu}Tx)\,d\mu(x)
    \\
  &\qquad\qquad
    +
    b_{k,\nu}
    \int_{1-\lambda}^{1+\lambda}
    x^{-k}\sin(\Omega_{k,\nu}Tx)\,d\mu(x)
  \Bigg].
\end{align*}
Therefore,
\begin{equation*}
  \left|
  \mathbb E\left[
    \frac{\Phi_{k,r}^{(T_j)}}{T_j^k}
  \right]
  \right|
  \le
  \frac{1}{T^k}
  \sum_{\nu\in I_k}
  \bigl(|a_{k,\nu}|+|b_{k,\nu}|\bigr)
  |\chi_{\mu,k}(\Omega_{k,\nu}T)|.
\end{equation*}

Assume that, for some \(M>0\),
\begin{equation*}
  |\chi_{\mu,k}(\xi)|
  \le
  C_{\mu,K}(1+|\xi|)^{-M},
  \qquad
  2\le k\le K .
\end{equation*}
Since the frequencies obey \(\Omega_{k,\nu}\ge\Delta_{\min}\), the total
oscillatory bias satisfies
\begin{align*}
  \left|
  \mathbb E\left[
    \sum_{k=2}^{K}
    \frac{\Phi_{k,r}^{(T_j)}}{T_j^k}
  \right]
  \right|
  &=
  O\left(
    \sum_{k=2}^{K}
    \frac{1}{T^k}
    \frac{1}{(\Delta_{\min}T)^M}
  \right)
  \\
  &=
  O\left(
    \frac{1}{\Delta_{\min}^M T^{M+2}}
  \right).
\end{align*}
Thus the leading randomized oscillatory bias comes from the \(k=2\) sector.

For this leading sector, the boundary estimate gives
\begin{align*}
  \sum_{\nu\in I_2}
  \bigl(|a_{2,\nu}|+|b_{2,\nu}|\bigr)
  &\le
  L_r(\alpha)\sum_{n\neq0}|B_n|
  \\
  &\le
  L_r(\alpha)
  \frac{\|\dot H(0)\|\,\|\dot H(1)\|}
       {\Delta(0)^4},
\end{align*}
where
\[
  L_r(\alpha)
  :=
  \sum_{\ell=0}^{r}
  |w_{r,\ell}|\alpha^{-2\ell}.
\]
Consequently,
\begin{equation*}
  \left|
  \mathbb E\left[
    \frac{\Phi_{2,r}^{(T_j)}}{T_j^2}
  \right]
  \right|
  \le
  C_{\mu,K}\,L_r(\alpha)
  \frac{\|\dot H(0)\|\,\|\dot H(1)\|}
       {\Delta(0)^4\Delta_{\min}^M T^{M+2}} .
\end{equation*}
For a fixed Hamiltonian path, the higher oscillatory sectors
\(k\ge3\) contribute \(O(\Delta_{\min}^{-M}T^{-M-3})\), which is
absorbed into \(O(T^{-(2r+2)})\) under \(M\ge2r-1\).
Combining this with the non-oscillatory Richardson remainder in
\cref{eq:app_theta_single_event_random_richardson} gives
\begin{align*}
  \mathbb E[
    \hat{\theta}_{B,R}^{(N,r)}(T)
  ]-\theta_B
  &=
  O\left(
    \frac{\|\dot H(0)\|\,\|\dot H(1)\|}
         {\Delta(0)^4\Delta_{\min}^{M}T^{M+2}}
  \right)
  \\
  &\quad
  +
  O(T^{-(2r+2)})
  +
  O(T^{-K-1}).
\end{align*}

\end{proof}

\subsection{Variance estimate}\label{app:runtime_variance}

\begin{proof}[Proof of the variance bound in \cref{thm:randomized_richardson}]

We now estimate the finite-sample fluctuation of the randomized estimator.
Let
\[
  Y(T_j)
  :=
  \sum_{k=2}^{K}
  \frac{\Phi_{k,r}^{(T_j)}}{T_j^k},
  \qquad
  \overline Y_N
  :=
  \sum_{j=1}^{N}p_j\,Y(T_j).
\]

Define
\[
  A_k
  :=
  \sum_{\nu\in I_k}
  \bigl(|a_{k,\nu}|+|b_{k,\nu}|\bigr).
\]
Write \(f(x):=Y(Tx)\), so that \(\overline Y_N=\sum_jp_jf(X_j)\) with the
\(X_j\) independent and \(X_j\in I_j\). Hence
\[
  \operatorname{Var}(\overline Y_N)
  =
  \sum_{j=1}^{N}p_j^{2}\operatorname{Var}\bigl(f(X_j)\bigr).
\]
On the interval \(I_j\), of width \(h\), the function \(f\) varies by at most
\(h\sup|f'|\), so \(\operatorname{Var}(f(X_j))\le(h\sup|f'|)^2/4\). Moreover
\(p_j\le h\,c_\mu/(2\lambda)=c_\mu/N\) and \(\sum_jp_j=1\). Therefore
\[
  \operatorname{Var}(\overline Y_N)
  \le
  \frac{c_\mu}{N}\cdot\frac{h^{2}\sup|f'|^{2}}{4}
  =
  \frac{c_\mu\lambda^{2}\sup|f'|^{2}}{N^{3}},
\]
that is,
\[
  \sqrt{\operatorname{Var}(\overline Y_N)}
  \le
  \frac{\sqrt{c_\mu}\,\lambda\,\sup|f'|}{N^{3/2}}.
\]
It remains to bound the slope. By the explicit form of \(\Phi_{2,r}^{(T)}\) in
\cref{app:richardson_details},
\begin{align*}
  &\frac{d}{dx}\,\frac{\Phi_{2,r}^{(Tx)}}{T^{2}x^{2}}
  =
  -\sum_{n\neq0}B_n\sum_{\ell=0}^{r}w_{r,\ell}\alpha^{-2\ell}\\
  &\qquad\times
  \Bigl[
    \frac{\alpha^{\ell}\omega_nT\sin(\alpha^{\ell}\omega_nTx)}{T^{2}x^{2}}
    +
    \frac{2\cos(\alpha^{\ell}\omega_nTx)}{T^{2}x^{3}}
  \Bigr],
\end{align*}
so that, using \(\omega_n\le2H_{\max}\),
\(L_{r,1}(\alpha)=\sum_\ell|w_{r,\ell}|\alpha^{-\ell}\), \(\bar B:=\sum_{n\neq0}|B_n|\),
and \(x\ge1-\lambda\),
\[
  \sup_x\left|\frac{d}{dx}\,\frac{\Phi_{2,r}^{(Tx)}}{T^{2}x^{2}}\right|
  \le
  \bar B\left[
    \frac{2H_{\max}L_{r,1}(\alpha)}{(1-\lambda)^{2}\,T}
    +
    \frac{2L_r(\alpha)}{(1-\lambda)^{3}\,T^{2}}
  \right].
\]
The sectors \(k\ge3\) are treated identically: each frequency
\(\Omega_{k,\nu}\) is a finite sum of the \(\omega_n\), so the slope of
\(\Phi_{k,r}^{(Tx)}/(Tx)^k\) is \(O(A_kT^{-k+1})=O(T^{-2})\) for fixed \(K\), with
constants depending on \(\lambda,K,\alpha\). Hence
\begin{align*}
  \sqrt{\operatorname{Var}(\overline Y_N)}
  &\le
  \frac{\sqrt{c_\mu}\,\lambda\,\bar B}{N^{3/2}}
  \left[
    \frac{2H_{\max}L_{r,1}(\alpha)}{(1-\lambda)^{2}\,T}
    +
    \frac{2L_r(\alpha)}{(1-\lambda)^{3}\,T^{2}}
  \right]\\
  &\quad+
  O\!\left(\frac{1}{T^{2}N^{3/2}}\right).
\end{align*}
The non-oscillatory Richardson remainder has an \(x\)-derivative
of order \(T^{-(2r+2)}\). For the order-\(K\) APT remainder \(R_K\),
retaining two additional APT orders and differentiating Duhamel's formula
gives \(\sup_x|\partial_x R_K(Tx)|=O(T^{-K})\).
The same stratified variance bound and the triangle inequality for
standard deviations therefore extend the estimate to the full estimator,
with both contributions absorbed into \(O(T^{-2}N^{-3/2})\).

Using the endpoint estimate
\(\bar B\le\|\dot H(0)\|\,\|\dot H(1)\|/\Delta(0)^4\) and
absorbing the factor \(2\) gives the variance statement in
\cref{thm:randomized_richardson}. Equivalently, the randomized finite-sample
fluctuation is of this order in probability. For comparison, independent sampling
\(X_j\sim\mu\) gives only \(\operatorname{Var}(\overline Y_N)=\operatorname{Var}(Y)/N
=O(\bar B^2T^{-4}/N){}+O(T^{-6}/N)\): stratification trades the amplitude \(\bar BT^{-2}\) of the
oscillation for its slope over a stratum, \((2\lambda/N)\,H_{\max}\bar BT^{-1}\).

\end{proof}

\subsection{Imperfect initial state}\label{app:prep_randomization}

\begin{proof}[Proof of \cref{thm:prep_supp}]
We first bound the randomized bias, separating the zeroth-order preparation term from the higher-order terms. We then estimate the finite-sample fluctuation using the stratified variance bound.

  The forward--reverse combination replaces the coefficients \(a_\nu,b_\nu\) of
  \cref{eq:prep_k0} by combinations of the same frequencies whose moduli obey the same
  bounds, so we write the symmetrized term in the form \cref{eq:prep_k0} without loss.

  \noindent\textbf{Zeroth-order bias.} Since \(\varphi_0^{\mathrm{prep}(T)}\)
  carries no power of the runtime, at \(T_j=TX_j\) it acquires no factor \(X_j^{-k}\),
  and the Richardson combination is
  \(\sum_{\ell}w_{r,\ell}\varphi_0^{\mathrm{prep}(\alpha^\ell TX_j)}\), with no inverse-runtime factor \(\alpha^{-k\ell}\), since \(k=0\). The expectation of the weighted stratified average equals the expectation over a single \(X\sim\mu\). Taking this expectation replaces
  \(\cos(\Omega_\nu\alpha^\ell TX)\) and \(\sin(\Omega_\nu\alpha^\ell TX)\) by the
  real and imaginary parts of \(\chi_{\mu,0}(\Omega_\nu\alpha^\ell T)\); the interchange
  of \(\mathbb E\) and \(\sum_\nu\) is justified by the absolute convergence in
  \cref{lem:prep_bias}. Hence
  \begin{align*}
    &\Bigl|\mathbb E\Bigl[\sum_{\ell}w_{r,\ell}
    \varphi_0^{\mathrm{prep}(\alpha^\ell TX)}\Bigr]\Bigr| \\
    &\quad\le\sum_{\ell}|w_{r,\ell}|\sum_\nu\bigl(|a_\nu|+|b_\nu|\bigr)
    \bigl|\chi_{\mu,0}(\Omega_\nu\alpha^\ell T)\bigr| \\
    &\quad=O\!\left(
      \frac{p_{\mathrm{prep}}\,C_{\mu,K}\,L_{r,0}(\alpha)}{(\Delta_{\min}T)^{M}}
    \right),
  \end{align*}
  using \(\Omega_\nu\alpha^\ell T\ge\Delta_{\min}T\), the hypothesis on
  \(\chi_{\mu,0}\), \(\sum_\nu(|a_\nu|+|b_\nu|)=O(p_{\mathrm{prep}})\), and
  \(L_{r,0}(\alpha)=\sum_\ell|w_{r,\ell}|\).

  \noindent\textbf{Higher-order bias.} For \(k\ge1\), \(\varphi_k^{\mathrm{prep}(T)}/T^k\) at runtime \(\alpha^\ell TX\) contributes \(\alpha^{-k\ell}T^{-k}X^{-k}\cos(\Omega\alpha^\ell TX)\) (and the sine analogue); as in the proof of \cref{thm:randomized_richardson}, its expectation is \(\alpha^{-k\ell}T^{-k}\Re\chi_{\mu,k}(\Omega\alpha^\ell T)\), bounded by the hypothesis on \(\chi_{\mu,k}\), and the interchange with \(\sum_\nu\) is justified by the absolute convergence at each order in \cref{lem:prep_bias}. Since these coefficients are \(O(p_{\mathrm{prep}}^{1/2})\), the total contribution is
  \[
    O\!\left(p_{\mathrm{prep}}^{1/2}\Delta_{\min}^{-M}T^{-M-1}\right).
  \]
  This term is retained because \(p_{\mathrm{prep}}\) may decrease with \(T\). The
  non-oscillatory \(\varphi_k^{\mathrm{prep}}\) are removed by \cref{thm:cancellation}
  for odd \(k\) and by \cref{thm:richardson_m} for even \(k\le2r\), so the first that
  survives is of order \(T^{-(2r+2)}\); it carries one factor \(c_n\) with \(n\ge1\) and
  is therefore \(O(p_{\mathrm{prep}}^{1/2}T^{-(2r+2)})\). Combining these three contributions gives \cref{eq:prep_supp}.

  \noindent\textbf{Finite-sample fluctuation.}
  For the zeroth-order contribution, define
  \[
    f(x):=\sum_{\ell=0}^{r}w_{r,\ell}
    \varphi_0^{\mathrm{prep}(\alpha^\ell Tx)},
    \qquad
    \overline f_N:=\sum_{j=1}^{N}p_j f(X_j).
  \]
  By the stratified variance bound of \cref{app:runtime_variance},
  \[
    \sqrt{\operatorname{Var}(\overline f_N)}
    \le \frac{\sqrt{c_\mu}\,\lambda}{N^{3/2}}
    \sup_x|f'(x)|.
  \]
  It remains to bound the slope of \(f\). Differentiating the Fourier series gives
  \begin{align*}
    |f'(x)|
    &\le T\sum_{\ell=0}^{r}|w_{r,\ell}|\alpha^\ell
      \sum_\nu\Omega_\nu(|a_\nu|+|b_\nu|)\\
    &\le \alpha^r L_{r,0}(\alpha)T
      \sum_\nu\Omega_\nu(|a_\nu|+|b_\nu|).
  \end{align*}
  The frequency-weighted sum is finite: in \cref{eq:prep_k0_series}, a term of degree \(q\) has frequency at most \(2qH_{\max}\), which cancels the factor \(1/q\) in the logarithmic series. Consequently,
  \begin{align*}
    \sum_\nu\Omega_\nu(|a_\nu|+|b_\nu|)
    &\le 2\sqrt2 H_{\max}
      \sum_{q\ge1}\left(\sum_{n\ge1}r_n\right)^q\\
    &=\frac{2\sqrt2 H_{\max}p_{\mathrm{prep}}}
           {1-2p_{\mathrm{prep}}}\\
    &=O(H_{\max}p_{\mathrm{prep}}).
  \end{align*}
  The factor \(\sqrt2\) accounts for splitting each shifted sine into cosine and sine components. This bound also justifies termwise differentiation. Substituting it into the slope and variance bounds yields
  \[
    \sqrt{\operatorname{Var}(\overline f_N)}
    =O\!\left(
      \frac{\sqrt{c_\mu}\,\lambda\,\alpha^rL_{r,0}(\alpha)
      H_{\max}T p_{\mathrm{prep}}}{N^{3/2}}
    \right).
  \]

  For \(k\ge1\), the frequency-weighted Fourier coefficients are
  absolutely summable, and the \(x\)-derivative of each oscillatory sector
  is \(O(p_{\mathrm{prep}}^{1/2}T^{1-k})\).
  The non-oscillatory terms and the APT remainder are controlled as in
  \cref{app:runtime_variance}, with the additional factor
  \(p_{\mathrm{prep}}^{1/2}\) from subtracting the exact-preparation result.
  The same stratified variance bound therefore adds
  \(O(p_{\mathrm{prep}}^{1/2}N^{-3/2})\) to the standard deviation.
  Combining gives \cref{eq:prep_var}.
\end{proof}


\begin{thebibliography}{73}%
  \makeatletter
  \providecommand \@ifxundefined [1]{%
   \@ifx{#1\undefined}
  }%
  \providecommand \@ifnum [1]{%
   \ifnum #1\expandafter \@firstoftwo
   \else \expandafter \@secondoftwo
   \fi
  }%
  \providecommand \@ifx [1]{%
   \ifx #1\expandafter \@firstoftwo
   \else \expandafter \@secondoftwo
   \fi
  }%
  \providecommand \natexlab [1]{#1}%
  \providecommand \enquote  [1]{``#1''}%
  \providecommand \bibnamefont  [1]{#1}%
  \providecommand \bibfnamefont [1]{#1}%
  \providecommand \citenamefont [1]{#1}%
  \providecommand \href@noop [0]{\@secondoftwo}%
  \providecommand \href [0]{\begingroup \@sanitize@url \@href}%
  \providecommand \@href[1]{\@@startlink{#1}\@@href}%
  \providecommand \@@href[1]{\endgroup#1\@@endlink}%
  \providecommand \@sanitize@url [0]{\catcode `\\12\catcode `\$12\catcode `\&12\catcode `\#12\catcode `\^12\catcode `\_12\catcode `\%12\relax}%
  \providecommand \@@startlink[1]{}%
  \providecommand \@@endlink[0]{}%
  \providecommand \url  [0]{\begingroup\@sanitize@url \@url }%
  \providecommand \@url [1]{\endgroup\@href {#1}{\urlprefix }}%
  \providecommand \urlprefix  [0]{URL }%
  \providecommand \Eprint [0]{\href }%
  \providecommand \doibase [0]{https://doi.org/}%
  \providecommand \selectlanguage [0]{\@gobble}%
  \providecommand \bibinfo  [0]{\@secondoftwo}%
  \providecommand \bibfield  [0]{\@secondoftwo}%
  \providecommand \translation [1]{[#1]}%
  \providecommand \BibitemOpen [0]{}%
  \providecommand \bibitemStop [0]{}%
  \providecommand \bibitemNoStop [0]{.\EOS\space}%
  \providecommand \EOS [0]{\spacefactor3000\relax}%
  \providecommand \BibitemShut  [1]{\csname bibitem#1\endcsname}%
  \let\auto@bib@innerbib\@empty
  %</preamble>
  \bibitem [{\citenamefont {Simon}(1983)}]{Simon1983}%
    \BibitemOpen
    \bibfield  {author} {\bibinfo {author} {\bibfnamefont {B.}~\bibnamefont {Simon}},\ }\bibfield  {title} {\bibinfo {title} {Holonomy, the quantum adiabatic theorem, and {Berry}'s phase},\ }\href {https://doi.org/10.1103/PhysRevLett.51.2167} {\bibfield  {journal} {\bibinfo  {journal} {Phys. Rev. Lett.}\ }\textbf {\bibinfo {volume} {51}},\ \bibinfo {pages} {2167} (\bibinfo {year} {1983})}\BibitemShut {NoStop}%
  \bibitem [{\citenamefont {Berry}(1984)}]{Berry1984}%
    \BibitemOpen
    \bibfield  {author} {\bibinfo {author} {\bibfnamefont {M.~V.}\ \bibnamefont {Berry}},\ }\bibfield  {title} {\bibinfo {title} {Quantal phase factors accompanying adiabatic changes},\ }\href {https://doi.org/10.1098/rspa.1984.0023} {\bibfield  {journal} {\bibinfo  {journal} {Proc. R. Soc. London, Ser. A}\ }\textbf {\bibinfo {volume} {392}},\ \bibinfo {pages} {45} (\bibinfo {year} {1984})}\BibitemShut {NoStop}%
  \bibitem [{\citenamefont {Thouless}\ \emph {et~al.}(1982)\citenamefont {Thouless}, \citenamefont {Kohmoto}, \citenamefont {Nightingale},\ and\ \citenamefont {den Nijs}}]{TKNN1982}%
    \BibitemOpen
    \bibfield  {author} {\bibinfo {author} {\bibfnamefont {D.~J.}\ \bibnamefont {Thouless}}, \bibinfo {author} {\bibfnamefont {M.}~\bibnamefont {Kohmoto}}, \bibinfo {author} {\bibfnamefont {M.~P.}\ \bibnamefont {Nightingale}},\ and\ \bibinfo {author} {\bibfnamefont {M.}~\bibnamefont {den Nijs}},\ }\bibfield  {title} {\bibinfo {title} {Quantized {Hall} conductance in a two-dimensional periodic potential},\ }\href {https://doi.org/10.1103/PhysRevLett.49.405} {\bibfield  {journal} {\bibinfo  {journal} {Phys. Rev. Lett.}\ }\textbf {\bibinfo {volume} {49}},\ \bibinfo {pages} {405} (\bibinfo {year} {1982})}\BibitemShut {NoStop}%
  \bibitem [{\citenamefont {Hasan}\ and\ \citenamefont {Kane}(2010)}]{HasanKane2010}%
    \BibitemOpen
    \bibfield  {author} {\bibinfo {author} {\bibfnamefont {M.~Z.}\ \bibnamefont {Hasan}}\ and\ \bibinfo {author} {\bibfnamefont {C.~L.}\ \bibnamefont {Kane}},\ }\bibfield  {title} {\bibinfo {title} {Colloquium: Topological insulators},\ }\href {https://doi.org/10.1103/RevModPhys.82.3045} {\bibfield  {journal} {\bibinfo  {journal} {Rev. Mod. Phys.}\ }\textbf {\bibinfo {volume} {82}},\ \bibinfo {pages} {3045} (\bibinfo {year} {2010})}\BibitemShut {NoStop}%
  \bibitem [{\citenamefont {Qi}\ and\ \citenamefont {Zhang}(2011)}]{QiZhang2011}%
    \BibitemOpen
    \bibfield  {author} {\bibinfo {author} {\bibfnamefont {X.-L.}\ \bibnamefont {Qi}}\ and\ \bibinfo {author} {\bibfnamefont {S.-C.}\ \bibnamefont {Zhang}},\ }\bibfield  {title} {\bibinfo {title} {Topological insulators and superconductors},\ }\href {https://doi.org/10.1103/RevModPhys.83.1057} {\bibfield  {journal} {\bibinfo  {journal} {Rev. Mod. Phys.}\ }\textbf {\bibinfo {volume} {83}},\ \bibinfo {pages} {1057} (\bibinfo {year} {2011})}\BibitemShut {NoStop}%
  \bibitem [{\citenamefont {King-Smith}\ and\ \citenamefont {Vanderbilt}(1993)}]{KingSmith1993}%
    \BibitemOpen
    \bibfield  {author} {\bibinfo {author} {\bibfnamefont {R.~D.}\ \bibnamefont {King-Smith}}\ and\ \bibinfo {author} {\bibfnamefont {D.}~\bibnamefont {Vanderbilt}},\ }\bibfield  {title} {\bibinfo {title} {Theory of polarization of crystalline solids},\ }\href {https://doi.org/10.1103/PhysRevB.47.1651} {\bibfield  {journal} {\bibinfo  {journal} {Phys. Rev. B}\ }\textbf {\bibinfo {volume} {47}},\ \bibinfo {pages} {1651} (\bibinfo {year} {1993})}\BibitemShut {NoStop}%
  \bibitem [{\citenamefont {Resta}(2000)}]{Resta2000}%
    \BibitemOpen
    \bibfield  {author} {\bibinfo {author} {\bibfnamefont {R.}~\bibnamefont {Resta}},\ }\bibfield  {title} {\bibinfo {title} {Manifestations of {Berry}'s phase in molecules and condensed matter},\ }\href {https://doi.org/10.1088/0953-8984/12/9/201} {\bibfield  {journal} {\bibinfo  {journal} {J. Phys.: Condens. Matter}\ }\textbf {\bibinfo {volume} {12}},\ \bibinfo {pages} {R107} (\bibinfo {year} {2000})}\BibitemShut {NoStop}%
  \bibitem [{\citenamefont {Vanderbilt}(2018)}]{Vanderbilt2018}%
    \BibitemOpen
    \bibfield  {author} {\bibinfo {author} {\bibfnamefont {D.}~\bibnamefont {Vanderbilt}},\ }\href {https://doi.org/10.1017/9781316662205} {\emph {\bibinfo {title} {Berry Phases in Electronic Structure Theory}}}\ (\bibinfo  {publisher} {Cambridge University Press},\ \bibinfo {year} {2018})\BibitemShut {NoStop}%
  \bibitem [{\citenamefont {Nagaosa}\ \emph {et~al.}(2010)\citenamefont {Nagaosa}, \citenamefont {Sinova}, \citenamefont {Onoda}, \citenamefont {MacDonald},\ and\ \citenamefont {Ong}}]{Nagaosa2010}%
    \BibitemOpen
    \bibfield  {author} {\bibinfo {author} {\bibfnamefont {N.}~\bibnamefont {Nagaosa}}, \bibinfo {author} {\bibfnamefont {J.}~\bibnamefont {Sinova}}, \bibinfo {author} {\bibfnamefont {S.}~\bibnamefont {Onoda}}, \bibinfo {author} {\bibfnamefont {A.~H.}\ \bibnamefont {MacDonald}},\ and\ \bibinfo {author} {\bibfnamefont {N.~P.}\ \bibnamefont {Ong}},\ }\bibfield  {title} {\bibinfo {title} {Anomalous {Hall} effect},\ }\href {https://doi.org/10.1103/RevModPhys.82.1539} {\bibfield  {journal} {\bibinfo  {journal} {Rev. Mod. Phys.}\ }\textbf {\bibinfo {volume} {82}},\ \bibinfo {pages} {1539} (\bibinfo {year} {2010})}\BibitemShut {NoStop}%
  \bibitem [{\citenamefont {Xiao}\ \emph {et~al.}(2010)\citenamefont {Xiao}, \citenamefont {Chang},\ and\ \citenamefont {Niu}}]{XiaoChangNiu2010}%
    \BibitemOpen
    \bibfield  {author} {\bibinfo {author} {\bibfnamefont {D.}~\bibnamefont {Xiao}}, \bibinfo {author} {\bibfnamefont {M.-C.}\ \bibnamefont {Chang}},\ and\ \bibinfo {author} {\bibfnamefont {Q.}~\bibnamefont {Niu}},\ }\bibfield  {title} {\bibinfo {title} {Berry phase effects on electronic properties},\ }\href {https://doi.org/10.1103/RevModPhys.82.1959} {\bibfield  {journal} {\bibinfo  {journal} {Rev. Mod. Phys.}\ }\textbf {\bibinfo {volume} {82}},\ \bibinfo {pages} {1959} (\bibinfo {year} {2010})}\BibitemShut {NoStop}%
  \bibitem [{\citenamefont {Peotta}\ and\ \citenamefont {T\"orm\"a}(2015)}]{Peotta2015}%
    \BibitemOpen
    \bibfield  {author} {\bibinfo {author} {\bibfnamefont {S.}~\bibnamefont {Peotta}}\ and\ \bibinfo {author} {\bibfnamefont {P.}~\bibnamefont {T\"orm\"a}},\ }\bibfield  {title} {\bibinfo {title} {Superfluidity in topologically nontrivial flat bands},\ }\href {https://doi.org/10.1038/ncomms9944} {\bibfield  {journal} {\bibinfo  {journal} {Nat. Commun.}\ }\textbf {\bibinfo {volume} {6}},\ \bibinfo {pages} {8944} (\bibinfo {year} {2015})}\BibitemShut {NoStop}%
  \bibitem [{\citenamefont {T\"orm\"a}\ \emph {et~al.}(2022)\citenamefont {T\"orm\"a}, \citenamefont {Peotta},\ and\ \citenamefont {Bernevig}}]{Torma2022}%
    \BibitemOpen
    \bibfield  {author} {\bibinfo {author} {\bibfnamefont {P.}~\bibnamefont {T\"orm\"a}}, \bibinfo {author} {\bibfnamefont {S.}~\bibnamefont {Peotta}},\ and\ \bibinfo {author} {\bibfnamefont {B.~A.}\ \bibnamefont {Bernevig}},\ }\bibfield  {title} {\bibinfo {title} {Superconductivity, superfluidity and quantum geometry in twisted multilayer systems},\ }\href {https://doi.org/10.1038/s42254-022-00466-y} {\bibfield  {journal} {\bibinfo  {journal} {Nat. Rev. Phys.}\ }\textbf {\bibinfo {volume} {4}},\ \bibinfo {pages} {528} (\bibinfo {year} {2022})}\BibitemShut {NoStop}%
  \bibitem [{\citenamefont {Ahn}\ \emph {et~al.}(2022)\citenamefont {Ahn}, \citenamefont {Guo}, \citenamefont {Nagaosa},\ and\ \citenamefont {Vishwanath}}]{Ahn2022}%
    \BibitemOpen
    \bibfield  {author} {\bibinfo {author} {\bibfnamefont {J.}~\bibnamefont {Ahn}}, \bibinfo {author} {\bibfnamefont {G.-Y.}\ \bibnamefont {Guo}}, \bibinfo {author} {\bibfnamefont {N.}~\bibnamefont {Nagaosa}},\ and\ \bibinfo {author} {\bibfnamefont {A.}~\bibnamefont {Vishwanath}},\ }\bibfield  {title} {\bibinfo {title} {Riemannian geometry of resonant optical responses},\ }\href {https://doi.org/10.1038/s41567-021-01465-z} {\bibfield  {journal} {\bibinfo  {journal} {Nat. Phys.}\ }\textbf {\bibinfo {volume} {18}},\ \bibinfo {pages} {290} (\bibinfo {year} {2022})}\BibitemShut {NoStop}%
  \bibitem [{\citenamefont {Roy}(2014)}]{Roy2014}%
    \BibitemOpen
    \bibfield  {author} {\bibinfo {author} {\bibfnamefont {R.}~\bibnamefont {Roy}},\ }\bibfield  {title} {\bibinfo {title} {Band geometry of fractional topological insulators},\ }\href {https://doi.org/10.1103/PhysRevB.90.165139} {\bibfield  {journal} {\bibinfo  {journal} {Phys. Rev. B}\ }\textbf {\bibinfo {volume} {90}},\ \bibinfo {pages} {165139} (\bibinfo {year} {2014})}\BibitemShut {NoStop}%
  \bibitem [{\citenamefont {Liu}\ \emph {et~al.}(2025)\citenamefont {Liu}, \citenamefont {Qiang}, \citenamefont {Lu},\ and\ \citenamefont {Xie}}]{Liu2025}%
    \BibitemOpen
    \bibfield  {author} {\bibinfo {author} {\bibfnamefont {T.}~\bibnamefont {Liu}}, \bibinfo {author} {\bibfnamefont {X.-B.}\ \bibnamefont {Qiang}}, \bibinfo {author} {\bibfnamefont {H.-Z.}\ \bibnamefont {Lu}},\ and\ \bibinfo {author} {\bibfnamefont {X.~C.}\ \bibnamefont {Xie}},\ }\bibfield  {title} {\bibinfo {title} {Quantum geometry in condensed matter},\ }\href {https://doi.org/10.1093/nsr/nwae334} {\bibfield  {journal} {\bibinfo  {journal} {Natl. Sci. Rev.}\ }\textbf {\bibinfo {volume} {12}},\ \bibinfo {pages} {nwae334} (\bibinfo {year} {2025})}\BibitemShut {NoStop}%
  \bibitem [{\citenamefont {Mart{\'i}n-Mart{\'i}nez}\ \emph {et~al.}(2013)\citenamefont {Mart{\'i}n-Mart{\'i}nez}, \citenamefont {Dragan}, \citenamefont {Mann},\ and\ \citenamefont {Fuentes}}]{MartinMartinez2013}%
    \BibitemOpen
    \bibfield  {author} {\bibinfo {author} {\bibfnamefont {E.}~\bibnamefont {Mart{\'i}n-Mart{\'i}nez}}, \bibinfo {author} {\bibfnamefont {A.}~\bibnamefont {Dragan}}, \bibinfo {author} {\bibfnamefont {R.~B.}\ \bibnamefont {Mann}},\ and\ \bibinfo {author} {\bibfnamefont {I.}~\bibnamefont {Fuentes}},\ }\bibfield  {title} {\bibinfo {title} {Berry phase quantum thermometer},\ }\href {https://doi.org/10.1088/1367-2630/15/5/053036} {\bibfield  {journal} {\bibinfo  {journal} {New J. Phys.}\ }\textbf {\bibinfo {volume} {15}},\ \bibinfo {pages} {053036} (\bibinfo {year} {2013})}\BibitemShut {NoStop}%
  \bibitem [{\citenamefont {Ledbetter}\ \emph {et~al.}(2012)\citenamefont {Ledbetter}, \citenamefont {Jensen}, \citenamefont {Fischer}, \citenamefont {Jarmola},\ and\ \citenamefont {Budker}}]{Ledbetter2012}%
    \BibitemOpen
    \bibfield  {author} {\bibinfo {author} {\bibfnamefont {M.~P.}\ \bibnamefont {Ledbetter}}, \bibinfo {author} {\bibfnamefont {K.}~\bibnamefont {Jensen}}, \bibinfo {author} {\bibfnamefont {R.}~\bibnamefont {Fischer}}, \bibinfo {author} {\bibfnamefont {A.}~\bibnamefont {Jarmola}},\ and\ \bibinfo {author} {\bibfnamefont {D.}~\bibnamefont {Budker}},\ }\bibfield  {title} {\bibinfo {title} {Gyroscopes based on nitrogen-vacancy centers in diamond},\ }\href {https://doi.org/10.1103/PhysRevA.86.052116} {\bibfield  {journal} {\bibinfo  {journal} {Phys. Rev. A}\ }\textbf {\bibinfo {volume} {86}},\ \bibinfo {pages} {052116} (\bibinfo {year} {2012})}\BibitemShut {NoStop}%
  \bibitem [{\citenamefont {Arai}\ \emph {et~al.}(2018)\citenamefont {Arai}, \citenamefont {Lee}, \citenamefont {Belthangady}, \citenamefont {Glenn}, \citenamefont {Zhang},\ and\ \citenamefont {Walsworth}}]{Arai2018}%
    \BibitemOpen
    \bibfield  {author} {\bibinfo {author} {\bibfnamefont {K.}~\bibnamefont {Arai}}, \bibinfo {author} {\bibfnamefont {J.}~\bibnamefont {Lee}}, \bibinfo {author} {\bibfnamefont {C.}~\bibnamefont {Belthangady}}, \bibinfo {author} {\bibfnamefont {D.~R.}\ \bibnamefont {Glenn}}, \bibinfo {author} {\bibfnamefont {H.}~\bibnamefont {Zhang}},\ and\ \bibinfo {author} {\bibfnamefont {R.~L.}\ \bibnamefont {Walsworth}},\ }\bibfield  {title} {\bibinfo {title} {Geometric phase magnetometry using a solid-state spin},\ }\href {https://doi.org/10.1038/s41467-018-07489-z} {\bibfield  {journal} {\bibinfo  {journal} {Nat. Commun.}\ }\textbf {\bibinfo {volume} {9}},\ \bibinfo {pages} {4996} (\bibinfo {year} {2018})}\BibitemShut {NoStop}%
  \bibitem [{\citenamefont {Johnsson}\ \emph {et~al.}(2020)\citenamefont {Johnsson}, \citenamefont {Mukty}, \citenamefont {Burgarth}, \citenamefont {Volz},\ and\ \citenamefont {Brennen}}]{Johnsson2020}%
    \BibitemOpen
    \bibfield  {author} {\bibinfo {author} {\bibfnamefont {M.~T.}\ \bibnamefont {Johnsson}}, \bibinfo {author} {\bibfnamefont {N.~R.}\ \bibnamefont {Mukty}}, \bibinfo {author} {\bibfnamefont {D.}~\bibnamefont {Burgarth}}, \bibinfo {author} {\bibfnamefont {T.}~\bibnamefont {Volz}},\ and\ \bibinfo {author} {\bibfnamefont {G.~K.}\ \bibnamefont {Brennen}},\ }\bibfield  {title} {\bibinfo {title} {Geometric pathway to scalable quantum sensing},\ }\href {https://doi.org/10.1103/PhysRevLett.125.190403} {\bibfield  {journal} {\bibinfo  {journal} {Phys. Rev. Lett.}\ }\textbf {\bibinfo {volume} {125}},\ \bibinfo {pages} {190403} (\bibinfo {year} {2020})}\BibitemShut {NoStop}%
  \bibitem [{\citenamefont {Cohen}\ \emph {et~al.}(2019)\citenamefont {Cohen}, \citenamefont {Larocque}, \citenamefont {Bouchard}, \citenamefont {Nejadsattari}, \citenamefont {Gefen},\ and\ \citenamefont {Karimi}}]{Cohen2019}%
    \BibitemOpen
    \bibfield  {author} {\bibinfo {author} {\bibfnamefont {E.}~\bibnamefont {Cohen}}, \bibinfo {author} {\bibfnamefont {H.}~\bibnamefont {Larocque}}, \bibinfo {author} {\bibfnamefont {F.}~\bibnamefont {Bouchard}}, \bibinfo {author} {\bibfnamefont {F.}~\bibnamefont {Nejadsattari}}, \bibinfo {author} {\bibfnamefont {Y.}~\bibnamefont {Gefen}},\ and\ \bibinfo {author} {\bibfnamefont {E.}~\bibnamefont {Karimi}},\ }\bibfield  {title} {\bibinfo {title} {Geometric phase from {Aharonov}--{Bohm} to {Pancharatnam}--{Berry} and beyond},\ }\href {https://doi.org/10.1038/s42254-019-0071-1} {\bibfield  {journal} {\bibinfo  {journal} {Nat. Rev. Phys.}\ }\textbf {\bibinfo {volume} {1}},\ \bibinfo {pages} {437} (\bibinfo {year} {2019})}\BibitemShut {NoStop}%
  \bibitem [{\citenamefont {Cisowski}\ \emph {et~al.}(2022)\citenamefont {Cisowski}, \citenamefont {G{\"o}tte},\ and\ \citenamefont {Franke-Arnold}}]{Cisowski2022}%
    \BibitemOpen
    \bibfield  {author} {\bibinfo {author} {\bibfnamefont {C.}~\bibnamefont {Cisowski}}, \bibinfo {author} {\bibfnamefont {J.~B.}\ \bibnamefont {G{\"o}tte}},\ and\ \bibinfo {author} {\bibfnamefont {S.}~\bibnamefont {Franke-Arnold}},\ }\bibfield  {title} {\bibinfo {title} {Colloquium: Geometric phases of light: Insights from fiber bundle theory},\ }\href {https://doi.org/10.1103/RevModPhys.94.031001} {\bibfield  {journal} {\bibinfo  {journal} {Rev. Mod. Phys.}\ }\textbf {\bibinfo {volume} {94}},\ \bibinfo {pages} {031001} (\bibinfo {year} {2022})}\BibitemShut {NoStop}%
  \bibitem [{\citenamefont {Zanardi}\ and\ \citenamefont {Rasetti}(1999)}]{Zanardi1999}%
    \BibitemOpen
    \bibfield  {author} {\bibinfo {author} {\bibfnamefont {P.}~\bibnamefont {Zanardi}}\ and\ \bibinfo {author} {\bibfnamefont {M.}~\bibnamefont {Rasetti}},\ }\bibfield  {title} {\bibinfo {title} {Holonomic quantum computation},\ }\href {https://doi.org/10.1016/S0375-9601(99)00803-8} {\bibfield  {journal} {\bibinfo  {journal} {Phys. Lett. A}\ }\textbf {\bibinfo {volume} {264}},\ \bibinfo {pages} {94} (\bibinfo {year} {1999})}\BibitemShut {NoStop}%
  \bibitem [{\citenamefont {Jones}\ \emph {et~al.}(2000)\citenamefont {Jones}, \citenamefont {Vedral}, \citenamefont {Ekert},\ and\ \citenamefont {Castagnoli}}]{Jones2000}%
    \BibitemOpen
    \bibfield  {author} {\bibinfo {author} {\bibfnamefont {J.~A.}\ \bibnamefont {Jones}}, \bibinfo {author} {\bibfnamefont {V.}~\bibnamefont {Vedral}}, \bibinfo {author} {\bibfnamefont {A.}~\bibnamefont {Ekert}},\ and\ \bibinfo {author} {\bibfnamefont {G.}~\bibnamefont {Castagnoli}},\ }\bibfield  {title} {\bibinfo {title} {Geometric quantum computation using nuclear magnetic resonance},\ }\href {https://doi.org/10.1038/35002528} {\bibfield  {journal} {\bibinfo  {journal} {Nature}\ }\textbf {\bibinfo {volume} {403}},\ \bibinfo {pages} {869} (\bibinfo {year} {2000})}\BibitemShut {NoStop}%
  \bibitem [{\citenamefont {Sj\"oqvist}\ \emph {et~al.}(2012)\citenamefont {Sj\"oqvist}, \citenamefont {Tong}, \citenamefont {Andersson}, \citenamefont {Hessmo}, \citenamefont {Johansson},\ and\ \citenamefont {Singh}}]{Sjoqvist2012}%
    \BibitemOpen
    \bibfield  {author} {\bibinfo {author} {\bibfnamefont {E.}~\bibnamefont {Sj\"oqvist}}, \bibinfo {author} {\bibfnamefont {D.~M.}\ \bibnamefont {Tong}}, \bibinfo {author} {\bibfnamefont {L.~M.}\ \bibnamefont {Andersson}}, \bibinfo {author} {\bibfnamefont {B.}~\bibnamefont {Hessmo}}, \bibinfo {author} {\bibfnamefont {M.}~\bibnamefont {Johansson}},\ and\ \bibinfo {author} {\bibfnamefont {K.}~\bibnamefont {Singh}},\ }\bibfield  {title} {\bibinfo {title} {Non-adiabatic holonomic quantum computation},\ }\href {https://doi.org/10.1088/1367-2630/14/10/103035} {\bibfield  {journal} {\bibinfo  {journal} {New J. Phys.}\ }\textbf {\bibinfo {volume} {14}},\ \bibinfo {pages} {103035} (\bibinfo {year} {2012})}\BibitemShut {NoStop}%
  \bibitem [{\citenamefont {Leibfried}\ \emph {et~al.}(2003)\citenamefont {Leibfried}, \citenamefont {DeMarco}, \citenamefont {Meyer}, \citenamefont {Lucas}, \citenamefont {Barrett}, \citenamefont {Britton}, \citenamefont {Itano}, \citenamefont {Jelenkovi{\'c}}, \citenamefont {Langer}, \citenamefont {Rosenband},\ and\ \citenamefont {Wineland}}]{Leibfried2003}%
    \BibitemOpen
    \bibfield  {author} {\bibinfo {author} {\bibfnamefont {D.}~\bibnamefont {Leibfried}}, \bibinfo {author} {\bibfnamefont {B.}~\bibnamefont {DeMarco}}, \bibinfo {author} {\bibfnamefont {V.}~\bibnamefont {Meyer}}, \bibinfo {author} {\bibfnamefont {D.}~\bibnamefont {Lucas}}, \bibinfo {author} {\bibfnamefont {M.}~\bibnamefont {Barrett}}, \bibinfo {author} {\bibfnamefont {J.}~\bibnamefont {Britton}}, \bibinfo {author} {\bibfnamefont {W.~M.}\ \bibnamefont {Itano}}, \bibinfo {author} {\bibfnamefont {B.}~\bibnamefont {Jelenkovi{\'c}}}, \bibinfo {author} {\bibfnamefont {C.}~\bibnamefont {Langer}}, \bibinfo {author} {\bibfnamefont {T.}~\bibnamefont {Rosenband}},\ and\ \bibinfo {author} {\bibfnamefont {D.~J.}\ \bibnamefont {Wineland}},\ }\bibfield  {title} {\bibinfo {title} {Experimental demonstration of a robust, high-fidelity geometric two ion-qubit phase gate},\ }\href {https://doi.org/10.1038/nature01492} {\bibfield  {journal} {\bibinfo  {journal} {Nature}\ }\textbf {\bibinfo {volume} {422}},\ \bibinfo {pages} {412} (\bibinfo {year} {2003})}\BibitemShut {NoStop}%
  \bibitem [{\citenamefont {Colmenar}\ \emph {et~al.}(2022)\citenamefont {Colmenar}, \citenamefont {G{\"u}ng{\"o}rd{\"u}},\ and\ \citenamefont {Kestner}}]{Colmenar2022}%
    \BibitemOpen
    \bibfield  {author} {\bibinfo {author} {\bibfnamefont {R.~K.~L.}\ \bibnamefont {Colmenar}}, \bibinfo {author} {\bibfnamefont {U.}~\bibnamefont {G{\"u}ng{\"o}rd{\"u}}},\ and\ \bibinfo {author} {\bibfnamefont {J.~P.}\ \bibnamefont {Kestner}},\ }\bibfield  {title} {\bibinfo {title} {Conditions for equivalent noise sensitivity of geometric and dynamical quantum gates},\ }\href {https://doi.org/10.1103/PRXQuantum.3.030310} {\bibfield  {journal} {\bibinfo  {journal} {PRX Quantum}\ }\textbf {\bibinfo {volume} {3}},\ \bibinfo {pages} {030310} (\bibinfo {year} {2022})}\BibitemShut {NoStop}%
  \bibitem [{\citenamefont {Liu}\ \emph {et~al.}(2021)\citenamefont {Liu}, \citenamefont {Wang},\ and\ \citenamefont {Yung}}]{Liu2021}%
    \BibitemOpen
    \bibfield  {author} {\bibinfo {author} {\bibfnamefont {B.-J.}\ \bibnamefont {Liu}}, \bibinfo {author} {\bibfnamefont {Y.-S.}\ \bibnamefont {Wang}},\ and\ \bibinfo {author} {\bibfnamefont {M.-H.}\ \bibnamefont {Yung}},\ }\bibfield  {title} {\bibinfo {title} {Super-robust nonadiabatic geometric quantum control},\ }\href {https://doi.org/10.1103/PhysRevResearch.3.L032066} {\bibfield  {journal} {\bibinfo  {journal} {Phys. Rev. Research}\ }\textbf {\bibinfo {volume} {3}},\ \bibinfo {pages} {L032066} (\bibinfo {year} {2021})}\BibitemShut {NoStop}%
  \bibitem [{\citenamefont {Feynman}(1982)}]{Feynman1982}%
    \BibitemOpen
    \bibfield  {author} {\bibinfo {author} {\bibfnamefont {R.~P.}\ \bibnamefont {Feynman}},\ }\bibfield  {title} {\bibinfo {title} {Simulating physics with computers},\ }\href {https://doi.org/10.1007/BF02650179} {\bibfield  {journal} {\bibinfo  {journal} {Int. J. Theor. Phys.}\ }\textbf {\bibinfo {volume} {21}},\ \bibinfo {pages} {467} (\bibinfo {year} {1982})}\BibitemShut {NoStop}%
  \bibitem [{\citenamefont {Lloyd}(1996)}]{Lloyd1996}%
    \BibitemOpen
    \bibfield  {author} {\bibinfo {author} {\bibfnamefont {S.}~\bibnamefont {Lloyd}},\ }\bibfield  {title} {\bibinfo {title} {Universal quantum simulators},\ }\href {https://doi.org/10.1126/science.273.5278.1073} {\bibfield  {journal} {\bibinfo  {journal} {Science}\ }\textbf {\bibinfo {volume} {273}},\ \bibinfo {pages} {1073} (\bibinfo {year} {1996})}\BibitemShut {NoStop}%
  \bibitem [{\citenamefont {Georgescu}\ \emph {et~al.}(2014)\citenamefont {Georgescu}, \citenamefont {Ashhab},\ and\ \citenamefont {Nori}}]{Georgescu2014}%
    \BibitemOpen
    \bibfield  {author} {\bibinfo {author} {\bibfnamefont {I.~M.}\ \bibnamefont {Georgescu}}, \bibinfo {author} {\bibfnamefont {S.}~\bibnamefont {Ashhab}},\ and\ \bibinfo {author} {\bibfnamefont {F.}~\bibnamefont {Nori}},\ }\bibfield  {title} {\bibinfo {title} {Quantum simulation},\ }\href {https://doi.org/10.1103/RevModPhys.86.153} {\bibfield  {journal} {\bibinfo  {journal} {Rev. Mod. Phys.}\ }\textbf {\bibinfo {volume} {86}},\ \bibinfo {pages} {153} (\bibinfo {year} {2014})}\BibitemShut {NoStop}%
  \bibitem [{\citenamefont {Arute}\ \emph {et~al.}(2019)\citenamefont {Arute}, \citenamefont {Arya}, \citenamefont {Babbush}, \citenamefont {Bacon}, \citenamefont {Bardin}, \citenamefont {Barends}, \citenamefont {Biswas}, \citenamefont {Boixo}, \citenamefont {Brandao}, \citenamefont {Buell}, \citenamefont {Burkett}, \citenamefont {Chen}, \citenamefont {Chen}, \citenamefont {Chiaro}, \citenamefont {Collins}, \citenamefont {Courtney}, \citenamefont {Dunsworth}, \citenamefont {Farhi}, \citenamefont {Foxen}, \citenamefont {Fowler}, \citenamefont {Gidney}, \citenamefont {Giustina}, \citenamefont {Graff}, \citenamefont {Guerin}, \citenamefont {Habegger}, \citenamefont {Harrigan}, \citenamefont {Hartmann}, \citenamefont {Ho}, \citenamefont {Hoffmann}, \citenamefont {Huang}, \citenamefont {Humble}, \citenamefont {Isakov}, \citenamefont {Jeffrey}, \citenamefont {Jiang}, \citenamefont {Kafri}, \citenamefont {Kechedzhi}, \citenamefont {Kelly}, \citenamefont {Klimov}, \citenamefont {Knysh}, \citenamefont {Korotkov}, \citenamefont {Kostritsa}, \citenamefont {Landhuis}, \citenamefont {Lindmark}, \citenamefont {Lucero}, \citenamefont {Lyakh}, \citenamefont {Mandr{\`a}}, \citenamefont {McClean}, \citenamefont {McEwen}, \citenamefont {Megrant}, \citenamefont {Mi}, \citenamefont {Michielsen}, \citenamefont {Mohseni}, \citenamefont {Mutus}, \citenamefont {Naaman}, \citenamefont {Neeley}, \citenamefont {Neill}, \citenamefont {Niu}, \citenamefont {Ostby}, \citenamefont {Petukhov}, \citenamefont {Platt}, \citenamefont {Quintana}, \citenamefont {Rieffel}, \citenamefont {Roushan}, \citenamefont {Rubin}, \citenamefont {Sank}, \citenamefont {Satzinger}, \citenamefont {Smelyanskiy}, \citenamefont {Sung}, \citenamefont {Trevithick}, \citenamefont {Vainsencher}, \citenamefont {Villalonga}, \citenamefont {White}, \citenamefont {Yao}, \citenamefont {Yeh}, \citenamefont {Zalcman}, \citenamefont {Neven},\ and\ \citenamefont {Martinis}}]{Arute2019}%
    \BibitemOpen
    \bibfield  {author} {\bibinfo {author} {\bibfnamefont {F.}~\bibnamefont {Arute}}, \bibinfo {author} {\bibfnamefont {K.}~\bibnamefont {Arya}}, \bibinfo {author} {\bibfnamefont {R.}~\bibnamefont {Babbush}}, \bibinfo {author} {\bibfnamefont {D.}~\bibnamefont {Bacon}}, \bibinfo {author} {\bibfnamefont {J.~C.}\ \bibnamefont {Bardin}}, \bibinfo {author} {\bibfnamefont {R.}~\bibnamefont {Barends}}, \bibinfo {author} {\bibfnamefont {R.}~\bibnamefont {Biswas}}, \bibinfo {author} {\bibfnamefont {S.}~\bibnamefont {Boixo}}, \bibinfo {author} {\bibfnamefont {F.~G. S.~L.}\ \bibnamefont {Brandao}}, \bibinfo {author} {\bibfnamefont {D.~A.}\ \bibnamefont {Buell}}, \bibinfo {author} {\bibfnamefont {B.}~\bibnamefont {Burkett}}, \bibinfo {author} {\bibfnamefont {Y.}~\bibnamefont {Chen}}, \bibinfo {author} {\bibfnamefont {Z.}~\bibnamefont {Chen}}, \bibinfo {author} {\bibfnamefont {B.}~\bibnamefont {Chiaro}}, \bibinfo {author} {\bibfnamefont {R.}~\bibnamefont {Collins}}, \bibinfo {author} {\bibfnamefont {W.}~\bibnamefont {Courtney}}, \bibinfo {author} {\bibfnamefont {A.}~\bibnamefont {Dunsworth}}, \bibinfo {author} {\bibfnamefont {E.}~\bibnamefont {Farhi}}, \bibinfo {author} {\bibfnamefont {B.}~\bibnamefont {Foxen}}, \bibinfo {author} {\bibfnamefont {A.}~\bibnamefont {Fowler}}, \bibinfo {author} {\bibfnamefont {C.}~\bibnamefont {Gidney}}, \bibinfo {author} {\bibfnamefont {M.}~\bibnamefont {Giustina}}, \bibinfo {author} {\bibfnamefont {R.}~\bibnamefont {Graff}}, \bibinfo {author} {\bibfnamefont {K.}~\bibnamefont {Guerin}}, \bibinfo {author} {\bibfnamefont {S.}~\bibnamefont {Habegger}}, \bibinfo {author} {\bibfnamefont {M.~P.}\ \bibnamefont {Harrigan}}, \bibinfo {author} {\bibfnamefont {M.~J.}\ \bibnamefont {Hartmann}}, \bibinfo {author} {\bibfnamefont {A.}~\bibnamefont {Ho}}, \bibinfo {author} {\bibfnamefont {M.}~\bibnamefont {Hoffmann}}, \bibinfo {author} {\bibfnamefont {T.}~\bibnamefont {Huang}}, \bibinfo {author} {\bibfnamefont {T.~S.}\ \bibnamefont {Humble}}, \bibinfo {author} {\bibfnamefont {S.~V.}\ \bibnamefont {Isakov}}, \bibinfo {author} {\bibfnamefont {E.}~\bibnamefont {Jeffrey}}, \bibinfo {author} {\bibfnamefont {Z.}~\bibnamefont {Jiang}}, \bibinfo {author} {\bibfnamefont {D.}~\bibnamefont {Kafri}}, \bibinfo {author} {\bibfnamefont {K.}~\bibnamefont {Kechedzhi}}, \bibinfo {author} {\bibfnamefont {J.}~\bibnamefont {Kelly}}, \bibinfo {author} {\bibfnamefont {P.~V.}\ \bibnamefont {Klimov}}, \bibinfo {author} {\bibfnamefont {S.}~\bibnamefont {Knysh}}, \bibinfo {author} {\bibfnamefont {A.}~\bibnamefont {Korotkov}}, \bibinfo {author} {\bibfnamefont {F.}~\bibnamefont {Kostritsa}}, \bibinfo {author} {\bibfnamefont {D.}~\bibnamefont {Landhuis}}, \bibinfo {author} {\bibfnamefont {M.}~\bibnamefont {Lindmark}}, \bibinfo {author} {\bibfnamefont {E.}~\bibnamefont {Lucero}}, \bibinfo {author} {\bibfnamefont {D.}~\bibnamefont {Lyakh}}, \bibinfo {author} {\bibfnamefont {S.}~\bibnamefont {Mandr{\`a}}}, \bibinfo {author} {\bibfnamefont {J.~R.}\ \bibnamefont {McClean}}, \bibinfo {author} {\bibfnamefont {M.}~\bibnamefont {McEwen}}, \bibinfo {author} {\bibfnamefont {A.}~\bibnamefont {Megrant}}, \bibinfo {author} {\bibfnamefont {X.}~\bibnamefont {Mi}}, \bibinfo {author} {\bibfnamefont {K.}~\bibnamefont {Michielsen}}, \bibinfo {author} {\bibfnamefont {M.}~\bibnamefont {Mohseni}}, \bibinfo {author} {\bibfnamefont {J.}~\bibnamefont {Mutus}}, \bibinfo {author} {\bibfnamefont {O.}~\bibnamefont {Naaman}}, \bibinfo {author} {\bibfnamefont {M.}~\bibnamefont {Neeley}}, \bibinfo {author} {\bibfnamefont {C.}~\bibnamefont {Neill}}, \bibinfo {author} {\bibfnamefont {M.~Y.}\ \bibnamefont {Niu}}, \bibinfo {author} {\bibfnamefont {E.}~\bibnamefont {Ostby}}, \bibinfo {author} {\bibfnamefont {A.}~\bibnamefont {Petukhov}}, \bibinfo {author} {\bibfnamefont {J.~C.}\ \bibnamefont {Platt}}, \bibinfo {author} {\bibfnamefont {C.}~\bibnamefont {Quintana}}, \bibinfo {author} {\bibfnamefont {E.~G.}\ \bibnamefont {Rieffel}}, \bibinfo {author} {\bibfnamefont {P.}~\bibnamefont {Roushan}}, \bibinfo {author} {\bibfnamefont {N.~C.}\ \bibnamefont {Rubin}}, \bibinfo {author} {\bibfnamefont {D.}~\bibnamefont {Sank}}, \bibinfo {author} {\bibfnamefont {K.~J.}\ \bibnamefont {Satzinger}}, \bibinfo {author} {\bibfnamefont {V.}~\bibnamefont {Smelyanskiy}}, \bibinfo {author} {\bibfnamefont {K.~J.}\ \bibnamefont {Sung}}, \bibinfo {author} {\bibfnamefont {M.~D.}\ \bibnamefont {Trevithick}}, \bibinfo {author} {\bibfnamefont {A.}~\bibnamefont {Vainsencher}}, \bibinfo {author} {\bibfnamefont {B.}~\bibnamefont {Villalonga}}, \bibinfo {author} {\bibfnamefont {T.}~\bibnamefont {White}}, \bibinfo {author} {\bibfnamefont {Z.~J.}\ \bibnamefont {Yao}}, \bibinfo {author} {\bibfnamefont {P.}~\bibnamefont {Yeh}}, \bibinfo {author} {\bibfnamefont {A.}~\bibnamefont {Zalcman}}, \bibinfo {author} {\bibfnamefont {H.}~\bibnamefont {Neven}},\ and\ \bibinfo {author} {\bibfnamefont {J.~M.}\ \bibnamefont {Martinis}},\ }\bibfield  {title} {\bibinfo {title} {Quantum supremacy using a programmable superconducting processor},\ }\href {https://doi.org/10.1038/s41586-019-1666-5} {\bibfield  {journal} {\bibinfo  {journal} {Nature}\ }\textbf {\bibinfo {volume} {574}},\ \bibinfo {pages} {505} (\bibinfo {year} {2019})}\BibitemShut {NoStop}%
  \bibitem [{\citenamefont {Preskill}(2018)}]{Preskill2018}%
    \BibitemOpen
    \bibfield  {author} {\bibinfo {author} {\bibfnamefont {J.}~\bibnamefont {Preskill}},\ }\bibfield  {title} {\bibinfo {title} {Quantum computing in the {NISQ} era and beyond},\ }\href {https://doi.org/10.22331/q-2018-08-06-79} {\bibfield  {journal} {\bibinfo  {journal} {Quantum}\ }\textbf {\bibinfo {volume} {2}},\ \bibinfo {pages} {79} (\bibinfo {year} {2018})}\BibitemShut {NoStop}%
  \bibitem [{\citenamefont {Kim}\ \emph {et~al.}(2023)\citenamefont {Kim}, \citenamefont {Eddins}, \citenamefont {Anand}, \citenamefont {Wei}, \citenamefont {van~den Berg}, \citenamefont {Rosenblatt}, \citenamefont {Nayfeh}, \citenamefont {Wu}, \citenamefont {Zaletel}, \citenamefont {Temme},\ and\ \citenamefont {Kandala}}]{Kim2023}%
    \BibitemOpen
    \bibfield  {author} {\bibinfo {author} {\bibfnamefont {Y.}~\bibnamefont {Kim}}, \bibinfo {author} {\bibfnamefont {A.}~\bibnamefont {Eddins}}, \bibinfo {author} {\bibfnamefont {S.}~\bibnamefont {Anand}}, \bibinfo {author} {\bibfnamefont {K.~X.}\ \bibnamefont {Wei}}, \bibinfo {author} {\bibfnamefont {E.}~\bibnamefont {van~den Berg}}, \bibinfo {author} {\bibfnamefont {S.}~\bibnamefont {Rosenblatt}}, \bibinfo {author} {\bibfnamefont {H.}~\bibnamefont {Nayfeh}}, \bibinfo {author} {\bibfnamefont {Y.}~\bibnamefont {Wu}}, \bibinfo {author} {\bibfnamefont {M.}~\bibnamefont {Zaletel}}, \bibinfo {author} {\bibfnamefont {K.}~\bibnamefont {Temme}},\ and\ \bibinfo {author} {\bibfnamefont {A.}~\bibnamefont {Kandala}},\ }\bibfield  {title} {\bibinfo {title} {Evidence for the utility of quantum computing before fault tolerance},\ }\href {https://doi.org/10.1038/s41586-023-06096-3} {\bibfield  {journal} {\bibinfo  {journal} {Nature}\ }\textbf {\bibinfo {volume} {618}},\ \bibinfo {pages} {500} (\bibinfo {year} {2023})}\BibitemShut {NoStop}%
  \bibitem [{\citenamefont {Fukui}\ \emph {et~al.}(2005)\citenamefont {Fukui}, \citenamefont {Hatsugai},\ and\ \citenamefont {Suzuki}}]{FukuiHatsugaiSuzuki2005}%
    \BibitemOpen
    \bibfield  {author} {\bibinfo {author} {\bibfnamefont {T.}~\bibnamefont {Fukui}}, \bibinfo {author} {\bibfnamefont {Y.}~\bibnamefont {Hatsugai}},\ and\ \bibinfo {author} {\bibfnamefont {H.}~\bibnamefont {Suzuki}},\ }\bibfield  {title} {\bibinfo {title} {Chern numbers in discretized {Brillouin} zone: Efficient method of computing (spin) {Hall} conductances},\ }\href {https://doi.org/10.1143/JPSJ.74.1674} {\bibfield  {journal} {\bibinfo  {journal} {J. Phys. Soc. Jpn.}\ }\textbf {\bibinfo {volume} {74}},\ \bibinfo {pages} {1674} (\bibinfo {year} {2005})}\BibitemShut {NoStop}%
  \bibitem [{\citenamefont {Yu}\ \emph {et~al.}(2011)\citenamefont {Yu}, \citenamefont {Qi}, \citenamefont {Bernevig}, \citenamefont {Fang},\ and\ \citenamefont {Dai}}]{Yu2011}%
    \BibitemOpen
    \bibfield  {author} {\bibinfo {author} {\bibfnamefont {R.}~\bibnamefont {Yu}}, \bibinfo {author} {\bibfnamefont {X.~L.}\ \bibnamefont {Qi}}, \bibinfo {author} {\bibfnamefont {A.}~\bibnamefont {Bernevig}}, \bibinfo {author} {\bibfnamefont {Z.}~\bibnamefont {Fang}},\ and\ \bibinfo {author} {\bibfnamefont {X.}~\bibnamefont {Dai}},\ }\bibfield  {title} {\bibinfo {title} {Equivalent expression of $\mathbb{Z}_2$ topological invariant for band insulators using the non-{Abelian} {Berry} connection},\ }\href {https://doi.org/10.1103/PhysRevB.84.075119} {\bibfield  {journal} {\bibinfo  {journal} {Phys. Rev. B}\ }\textbf {\bibinfo {volume} {84}},\ \bibinfo {pages} {075119} (\bibinfo {year} {2011})}\BibitemShut {NoStop}%
  \bibitem [{\citenamefont {Soluyanov}\ and\ \citenamefont {Vanderbilt}(2011)}]{Soluyanov2011}%
    \BibitemOpen
    \bibfield  {author} {\bibinfo {author} {\bibfnamefont {A.~A.}\ \bibnamefont {Soluyanov}}\ and\ \bibinfo {author} {\bibfnamefont {D.}~\bibnamefont {Vanderbilt}},\ }\bibfield  {title} {\bibinfo {title} {Computing topological invariants without inversion symmetry},\ }\href {https://doi.org/10.1103/PhysRevB.83.235401} {\bibfield  {journal} {\bibinfo  {journal} {Phys. Rev. B}\ }\textbf {\bibinfo {volume} {83}},\ \bibinfo {pages} {235401} (\bibinfo {year} {2011})}\BibitemShut {NoStop}%
  \bibitem [{\citenamefont {Hayakawa}\ \emph {et~al.}(2025)\citenamefont {Hayakawa}, \citenamefont {Sakamoto},\ and\ \citenamefont {Kiumi}}]{HayakawaSakamotoKiumi2025}%
    \BibitemOpen
    \bibfield  {author} {\bibinfo {author} {\bibfnamefont {R.}~\bibnamefont {Hayakawa}}, \bibinfo {author} {\bibfnamefont {K.}~\bibnamefont {Sakamoto}},\ and\ \bibinfo {author} {\bibfnamefont {C.}~\bibnamefont {Kiumi}},\ }\href {https://doi.org/10.48550/arXiv.2509.13423} {\bibinfo {title} {Computational complexity of {Berry} phase estimation in topological phases of matter}} (\bibinfo {year} {2025}),\ \Eprint {https://arxiv.org/abs/2509.13423} {arXiv:2509.13423 [quant-ph]} \BibitemShut {NoStop}%
  \bibitem [{\citenamefont {Zhang}\ \emph {et~al.}(2010)\citenamefont {Zhang}, \citenamefont {Wei},\ and\ \citenamefont {Papageorgiou}}]{ZhangWei2010}%
    \BibitemOpen
    \bibfield  {author} {\bibinfo {author} {\bibfnamefont {C.}~\bibnamefont {Zhang}}, \bibinfo {author} {\bibfnamefont {Z.}~\bibnamefont {Wei}},\ and\ \bibinfo {author} {\bibfnamefont {A.}~\bibnamefont {Papageorgiou}},\ }\bibfield  {title} {\bibinfo {title} {Adiabatic quantum counting by geometric phase estimation},\ }\href {https://doi.org/10.1007/s11128-009-0132-y} {\bibfield  {journal} {\bibinfo  {journal} {Quantum Inf. Process.}\ }\textbf {\bibinfo {volume} {9}},\ \bibinfo {pages} {369} (\bibinfo {year} {2010})}\BibitemShut {NoStop}%
  \bibitem [{\citenamefont {Murta}\ \emph {et~al.}(2020)\citenamefont {Murta}, \citenamefont {Catarina},\ and\ \citenamefont {Fern{\'a}ndez-Rossier}}]{MurtaCatarinaFernandezRossier2020}%
    \BibitemOpen
    \bibfield  {author} {\bibinfo {author} {\bibfnamefont {B.}~\bibnamefont {Murta}}, \bibinfo {author} {\bibfnamefont {G.}~\bibnamefont {Catarina}},\ and\ \bibinfo {author} {\bibfnamefont {J.}~\bibnamefont {Fern{\'a}ndez-Rossier}},\ }\bibfield  {title} {\bibinfo {title} {Berry phase estimation in gate-based adiabatic quantum simulation},\ }\href {https://doi.org/10.1103/PhysRevA.101.020302} {\bibfield  {journal} {\bibinfo  {journal} {Phys. Rev. A}\ }\textbf {\bibinfo {volume} {101}},\ \bibinfo {pages} {020302} (\bibinfo {year} {2020})}\BibitemShut {NoStop}%
  \bibitem [{\citenamefont {Tamiya}\ \emph {et~al.}(2021)\citenamefont {Tamiya}, \citenamefont {Koh},\ and\ \citenamefont {Nakagawa}}]{TamiyaKohNakagawa2021}%
    \BibitemOpen
    \bibfield  {author} {\bibinfo {author} {\bibfnamefont {S.}~\bibnamefont {Tamiya}}, \bibinfo {author} {\bibfnamefont {S.}~\bibnamefont {Koh}},\ and\ \bibinfo {author} {\bibfnamefont {Y.~O.}\ \bibnamefont {Nakagawa}},\ }\bibfield  {title} {\bibinfo {title} {Calculating nonadiabatic couplings and {Berry}'s phase by variational quantum eigensolvers},\ }\href {https://doi.org/10.1103/PhysRevResearch.3.023244} {\bibfield  {journal} {\bibinfo  {journal} {Phys. Rev. Research}\ }\textbf {\bibinfo {volume} {3}},\ \bibinfo {pages} {023244} (\bibinfo {year} {2021})}\BibitemShut {NoStop}%
  \bibitem [{\citenamefont {Mootz}\ and\ \citenamefont {Yao}(2026)}]{MootzYao2026}%
    \BibitemOpen
    \bibfield  {author} {\bibinfo {author} {\bibfnamefont {M.}~\bibnamefont {Mootz}}\ and\ \bibinfo {author} {\bibfnamefont {Y.-X.}\ \bibnamefont {Yao}},\ }\bibfield  {title} {\bibinfo {title} {Efficient {Berry} phase calculation via adaptive variational quantum computing approach},\ }\href {https://doi.org/10.1063/5.0294540} {\bibfield  {journal} {\bibinfo  {journal} {APL Quantum}\ }\textbf {\bibinfo {volume} {3}},\ \bibinfo {pages} {016111} (\bibinfo {year} {2026})}\BibitemShut {NoStop}%
  \bibitem [{\citenamefont {Rigolin}\ \emph {et~al.}(2008)\citenamefont {Rigolin}, \citenamefont {Ortiz},\ and\ \citenamefont {Ponce}}]{RigolinOrtizPonce2008}%
    \BibitemOpen
    \bibfield  {author} {\bibinfo {author} {\bibfnamefont {G.}~\bibnamefont {Rigolin}}, \bibinfo {author} {\bibfnamefont {G.}~\bibnamefont {Ortiz}},\ and\ \bibinfo {author} {\bibfnamefont {V.~H.}\ \bibnamefont {Ponce}},\ }\bibfield  {title} {\bibinfo {title} {Beyond the quantum adiabatic approximation: Adiabatic perturbation theory},\ }\href {https://doi.org/10.1103/PhysRevA.78.052508} {\bibfield  {journal} {\bibinfo  {journal} {Phys. Rev. A}\ }\textbf {\bibinfo {volume} {78}},\ \bibinfo {pages} {052508} (\bibinfo {year} {2008})}\BibitemShut {NoStop}%
  \bibitem [{\citenamefont {Avron}\ \emph {et~al.}(1987)\citenamefont {Avron}, \citenamefont {Seiler},\ and\ \citenamefont {Yaffe}}]{AvronSeilerYaffe1987}%
    \BibitemOpen
    \bibfield  {author} {\bibinfo {author} {\bibfnamefont {J.~E.}\ \bibnamefont {Avron}}, \bibinfo {author} {\bibfnamefont {R.}~\bibnamefont {Seiler}},\ and\ \bibinfo {author} {\bibfnamefont {L.~G.}\ \bibnamefont {Yaffe}},\ }\bibfield  {title} {\bibinfo {title} {Adiabatic theorems and applications to the quantum {Hall} effect},\ }\href {https://doi.org/10.1007/BF01209015} {\bibfield  {journal} {\bibinfo  {journal} {Commun. Math. Phys.}\ }\textbf {\bibinfo {volume} {110}},\ \bibinfo {pages} {33} (\bibinfo {year} {1987})}\BibitemShut {NoStop}%
  \bibitem [{\citenamefont {Nenciu}(1993)}]{Nenciu1993}%
    \BibitemOpen
    \bibfield  {author} {\bibinfo {author} {\bibfnamefont {G.}~\bibnamefont {Nenciu}},\ }\bibfield  {title} {\bibinfo {title} {Linear adiabatic theory. exponential estimates},\ }\href {https://doi.org/10.1007/BF02096616} {\bibfield  {journal} {\bibinfo  {journal} {Commun. Math. Phys.}\ }\textbf {\bibinfo {volume} {152}},\ \bibinfo {pages} {479} (\bibinfo {year} {1993})}\BibitemShut {NoStop}%
  \bibitem [{\citenamefont {Jansen}\ \emph {et~al.}(2007)\citenamefont {Jansen}, \citenamefont {Ruskai},\ and\ \citenamefont {Seiler}}]{JansenRuskaiSeiler2007}%
    \BibitemOpen
    \bibfield  {author} {\bibinfo {author} {\bibfnamefont {S.}~\bibnamefont {Jansen}}, \bibinfo {author} {\bibfnamefont {M.-B.}\ \bibnamefont {Ruskai}},\ and\ \bibinfo {author} {\bibfnamefont {R.}~\bibnamefont {Seiler}},\ }\bibfield  {title} {\bibinfo {title} {Bounds for the adiabatic approximation with applications to quantum computation},\ }\href {https://doi.org/10.1063/1.2798382} {\bibfield  {journal} {\bibinfo  {journal} {J. Math. Phys.}\ }\textbf {\bibinfo {volume} {48}},\ \bibinfo {pages} {102111} (\bibinfo {year} {2007})}\BibitemShut {NoStop}%
  \bibitem [{\citenamefont {Lidar}\ \emph {et~al.}(2009)\citenamefont {Lidar}, \citenamefont {Rezakhani},\ and\ \citenamefont {Hamma}}]{LidarRezakhaniHamma2009}%
    \BibitemOpen
    \bibfield  {author} {\bibinfo {author} {\bibfnamefont {D.~A.}\ \bibnamefont {Lidar}}, \bibinfo {author} {\bibfnamefont {A.~T.}\ \bibnamefont {Rezakhani}},\ and\ \bibinfo {author} {\bibfnamefont {A.}~\bibnamefont {Hamma}},\ }\bibfield  {title} {\bibinfo {title} {Adiabatic approximation with exponential accuracy for many-body systems and quantum computation},\ }\href {https://doi.org/10.1063/1.3236685} {\bibfield  {journal} {\bibinfo  {journal} {J. Math. Phys.}\ }\textbf {\bibinfo {volume} {50}},\ \bibinfo {pages} {102106} (\bibinfo {year} {2009})}\BibitemShut {NoStop}%
  \bibitem [{\citenamefont {Campos~Venuti}\ and\ \citenamefont {Lidar}(2018)}]{CamposVenutiLidar2018}%
    \BibitemOpen
    \bibfield  {author} {\bibinfo {author} {\bibfnamefont {L.}~\bibnamefont {Campos~Venuti}}\ and\ \bibinfo {author} {\bibfnamefont {D.~A.}\ \bibnamefont {Lidar}},\ }\bibfield  {title} {\bibinfo {title} {Error reduction in quantum annealing using boundary cancellation: {O}nly the end matters},\ }\href {https://doi.org/10.1103/PhysRevA.98.022315} {\bibfield  {journal} {\bibinfo  {journal} {Phys. Rev. A}\ }\textbf {\bibinfo {volume} {98}},\ \bibinfo {pages} {022315} (\bibinfo {year} {2018})}\BibitemShut {NoStop}%
  \bibitem [{\citenamefont {Cohen}\ and\ \citenamefont {Oh}(2025)}]{CohenOh2025}%
    \BibitemOpen
    \bibfield  {author} {\bibinfo {author} {\bibfnamefont {T.~D.}\ \bibnamefont {Cohen}}\ and\ \bibinfo {author} {\bibfnamefont {H.}~\bibnamefont {Oh}},\ }\bibfield  {title} {\bibinfo {title} {Asymptotic errors in adiabatic evolution},\ }\href {https://doi.org/10.1103/PhysRevA.111.042612} {\bibfield  {journal} {\bibinfo  {journal} {Phys. Rev. A}\ }\textbf {\bibinfo {volume} {111}},\ \bibinfo {pages} {042612} (\bibinfo {year} {2025})}\BibitemShut {NoStop}%
  \bibitem [{\citenamefont {H{\"o}rmander}(2003)}]{HormanderAnalysisI}%
    \BibitemOpen
    \bibfield  {author} {\bibinfo {author} {\bibfnamefont {L.}~\bibnamefont {H{\"o}rmander}},\ }\href {https://doi.org/10.1007/978-3-642-61497-2} {\emph {\bibinfo {title} {The Analysis of Linear Partial Differential Operators I: Distribution Theory and Fourier Analysis}}},\ \bibinfo {edition} {2nd}\ ed.,\ Classics in Mathematics\ (\bibinfo  {publisher} {Springer},\ \bibinfo {year} {2003})\BibitemShut {NoStop}%
  \bibitem [{\citenamefont {Aharonov}\ and\ \citenamefont {Anandan}(1987)}]{AharonovAnandan1987}%
    \BibitemOpen
    \bibfield  {author} {\bibinfo {author} {\bibfnamefont {Y.}~\bibnamefont {Aharonov}}\ and\ \bibinfo {author} {\bibfnamefont {J.}~\bibnamefont {Anandan}},\ }\bibfield  {title} {\bibinfo {title} {Phase change during a cyclic quantum evolution},\ }\href {https://doi.org/10.1103/PhysRevLett.58.1593} {\bibfield  {journal} {\bibinfo  {journal} {Phys. Rev. Lett.}\ }\textbf {\bibinfo {volume} {58}},\ \bibinfo {pages} {1593} (\bibinfo {year} {1987})}\BibitemShut {NoStop}%
  \bibitem [{\citenamefont {Samuel}\ and\ \citenamefont {Bhandari}(1988)}]{SamuelBhandari1988}%
    \BibitemOpen
    \bibfield  {author} {\bibinfo {author} {\bibfnamefont {J.}~\bibnamefont {Samuel}}\ and\ \bibinfo {author} {\bibfnamefont {R.}~\bibnamefont {Bhandari}},\ }\bibfield  {title} {\bibinfo {title} {General setting for {Berry's} phase},\ }\href {https://doi.org/10.1103/PhysRevLett.60.2339} {\bibfield  {journal} {\bibinfo  {journal} {Phys. Rev. Lett.}\ }\textbf {\bibinfo {volume} {60}},\ \bibinfo {pages} {2339} (\bibinfo {year} {1988})}\BibitemShut {NoStop}%
  \bibitem [{\citenamefont {Sj{\"o}qvist}\ \emph {et~al.}(2000)\citenamefont {Sj{\"o}qvist}, \citenamefont {Pati}, \citenamefont {Ekert}, \citenamefont {Anandan}, \citenamefont {Ericsson}, \citenamefont {Oi},\ and\ \citenamefont {Vedral}}]{Sjoqvist2000}%
    \BibitemOpen
    \bibfield  {author} {\bibinfo {author} {\bibfnamefont {E.}~\bibnamefont {Sj{\"o}qvist}}, \bibinfo {author} {\bibfnamefont {A.~K.}\ \bibnamefont {Pati}}, \bibinfo {author} {\bibfnamefont {A.}~\bibnamefont {Ekert}}, \bibinfo {author} {\bibfnamefont {J.~S.}\ \bibnamefont {Anandan}}, \bibinfo {author} {\bibfnamefont {M.}~\bibnamefont {Ericsson}}, \bibinfo {author} {\bibfnamefont {D.~K.~L.}\ \bibnamefont {Oi}},\ and\ \bibinfo {author} {\bibfnamefont {V.}~\bibnamefont {Vedral}},\ }\bibfield  {title} {\bibinfo {title} {Geometric phases for mixed states in interferometry},\ }\href {https://doi.org/10.1103/PhysRevLett.85.2845} {\bibfield  {journal} {\bibinfo  {journal} {Phys. Rev. Lett.}\ }\textbf {\bibinfo {volume} {85}},\ \bibinfo {pages} {2845} (\bibinfo {year} {2000})}\BibitemShut {NoStop}%
  \bibitem [{\citenamefont {Tong}\ \emph {et~al.}(2004)\citenamefont {Tong}, \citenamefont {Sj{\"o}qvist}, \citenamefont {Kwek},\ and\ \citenamefont {Oh}}]{Tong2004}%
    \BibitemOpen
    \bibfield  {author} {\bibinfo {author} {\bibfnamefont {D.~M.}\ \bibnamefont {Tong}}, \bibinfo {author} {\bibfnamefont {E.}~\bibnamefont {Sj{\"o}qvist}}, \bibinfo {author} {\bibfnamefont {L.~C.}\ \bibnamefont {Kwek}},\ and\ \bibinfo {author} {\bibfnamefont {C.~H.}\ \bibnamefont {Oh}},\ }\bibfield  {title} {\bibinfo {title} {Kinematic approach to the mixed state geometric phase in nonunitary evolution},\ }\href {https://doi.org/10.1103/PhysRevLett.93.080405} {\bibfield  {journal} {\bibinfo  {journal} {Phys. Rev. Lett.}\ }\textbf {\bibinfo {volume} {93}},\ \bibinfo {pages} {080405} (\bibinfo {year} {2004})}\BibitemShut {NoStop}%
  \bibitem [{\citenamefont {Carollo}\ \emph {et~al.}(2003)\citenamefont {Carollo}, \citenamefont {Fuentes-Guridi}, \citenamefont {Fran{\c{c}}a~Santos},\ and\ \citenamefont {Vedral}}]{Carollo2003}%
    \BibitemOpen
    \bibfield  {author} {\bibinfo {author} {\bibfnamefont {A.}~\bibnamefont {Carollo}}, \bibinfo {author} {\bibfnamefont {I.}~\bibnamefont {Fuentes-Guridi}}, \bibinfo {author} {\bibfnamefont {M.}~\bibnamefont {Fran{\c{c}}a~Santos}},\ and\ \bibinfo {author} {\bibfnamefont {V.}~\bibnamefont {Vedral}},\ }\bibfield  {title} {\bibinfo {title} {Geometric phase in open systems},\ }\href {https://doi.org/10.1103/PhysRevLett.90.160402} {\bibfield  {journal} {\bibinfo  {journal} {Phys. Rev. Lett.}\ }\textbf {\bibinfo {volume} {90}},\ \bibinfo {pages} {160402} (\bibinfo {year} {2003})}\BibitemShut {NoStop}%
  \bibitem [{\citenamefont {Sarandy}\ and\ \citenamefont {Lidar}(2006)}]{SarandyLidar2006}%
    \BibitemOpen
    \bibfield  {author} {\bibinfo {author} {\bibfnamefont {M.~S.}\ \bibnamefont {Sarandy}}\ and\ \bibinfo {author} {\bibfnamefont {D.~A.}\ \bibnamefont {Lidar}},\ }\bibfield  {title} {\bibinfo {title} {Abelian and non-{Abelian} geometric phases in adiabatic open quantum systems},\ }\href {https://doi.org/10.1103/PhysRevA.73.062101} {\bibfield  {journal} {\bibinfo  {journal} {Phys. Rev. A}\ }\textbf {\bibinfo {volume} {73}},\ \bibinfo {pages} {062101} (\bibinfo {year} {2006})}\BibitemShut {NoStop}%
  \bibitem [{\citenamefont {Whitney}\ \emph {et~al.}(2005)\citenamefont {Whitney}, \citenamefont {Makhlin}, \citenamefont {Shnirman},\ and\ \citenamefont {Gefen}}]{Whitney2005}%
    \BibitemOpen
    \bibfield  {author} {\bibinfo {author} {\bibfnamefont {R.~S.}\ \bibnamefont {Whitney}}, \bibinfo {author} {\bibfnamefont {Y.}~\bibnamefont {Makhlin}}, \bibinfo {author} {\bibfnamefont {A.}~\bibnamefont {Shnirman}},\ and\ \bibinfo {author} {\bibfnamefont {Y.}~\bibnamefont {Gefen}},\ }\bibfield  {title} {\bibinfo {title} {Geometric nature of the environment-induced {Berry} phase and geometric dephasing},\ }\href {https://doi.org/10.1103/PhysRevLett.94.070407} {\bibfield  {journal} {\bibinfo  {journal} {Phys. Rev. Lett.}\ }\textbf {\bibinfo {volume} {94}},\ \bibinfo {pages} {070407} (\bibinfo {year} {2005})}\BibitemShut {NoStop}%
  \bibitem [{\citenamefont {Uhlmann}(1986)}]{Uhlmann1986}%
    \BibitemOpen
    \bibfield  {author} {\bibinfo {author} {\bibfnamefont {A.}~\bibnamefont {Uhlmann}},\ }\bibfield  {title} {\bibinfo {title} {Parallel transport and quantum holonomy along density operators},\ }\href {https://doi.org/10.1016/0034-4877(86)90055-8} {\bibfield  {journal} {\bibinfo  {journal} {Rep. Math. Phys.}\ }\textbf {\bibinfo {volume} {24}},\ \bibinfo {pages} {229} (\bibinfo {year} {1986})}\BibitemShut {NoStop}%
  \bibitem [{\citenamefont {Wilczek}\ and\ \citenamefont {Zee}(1984)}]{WilczekZee1984}%
    \BibitemOpen
    \bibfield  {author} {\bibinfo {author} {\bibfnamefont {F.}~\bibnamefont {Wilczek}}\ and\ \bibinfo {author} {\bibfnamefont {A.}~\bibnamefont {Zee}},\ }\bibfield  {title} {\bibinfo {title} {Appearance of gauge structure in simple dynamical systems},\ }\href {https://doi.org/10.1103/PhysRevLett.52.2111} {\bibfield  {journal} {\bibinfo  {journal} {Phys. Rev. Lett.}\ }\textbf {\bibinfo {volume} {52}},\ \bibinfo {pages} {2111} (\bibinfo {year} {1984})}\BibitemShut {NoStop}%
  \bibitem [{\citenamefont {Duan}\ \emph {et~al.}(2001)\citenamefont {Duan}, \citenamefont {Cirac},\ and\ \citenamefont {Zoller}}]{DuanCiracZoller2001}%
    \BibitemOpen
    \bibfield  {author} {\bibinfo {author} {\bibfnamefont {L.-M.}\ \bibnamefont {Duan}}, \bibinfo {author} {\bibfnamefont {J.~I.}\ \bibnamefont {Cirac}},\ and\ \bibinfo {author} {\bibfnamefont {P.}~\bibnamefont {Zoller}},\ }\bibfield  {title} {\bibinfo {title} {Geometric manipulation of trapped ions for quantum computation},\ }\href {https://doi.org/10.1126/science.1058835} {\bibfield  {journal} {\bibinfo  {journal} {Science}\ }\textbf {\bibinfo {volume} {292}},\ \bibinfo {pages} {1695} (\bibinfo {year} {2001})}\BibitemShut {NoStop}%
  \bibitem [{\citenamefont {Thouless}(1983)}]{Thouless1983}%
    \BibitemOpen
    \bibfield  {author} {\bibinfo {author} {\bibfnamefont {D.~J.}\ \bibnamefont {Thouless}},\ }\bibfield  {title} {\bibinfo {title} {Quantization of particle transport},\ }\href {https://doi.org/10.1103/PhysRevB.27.6083} {\bibfield  {journal} {\bibinfo  {journal} {Phys. Rev. B}\ }\textbf {\bibinfo {volume} {27}},\ \bibinfo {pages} {6083} (\bibinfo {year} {1983})}\BibitemShut {NoStop}%
  \bibitem [{\citenamefont {Provost}\ and\ \citenamefont {Vall{\'e}e}(1980)}]{ProvostVallee1980}%
    \BibitemOpen
    \bibfield  {author} {\bibinfo {author} {\bibfnamefont {J.-P.}\ \bibnamefont {Provost}}\ and\ \bibinfo {author} {\bibfnamefont {G.}~\bibnamefont {Vall{\'e}e}},\ }\bibfield  {title} {\bibinfo {title} {Riemannian structure on manifolds of quantum states},\ }\href {https://doi.org/10.1007/BF02193559} {\bibfield  {journal} {\bibinfo  {journal} {Commun. Math. Phys.}\ }\textbf {\bibinfo {volume} {76}},\ \bibinfo {pages} {289} (\bibinfo {year} {1980})}\BibitemShut {NoStop}%
  \bibitem [{\citenamefont {Kolodrubetz}\ \emph {et~al.}(2017)\citenamefont {Kolodrubetz}, \citenamefont {Sels}, \citenamefont {Mehta},\ and\ \citenamefont {Polkovnikov}}]{Kolodrubetz2017}%
    \BibitemOpen
    \bibfield  {author} {\bibinfo {author} {\bibfnamefont {M.}~\bibnamefont {Kolodrubetz}}, \bibinfo {author} {\bibfnamefont {D.}~\bibnamefont {Sels}}, \bibinfo {author} {\bibfnamefont {P.}~\bibnamefont {Mehta}},\ and\ \bibinfo {author} {\bibfnamefont {A.}~\bibnamefont {Polkovnikov}},\ }\bibfield  {title} {\bibinfo {title} {Geometry and non-adiabatic response in quantum and classical systems},\ }\href {https://doi.org/10.1016/j.physrep.2017.07.001} {\bibfield  {journal} {\bibinfo  {journal} {Phys. Rep.}\ }\textbf {\bibinfo {volume} {697}},\ \bibinfo {pages} {1} (\bibinfo {year} {2017})}\BibitemShut {NoStop}%
  \bibitem [{\citenamefont {Ozawa}\ and\ \citenamefont {Goldman}(2018)}]{OzawaGoldman2018}%
    \BibitemOpen
    \bibfield  {author} {\bibinfo {author} {\bibfnamefont {T.}~\bibnamefont {Ozawa}}\ and\ \bibinfo {author} {\bibfnamefont {N.}~\bibnamefont {Goldman}},\ }\bibfield  {title} {\bibinfo {title} {Extracting the quantum metric tensor through periodic driving},\ }\href {https://doi.org/10.1103/PhysRevB.97.201117} {\bibfield  {journal} {\bibinfo  {journal} {Phys. Rev. B}\ }\textbf {\bibinfo {volume} {97}},\ \bibinfo {pages} {201117} (\bibinfo {year} {2018})}\BibitemShut {NoStop}%
  \bibitem [{\citenamefont {Childs}\ \emph {et~al.}(2021)\citenamefont {Childs}, \citenamefont {Su}, \citenamefont {Tran}, \citenamefont {Wiebe},\ and\ \citenamefont {Zhu}}]{ChildsSuTranWiebeZhu2021}%
    \BibitemOpen
    \bibfield  {author} {\bibinfo {author} {\bibfnamefont {A.~M.}\ \bibnamefont {Childs}}, \bibinfo {author} {\bibfnamefont {Y.}~\bibnamefont {Su}}, \bibinfo {author} {\bibfnamefont {M.~C.}\ \bibnamefont {Tran}}, \bibinfo {author} {\bibfnamefont {N.}~\bibnamefont {Wiebe}},\ and\ \bibinfo {author} {\bibfnamefont {S.}~\bibnamefont {Zhu}},\ }\bibfield  {title} {\bibinfo {title} {Theory of {Trotter} error with commutator scaling},\ }\href {https://doi.org/10.1103/PhysRevX.11.011020} {\bibfield  {journal} {\bibinfo  {journal} {Phys. Rev. X}\ }\textbf {\bibinfo {volume} {11}},\ \bibinfo {pages} {011020} (\bibinfo {year} {2021})}\BibitemShut {NoStop}%
  \bibitem [{\citenamefont {De~Chiara}\ and\ \citenamefont {Palma}(2003)}]{DeChiaraPalma2003}%
    \BibitemOpen
    \bibfield  {author} {\bibinfo {author} {\bibfnamefont {G.}~\bibnamefont {De~Chiara}}\ and\ \bibinfo {author} {\bibfnamefont {G.~M.}\ \bibnamefont {Palma}},\ }\bibfield  {title} {\bibinfo {title} {Berry phase for a spin $1/2$ particle in a classical fluctuating field},\ }\href {https://doi.org/10.1103/PhysRevLett.91.090404} {\bibfield  {journal} {\bibinfo  {journal} {Phys. Rev. Lett.}\ }\textbf {\bibinfo {volume} {91}},\ \bibinfo {pages} {090404} (\bibinfo {year} {2003})}\BibitemShut {NoStop}%
  \bibitem [{\citenamefont {Filipp}\ \emph {et~al.}(2009)\citenamefont {Filipp}, \citenamefont {Klepp}, \citenamefont {Hasegawa}, \citenamefont {Plonka-Spehr}, \citenamefont {Schmidt}, \citenamefont {Geltenbort},\ and\ \citenamefont {Rauch}}]{Filipp2009}%
    \BibitemOpen
    \bibfield  {author} {\bibinfo {author} {\bibfnamefont {S.}~\bibnamefont {Filipp}}, \bibinfo {author} {\bibfnamefont {J.}~\bibnamefont {Klepp}}, \bibinfo {author} {\bibfnamefont {Y.}~\bibnamefont {Hasegawa}}, \bibinfo {author} {\bibfnamefont {C.}~\bibnamefont {Plonka-Spehr}}, \bibinfo {author} {\bibfnamefont {U.}~\bibnamefont {Schmidt}}, \bibinfo {author} {\bibfnamefont {P.}~\bibnamefont {Geltenbort}},\ and\ \bibinfo {author} {\bibfnamefont {H.}~\bibnamefont {Rauch}},\ }\bibfield  {title} {\bibinfo {title} {Experimental demonstration of the stability of {Berry}'s phase for a spin-$1/2$ particle},\ }\href {https://doi.org/10.1103/PhysRevLett.102.030404} {\bibfield  {journal} {\bibinfo  {journal} {Phys. Rev. Lett.}\ }\textbf {\bibinfo {volume} {102}},\ \bibinfo {pages} {030404} (\bibinfo {year} {2009})}\BibitemShut {NoStop}%
  \bibitem [{\citenamefont {Zhu}\ and\ \citenamefont {Zanardi}(2005)}]{ZhuZanardi2005}%
    \BibitemOpen
    \bibfield  {author} {\bibinfo {author} {\bibfnamefont {S.-L.}\ \bibnamefont {Zhu}}\ and\ \bibinfo {author} {\bibfnamefont {P.}~\bibnamefont {Zanardi}},\ }\bibfield  {title} {\bibinfo {title} {Geometric quantum gates that are robust against stochastic control errors},\ }\href {https://doi.org/10.1103/PhysRevA.72.020301} {\bibfield  {journal} {\bibinfo  {journal} {Phys. Rev. A}\ }\textbf {\bibinfo {volume} {72}},\ \bibinfo {pages} {020301(R)} (\bibinfo {year} {2005})}\BibitemShut {NoStop}%
  \bibitem [{\citenamefont {Granet}\ and\ \citenamefont {Dreyer}(2024)}]{Granet2024}%
    \BibitemOpen
    \bibfield  {author} {\bibinfo {author} {\bibfnamefont {E.}~\bibnamefont {Granet}}\ and\ \bibinfo {author} {\bibfnamefont {H.}~\bibnamefont {Dreyer}},\ }\bibfield  {title} {\bibinfo {title} {Hamiltonian dynamics on digital quantum computers without discretization error},\ }\href {https://doi.org/10.1038/s41534-024-00877-y} {\bibfield  {journal} {\bibinfo  {journal} {npj Quantum Inf.}\ }\textbf {\bibinfo {volume} {10}},\ \bibinfo {pages} {82} (\bibinfo {year} {2024})}\BibitemShut {NoStop}%
  \bibitem [{\citenamefont {Kiumi}\ and\ \citenamefont {Koczor}(2025)}]{KiumiKoczor2025}%
    \BibitemOpen
    \bibfield  {author} {\bibinfo {author} {\bibfnamefont {C.}~\bibnamefont {Kiumi}}\ and\ \bibinfo {author} {\bibfnamefont {B.}~\bibnamefont {Koczor}},\ }\bibfield  {title} {\bibinfo {title} {{TE-PAI}: Exact time evolution by sampling random circuits},\ }\href {https://doi.org/10.1088/2058-9565/ae1160} {\bibfield  {journal} {\bibinfo  {journal} {Quantum Sci. Technol.}\ }\textbf {\bibinfo {volume} {10}},\ \bibinfo {pages} {045071} (\bibinfo {year} {2025})}\BibitemShut {NoStop}%
  \bibitem [{\citenamefont {Pages}\ \emph {et~al.}(2026)\citenamefont {Pages}, \citenamefont {Kiumi}, \citenamefont {Morohoshi}, \citenamefont {Koczor},\ and\ \citenamefont {Mitarai}}]{PagesKiumiMorohoshi2026}%
    \BibitemOpen
    \bibfield  {author} {\bibinfo {author} {\bibfnamefont {H.}~\bibnamefont {Pages}}, \bibinfo {author} {\bibfnamefont {C.}~\bibnamefont {Kiumi}}, \bibinfo {author} {\bibfnamefont {Y.}~\bibnamefont {Morohoshi}}, \bibinfo {author} {\bibfnamefont {B.}~\bibnamefont {Koczor}},\ and\ \bibinfo {author} {\bibfnamefont {K.}~\bibnamefont {Mitarai}},\ }\href {https://doi.org/10.48550/arXiv.2601.13881} {\bibinfo {title} {Low-resource quantum energy gap estimation via randomization}} (\bibinfo {year} {2026}),\ \Eprint {https://arxiv.org/abs/2601.13881} {arXiv:2601.13881 [quant-ph]} \BibitemShut {NoStop}%
  \bibitem [{\citenamefont {Hayata}\ and\ \citenamefont {Kikuchi}(2026)}]{HayataKikuchi2026}%
    \BibitemOpen
    \bibfield  {author} {\bibinfo {author} {\bibfnamefont {T.}~\bibnamefont {Hayata}}\ and\ \bibinfo {author} {\bibfnamefont {Y.}~\bibnamefont {Kikuchi}},\ }\href {https://doi.org/10.48550/ARXIV.2604.02854} {\bibinfo {title} {Continuous-time evolution via probabilistic angle interpolation and its applications}} (\bibinfo {year} {2026}),\ \Eprint {https://arxiv.org/abs/2604.02854} {arXiv:2604.02854 [quant-ph]} \BibitemShut {NoStop}%
  \bibitem [{\citenamefont {Carrera~Vazquez}\ \emph {et~al.}(2023)\citenamefont {Carrera~Vazquez}, \citenamefont {Egger}, \citenamefont {Ochsner},\ and\ \citenamefont {Woerner}}]{CarreraVazquezEggerOchsnerWoerner2023}%
    \BibitemOpen
    \bibfield  {author} {\bibinfo {author} {\bibfnamefont {A.}~\bibnamefont {Carrera~Vazquez}}, \bibinfo {author} {\bibfnamefont {D.~J.}\ \bibnamefont {Egger}}, \bibinfo {author} {\bibfnamefont {D.}~\bibnamefont {Ochsner}},\ and\ \bibinfo {author} {\bibfnamefont {S.}~\bibnamefont {Woerner}},\ }\bibfield  {title} {\bibinfo {title} {Well-conditioned multi-product formulas for hardware-friendly {Hamiltonian} simulation},\ }\href {https://doi.org/10.22331/q-2023-07-25-1067} {\bibfield  {journal} {\bibinfo  {journal} {Quantum}\ }\textbf {\bibinfo {volume} {7}},\ \bibinfo {pages} {1067} (\bibinfo {year} {2023})}\BibitemShut {NoStop}%
  \bibitem [{\citenamefont {Watson}\ and\ \citenamefont {Watkins}(2025)}]{Watson2025}%
    \BibitemOpen
    \bibfield  {author} {\bibinfo {author} {\bibfnamefont {J.~D.}\ \bibnamefont {Watson}}\ and\ \bibinfo {author} {\bibfnamefont {J.}~\bibnamefont {Watkins}},\ }\bibfield  {title} {\bibinfo {title} {Exponentially reduced circuit depths using {Trotter} error mitigation},\ }\href {https://doi.org/10.1103/kw39-yxq5} {\bibfield  {journal} {\bibinfo  {journal} {PRX Quantum}\ }\textbf {\bibinfo {volume} {6}},\ \bibinfo {pages} {030325} (\bibinfo {year} {2025})}\BibitemShut {NoStop}%
  \end{thebibliography}
\end{document}